\numberwithin{equation}{section}
\newtheorem{theorem}{Theorem}
\newtheorem{lemma}{Lemma}[section]
\newtheorem{proposition}[theorem]{Proposition}
\newtheorem{rem}{Remark}[section]
\newtheorem{claim}{Claim}[section]
\newtheorem{hyp}{Assumption}
\renewenvironment{proof}[1][Proof]{\begin{trivlist}
\item[\hskip \labelsep {\bfseries #1}]}{\qed\end{trivlist}}
\DeclareMathOperator{\p}{\mathbb{P}}
\newcommand{\ind}{\mathbf{1}}
\renewcommand{\ge}{\geq}
\renewcommand{\le}{\leq}
\newcommand{\R}{\mathbb{R}}
\newcommand{\N}{\mathbb{N}}
\renewcommand{\tilde}{\widetilde}
\renewcommand{\hat}{\widehat}
\DeclareMathSymbol{\leqslant}{\mathalpha}{AMSa}{"36} % nicer `smaller or equal'
\DeclareMathSymbol{\geqslant}{\mathalpha}{AMSa}{"3E} % nicer `larger or equal'
\DeclareMathSymbol{\eset}{\mathalpha}{AMSb}{"3F}     % nicer `emptyset'
\renewcommand{\leq}{\;\leqslant\;}                   % redef. of < or =
\renewcommand{\geq}{\;\geqslant\;}                   % redef. of < or \renewcommand{\geq}{\;\geqslant\;}                   % redef. of > or
\newcommand{\dd}{\,\text{\rm d}}             % a straight d for differentials
\newcommand{\sumtwo}[2]{\sum_{\substack{#1 \\ #2}}} % sum with 2 lines
\newcommand{\floor}[1]{\lfloor #1 \rfloor}
\newcommand{\cA}{{\ensuremath{\mathcal A}} }
\newcommand{\cI}{{\ensuremath{\mathcal I}} }
\newcommand{\bP}{{\ensuremath{\mathbf P}} }
\newcommand{\bE}{{\ensuremath{\mathbf E}} }
\newcommand{\bbE}{{\ensuremath{\mathbb E}} }
\newcommand{\bbN}{{\ensuremath{\mathbb N}} }
\newcommand{\bbP}{{\ensuremath{\mathbb P}} }
\newcommand{\bbR}{{\ensuremath{\mathbb R}} }
\newcommand{\ga}{\alpha}
\newcommand{\gb}{\beta}
\newcommand{\gga}{\gamma}            % \gg already exists...
\newcommand{\gd}{\delta}
\newcommand{\gep}{\varepsilon}       % \geq already exists...
\newcommand{\gk}{\kappa}
\newcommand{\go}{\omega}
\newcommand{\gl}{\lambda}
\newcommand{\gU}{\Upsilon}
\def\captionfont@{\footnotesize}
\def\captionheadfont@{\scshape}
\long\def\@makecaption#1#2{%
  \vspace{2mm}
  \setbox\@tempboxa\vbox{\color@setgroup
    \advance\hsize-6pc\noindent
    \captionfont@\captionheadfont@#1\@xp\@ifnotempty\@xp
        {\@cdr#2\@nil}{.\captionfont@\upshape\enspace#2}%
    \unskip\kern-6pc\par
    \global\setbox\@ne\lastbox\color@endgroup}%
  \ifhbox\@ne % the normal case
    \setbox\@ne\hbox{\unhbox\@ne\unskip\unskip\unpenalty\unkern}%
  \fi
  \ifdim\wd\@tempboxa=\z@ % this means caption will fit on one line
    \setbox\@ne\hbox to\columnwidth{\hss\kern-6pc\box\@ne\hss}%
  \else % tempboxa contained more than one line
    \setbox\@ne\vbox{\unvbox\@tempboxa\parskip\z@skip
        \noindent\unhbox\@ne\advance\hsize-6pc\par}%
\fi
  \ifnum\@tempcnta<64 % if the float IS a figure...
    \addvspace\abovecaptionskip
    \moveright 3pc\box\@ne
  \else % if the float IS nOT a figure...
    \moveright 3pc\box\@ne
    \nobreak
    \vskip\belowcaptionskip
  \fi
\relax
}
\def\writefig#1 #2 #3 {\rlap{\kern #1 truecm
\raise #2 truecm \hbox{#3}}}
\newcommand{\q}{\mathrm{que}}
\newcommand{\tf}{\mathtt{F}}
\renewcommand{\P}{{\ensuremath{\mathbf P}} }
\newcommand{\E}{{\ensuremath{\mathbf E}} }
\newcommand{\hca}{h_{c}^{\a}}
\renewcommand{\a}{ \mathrm{a}}
\newcommand{\K}{\mathrm{K}}
\newcommand{\Znc}{Z_{n,h_c^\a}^{\a}}
\newcommand{\Enc}{\ensuremath{\mathbf E}_{n,h_c}^{\a}}
\newcommand{\Zna}{Z_{n,h}^{\a}}
\newcommand{\hP}{\hat\P}
\newcommand{\W}{\mathsf{W}}
\author[Q. Berger]{Quentin  Berger
\\
\textit{\tiny University of Southern California}}
\title[Pinning model in long-range correlated Gaussian environment]{Comments on the influence of disorder
for pinning model in correlated Gaussian environment}
\begin{document}

\begin{abstract}
  We study the random pinning model, in the case of
  a Gaussian environment
  presenting power-law decaying correlations, of exponent decay $a>0$.
  A similar study was done in a hierachical version of the
  model \cite{BThier},  and we extend here the results to
  the non-hierarchical (and more natural) case.
  We comment on  the annealed
  (\textit{i.e.}\ averaged over disorder) model,
  which is far from being trivial, and we discuss the influence of disorder
  on the critical properties of the system.
  We show that the annealed critical
  exponent $\nu^{\rm a}$ is the same as
  the homogeneous one $\nu^{\rm pur}$,
  provided that correlations are decaying
  fast enough ($a>2$).
  If correlations are summable ($a>1$),
  we also show that the disordered phase
  transition is at least of order $2$,
  showing disorder relevance if $\nu^{\rm pur}<2$.
  If correlations are not summable ($a<1$),
  we show that the phase transition disappears.
    \\
  \\
  2010 \textit{Mathematics Subject Classification: 82B44, 82D60, 60K37 }
  \\
  \\
  \textit{Keywords: Pinning Models, Polymer, Disordered systems, Critical Phenomena, Harris criterion, Correlation}
\end{abstract}

\maketitle

\section{Introduction}
\label{sec:correlintro}

The question of the influence of inhomogeneities on the critical properties
of a physical system has been studied
in the physics literature for a great variety of models. 
In the case where the disorder is IID, the question of
relevance/ir\-re\-le\-vance of disorder is predicted
by the so-called {\sl Harris criterion} \cite{Harris}: disorder is irrelevant if  $\nu^{\rm pur}>2$,
where $\nu^{\rm pur}$ is the correlation length critical exponent of the homogeneous model.
Following the reasoning of
Weinrib and Halperin \cite{WeinHalp83}
one realizes that,
introducing correlations with power-law decay
$r^{- a }$ (where $ a >0$, and $r$ the distance between the points),
disorder should be relevant if $\nu^{\rm pur}<2/ \min(a,1) $,
and irrelevant if $\nu^{\rm pur}>2/ \min(a,1) $.
Therefore, the Harris prediction for disorder relevance/ir\-re\-le\-vance
should be modified only if $ a <1$.

In the mathematical literature,
the question of disorder (ir)relevance has been
very active during the past few years,
in the framework of polymer pinning models \cite{denHoll,GBbook,SFLN}.
The Harris criterion
has in particular been proved thanks to a series of articles.
We investigate here
the polymer pinning model in random correlated environment
of Gaussian type, with correlation decay exponent $a>0$.
Several results where obtained in \cite{BThier}, for the hierachical pinning model, and we prove here a variety of corresponding results on the disordered and annealed non-hierarchical system.
In particular, we confirm part of the Weinrib-Halperin prediction for $ a >1$. We also show that the case $ a <1$ is somehow special, and that the behavior of the system does not fit the prediction in that case.

\subsection{The disordered pinning model}

Consider $\tau:=(\tau_n)_{n\geq0}$ a recurrent renewal process, with
law denoted by $\bP$: $\tau_0=0$, and the $(\tau_i-\tau_{i-1})_{i\geq1}$
are IID, $\bbN$-valued.
The set $\tau=\{\tau_0,\tau_1,\ldots\}$ (making a slight abuse of notation)
can be thought as the set of contact points between a polymer and a defect line.
We assume that
the inter-arrival distribution
 $\K(\cdot)$ verifies
\begin{equation}
\K(n):=\bP(\tau_1=n)= \frac{\varphi(n)}{n^{1+\ga}},
\label{assumpK}
\end{equation}
for some $\ga\geq 0$, and slowly varying function $\varphi(\cdot)$ (see \cite{Bingham}).
%with $\ga>0$, $\ga\neq1$ (the assumption 
%$\alpha\ne1$ does not
%hide anything  deep but it avoids various technical nuisances).
The fact that the renewal is recurrent simply means that
$\K(\infty)=\bP(\tau_1=+\infty)=0$.
We also assume for simplicity that $\K(n)>0$ for all $n\in \N$.

Given a sequence $\go=(\go_n)_{n\in \N}$ of real numbers (the environment),
and parameters $h\in \bbR$
and $\gb\ge 0$,
we define the {\sl polymer} measure $\bP_{N,h}^{\go,\gb}$, $N\in \N$, as follows
\begin{equation}
 \frac{\dd \bP_{N,h}^{\go,\gb}}{\dd \bP} (\tau) := 
\frac{1}{Z_{N,h}^{\go,\gb}}   \exp\left( \sum_{n=1}^N  (h+\gb\go_n)\gd_n \right) \gd_N,
\end{equation}
where we noted $\gd_n:=\ind_{\{n\in\tau\}}$, and where
$ Z_{N,h}^{\go,\gb}:= \bE\left[ \exp\left( \sum_{n=1}^N  (h+\gb\go_n) \gd_n\right) \gd_N \right]$
is the \emph{partition function} of the system.

\smallskip
In what follows, we take $\go$ a random ergodic sequence, with law denoted by $\bbP$.
%(more assumptions on the law $\bbP$ are made in the next section). 
We also assume that $\go_0$ is integrable.
\begin{proposition} [see \cite{GBbook}, Theorem 4.6]
\label{prop:Fque}
The limit
\begin{equation}
 \tf(\gb,h):= \lim_{N\to\infty} \frac{1}{N} \log Z_{N,h}^{\go,\gb}
= \sup_{N\in\N} \frac{1}{N} \bbE \log Z_{N,h}^{\go,\gb},
\end{equation}
exists and is constant $\bbP$ a.s.\ It is called the \textsl{quenched} free energy.
There exists a quenched critical point
$h_c^{\q}(\gb)\in\bbR$, such that $\tf(\gb,h)>0$ if and only if $h>h_c^{\q}(\gb)$. 
\end{proposition}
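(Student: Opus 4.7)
The plan is to get the limit and its $\bbP$-a.s.\ constancy from Kingman's superadditive ergodic theorem applied to the shift on $\go$, and then to deduce the existence of $h_c^{\q}(\gb)$ from monotonicity and convexity of $\tf(\gb,\cdot)$.

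First, I would establish superadditivity of $N\mapsto \log Z_{N,h}^{\go,\gb}$ along the shift. The indicator $\gd_N$ forces the renewal to hit $N$, so restricting the sum defining $Z_{N+M,h}^{\go,\gb}$ to trajectories that also pass through $N$ and splitting at $N$ gives, by the renewal property,
\begin{equation*}
Z_{N+M,h}^{\go,\gb} \;\geq\; Z_{N,h}^{\go,\gb}\cdot Z_{M,h}^{\theta^N\go,\gb},
\end{equation*}
where $\theta$ denotes the shift on environments. Taking logarithms, the sequence $X_{N,M}:=\log Z_{M-N,h}^{\theta^N\go,\gb}$ is superadditive and stationary under $\theta$. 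Integrability follows from the two-sided bound $\log \K(N) + h+\gb\go_N \leq \log Z_{N,h}^{\go,\gb} \leq |h|N+\gb\sum_{n=1}^N|\go_n|$, using $\go_0\in L^1(\bbP)$ and $\log\K(N)=o(N)$ from \eqref{assumpK}. Kingman's theorem, applied to the ergodic transformation $\theta$, then yields the $\bbP$-a.s.\ limit
\begin{equation*}
\tf(\gb,h)=\lim_{N\to\infty}\frac{1}{N}\log Z_{N,h}^{\go,\gb}=\sup_{N\in\bbN}\frac{1}{N}\bbE\log Z_{N,h}^{\go,\gb},
\end{equation*}
which is a (deterministic) constant by ergodicity.

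Next, for the critical point I would collect three properties of $h\mapsto \tf(\gb,h)$. Monotonicity is clear since $\gd_n\geq 0$, so $h\mapsto \log Z_{N,h}^{\go,\gb}$ is non-decreasing. Convexity holds for each $N$ and each $\go$ (partition functions of this form are log-convex in $h$), hence passes to the limit and in particular gives continuity. Non-negativity follows from the single-jump lower bound $Z_{N,h}^{\go,\gb}\geq \K(N)\,\exp(h+\gb\go_N)$: dividing by $N$, taking the $\limsup$ as $N\to\infty$ and using $\log\K(N)=o(N)$ together with $N^{-1}\go_N\to 0$ a.s.\ (a consequence of ergodicity and integrability, via Birkhoff) gives $\tf(\gb,h)\geq 0$. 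Conversely, restricting the renewal to the trajectory $\tau=\{0,1,\dots,N\}$ yields $Z_{N,h}^{\go,\gb}\geq \K(1)^N\exp(hN+\gb\sum_{n=1}^N\go_n)$, so by Birkhoff $\tf(\gb,h)\geq h+\gb\bbE[\go_0]+\log\K(1)\to +\infty$ as $h\to\infty$. It is therefore legitimate to set
\begin{equation*}
h_c^{\q}(\gb):=\inf\{h\in\bbR:\tf(\gb,h)>0\}\in\bbR,
\end{equation*}
and the monotonicity together with $\tf(\gb,\cdot)\geq 0$ gives the desired equivalence $\tf(\gb,h)>0 \Leftrightarrow h>h_c^{\q}(\gb)$.

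The only non-routine step is the superadditive decomposition producing the shift $\theta^N$: one must be careful that the pinning at $N$ built into the definition of $Z_{N,\cdot}^{\cdot,\cdot}$ via the factor $\gd_N$ is precisely what allows the clean factorization above (without it one would get only approximate superadditivity). Once this is in place, Kingman's theorem and the elementary convexity/monotonicity arguments finish the proof.
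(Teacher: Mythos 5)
Your route is the standard one behind the cited result \cite[Thm.~4.6]{GBbook}: the paper itself only cites and does not reproduce a proof, and that proof is exactly the superadditive factorization $Z_{N+M,h}^{\go,\gb}\geq Z_{N,h}^{\go,\gb}\cdot Z_{M,h}^{\theta^N\go,\gb}$ (driven by the pinning $\gd_N$ and the renewal property), followed by Kingman's superadditive ergodic theorem with ergodicity of the shift giving a deterministic limit, then monotonicity, convexity, and the lower bounds $\tf\geq 0$ and $\tf\to+\infty$. Your integrability bounds, the identification of the limit with $\sup_N \frac1N\bbE\log Z_{N,h}^{\go,\gb}$, and the derivation of $\go_N/N\to 0$ from Birkhoff are all correct. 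On that part there is nothing to add.

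There is, however, one genuine gap: you set $h_c^{\q}(\gb):=\inf\{h:\tf(\gb,h)>0\}$ and assert it belongs to $\bbR$, but your argument only shows $h_c^{\q}(\gb)<+\infty$ (via the path $\tau=\{0,1,\dots,N\}$). You never show that $\tf(\gb,h)=0$ for some $h$, i.e.\ that $h_c^{\q}(\gb)>-\infty$, and in fact this \emph{cannot} be deduced from the stated hypotheses alone: ergodicity of $\go$ plus $\go_0\in L^1(\bbP)$ are compatible with $\tf(\gb,h)>0$ for every $h$. The paper's own Theorem~\ref{thm:xi<1} proves exactly that when correlations are non-summable, so the statement $h_c^{\q}(\gb)\in\bbR$ is imprecise in general. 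To obtain finiteness one needs an extra ingredient, typically the annealed bound $\tf(\gb,h)\leq\tf^{\a}(\gb,h)$ together with $h_c^{\a}(\gb)>-\infty$ (valid in the IID case, or here when $a>1$, cf.~Proposition~\ref{prop:anncorrel} and \eqref{asymppointann}). Absent such an argument, the honest conclusion of your proof is $h_c^{\q}(\gb)\in[-\infty,+\infty)$, with the stated equivalence $\tf(\gb,h)>0\Leftrightarrow h>h_c^{\q}(\gb)$.
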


We stress that the free energy carries some physical information
on the thermodynamic limit of the system.
Indeed, one has that at every point where $\tf$ has a derivative, one has
\begin{equation}
\label{contacts}
\lim_{N\to\infty} \frac1N \bE_{N,h}^{\go,\gb}\left[ \sum_{n=1}^N \gd_n  \right]
 = \frac{\partial}{\partial h} \tf(\gb,h).
\end{equation}
Therefore, thanks to the convexity of $h\mapsto\tf(\gb,h)$,
one concludes that if $\tf(\gb,h)>0$
there is a positive density of contacts under the polymer measure,
in the limit $N$ goes to infinity.
Then the critical point $h_c^{\q}(\gb)$
marks the transition between the delocalized phase (for $h<h_c^{\q}$, $\tf(\gb,h)=0$)
and the localized phase (for $h>h_c^{\q}$, $\tf(\gb,h)>0$).

One also defines the annealed partition function,
$ Z_{N,h,\gb}^{\a}:=\bbE[Z_{N,h}^{\go,\gb}]$, used to be confronted to
the disordered system.
Then the {\sl annealed} free energy is defined as
  $\tf^{\a}(\gb,h):=\lim_{N\to\infty} \frac{1}{N}\log  \bbE Z_{N,h}^{\go,\gb}$,
and one has an {\sl annealed} critical point $h_c^{\a}(\gb)$ that separates
phases where $\tf^{\a}(\gb,h)=0$ and where $\tf^{\a}(\gb,h)>0$.
A simple use of Jensen's inequality yields that
$\tf(\gb,h)\leq\tf^{\a}(\gb,h)$, so that $h_c^{\a}(\gb)\geq h_c^{\q}(\gb)$.

\subsubsection{The homogeneous model}

The homogeneous pinning model is the pinning model
with no disorder, \textit{i.e.}\ with $\gb=0$. The partition function is
$Z_{N,h}:=\bE\left[ e^{h\sum_{n=1}^N\gd_n}\gd_N\right]$.
This model is actually fully solvable.

\begin{proposition}[\cite{GBbook}, Theorem 2.1]
 The \textsl{pure free energy},
$ \tf(h):= \tf(0,h)$, exhibits a phase transition at the
critical point $h_c=0$ (recall we have a recurrent renewal $\tau$).
One has the following asymptotic of $\tf(h)$ around $h=0_+$: for every $\alpha\geq 0$ and $\varphi(\cdot)$, there exists some slowly varying function $\hat\varphi(\cdot)$ such that
\begin{equation}
 \tf(h) \stackrel{h\searrow0}{\sim} \hat\varphi(h) h^{1\vee 1/\ga},
\end{equation}
where $f\sim g$ means that the ratio $f/g$ converges to $1$, and $a\vee b$
stands for the maximum between $a$ and $b$.
\label{comphomo}
\end{proposition}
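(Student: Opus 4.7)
The argument rests on a standard renewal representation of $Z_{N,h}$ combined with Tauberian/Abelian theorems for regularly varying Laplace transforms. First, I would decompose according to the number of contacts:
\begin{equation*}
Z_{N,h} \;=\; \sum_{k=1}^{N} e^{hk} \sum_{0=n_0<n_1<\cdots<n_k=N} \prod_{i=1}^{k} \K(n_i-n_{i-1}).
\end{equation*}
For $F\geq 0$, introduce the tilted kernel $\K_{h,F}(n):=e^{h-Fn}\K(n)$. Then $e^{FN}Z_{N,h}$ is the mass at $N$ of the (possibly defective) renewal driven by $\K_{h,F}$, and classical renewal theory identifies $\tf(h)$ as the unique $F\geq 0$ for which this kernel is a probability kernel, that is, the unique nonnegative solution of
\begin{equation*}
\Phi(F) \;:=\; \sum_{n\geq 1} e^{-Fn}\K(n) \;=\; e^{-h}.
\end{equation*}
The function $\Phi$ is continuous and strictly decreasing on $[0,\infty)$, with $\Phi(0)=1$ (recurrence) and $\Phi(+\infty)=0$. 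Hence a unique $\tf(h)\geq 0$ exists for $h\geq 0$, no positive solution exists for $h<0$, and one reads off $h_c=0$ and $\tf(h)>0\Leftrightarrow h>0$.

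\textbf{Asymptotics.} To obtain the behavior of $\tf(h)$ as $h\searrow 0$, I would study
\begin{equation*}
1-\Phi(F) \;=\; \sum_{n\geq 1}(1-e^{-Fn})\K(n) \qquad \text{as } F\searrow 0,
\end{equation*}
and then invert the relation $1-\Phi(\tf(h))=1-e^{-h}\sim h$. When $\bE[\tau_1]<\infty$ (which holds for $\ga>1$, and for $\ga=1$ under $\sum_n n\K(n)<\infty$), dominated convergence gives $1-\Phi(F)\sim \bE[\tau_1]\,F$, hence $\tf(h)\sim h/\bE[\tau_1]$, and $\hat\varphi$ is asymptotically constant. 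When $\bE[\tau_1]=+\infty$, the Abelian theorem for Laplace transforms of regularly varying tails, see \cite{Bingham}, yields
\begin{equation*}
1-\Phi(F) \;\stackrel{F\searrow 0}{\sim}\; \Gamma(1-\ga)\,\varphi(1/F)\,F^{\ga},
\end{equation*}
with the usual logarithmic corrections when $\ga\in\{0,1\}$. Combining both regimes leads to the exponent $1\vee 1/\ga$ in the statement.

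\textbf{Main obstacle.} The only technical point is the inversion of a regularly varying equivalence: from $1-\Phi(F)\sim c\,\varphi(1/F)\,F^{\ga}\sim h$ one must conclude $\tf(h)\sim \hat\varphi(h)\,h^{1/\ga}$ for a specific slowly varying $\hat\varphi$. This is routine but not automatic: it uses the representation theorem for slowly varying functions together with Potter's bounds to propagate $\varphi$ through the inversion, and is precisely the mechanism producing the function $\hat\varphi$ appearing in the statement.
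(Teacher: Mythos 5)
Your proposal is correct and follows essentially the same route as the paper's cited source ([GBbook], Theorem 2.1; see also Proposition 1.1 there, which the paper restates in Section 3.2.1 as $\hat\K(b)=e^{-h}$): the exponential tilting of the renewal kernel, identification of $\tf(h)$ as the solution of $\Phi(F)=e^{-h}$, and Abelian/Tauberian asymptotics for $1-\Phi$ followed by inversion via Potter bounds and the de Bruijn conjugate. The paper does not reproduce this proof but refers to it, and your sketch matches that argument.
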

The pure critical exponent is therefore $\nu^{\rm pur}:=1\vee1/\ga$,
and it encodes the critical behavior of the homogeneous model.
%We left aside the case $\ga=1$ which brings some technicalities, and in the sequel we do not treat this case, since the computations require more care (even if they use the same techniques).

\subsection{The case of an IID environment}

First, note that in the IID case, the annealed partition function is
$\bbE\left[ e^{\sum (\gb\go_n+h) \gd_n} \right] = \bbE\left[ e^{\sum (\gl(\gb)+h) \gd_n} \right]$
with $\gl(\gb):= \log \bbE\left[  e^{\gb\go_1} \right]$:
the annealed system
is the homogeneous pinning model with parameter $h+\gl(\gb)$,
and is therefore understood. In particular, the annealed
critical point is $h_c^{\a}(\gb)=-\gl(\gb)$.

For the pinning model in IID environment,
the Harris criterion for disorder relevance/ir\-re\-le\-vance
is mathematically settled, both in terms of
critical points and in terms of critical exponents.
A recent series of papers indeed proved that

\textbullet\ if $\ga<1/2$, then disorder is irrelevant: if $\gb>0$ is small enough, one has that
$h_c^{\q}(\gb)=h_c^{\a}(\gb)$,
and the quenched critical behavior is the same as the homogeneous one;

\textbullet\ if $\ga\geq 1/2$, then disorder is relevant: for any $\gb>0$ one has
$h_c^{\q}(\gb)>h_c^{\a}(\gb)$,
and the order of the disordered phase transition is at least $2$ (thus strictly larger than
$\nu^{\rm pur}$ if $\ga>1/2$).

We refer to \cite{A06,AZ08,CdH10,GLT09,GT05,GT_ap,L10,T_aap,T08}
for specific details, and \cite{SFLN} for a review of the techniques used.

\subsection{The long-range correlated Gaussian environment}

Up to recently, the pinning model defined above was studied only in an IID  environment,
or in the case of a Gaussian environment with
finite-range correlations \cite{Poisat1, Poisat2}. In this latter case, it is
shown that the features of the system are the same as with an IID  environment,
in particular concerning the disorder relevance picture.
%The question of the effect of correlation is now being investigated.
In \cite{Bstrong,BLpin},
the authors study the drastic effects of the presence of large and frequent
attractive regions
on the phase transition:
important disorder fluctuations
lead to a regime where disorder always modifies the critical properties, whatever $\nu^{\rm pur}$ is.
In \cite{BThier,Poisat2},
the authors focus on long-range correlated Gaussian environment, as we now do.

Let $\go=(\go_n)_{n\in\N}$ be a Gaussian stationary process (with law $\bbP$), with zero mean
and unitary variance,
and with correlation function $(\rho_n)_{n\geq0}$.
We denote the covariance matrix $\gU=(\gU_{ij})_{i,j\in\bbN}$
(with the notation $\gU_{ij}:=\bbE[\go_i\go_j]=\rho_{|j-i|}$),
which is symmetric definite positive (so that $\go$ is well-defined).
%Note that we have $\gU_{ij}=\rho_{|i-j|}$:
%$\gU$ is an infinite Toeplitz matrix.
We also assume that $\lim_{n\to\infty}|\rho_n|=0$,
so that the sequence $\go$ is ergodic (see \cite[Ch.14 \S2, Th.2]{Cornfeld}).

\smallskip
The Weinrib-Halperin prediction
suggests to consider a power-law decaying correlation function,
$\rho_n \sim n^{- a }$, and we therefore make the following assumption.
%In what follows, we make two different assumptions on the disorder, in order to distinguish the cases $ a >1$ and $ a <1$ in a more general way.

\begin{hyp}
\label{hyphyp}
There exist some $ a >0$ and a constant $c_0>0$ such that
\begin{equation}
 \rho_{k}\stackrel{k\to\infty}{\sim} c_0  k^{- a }.
\end{equation}
We refer to \emph{summable correlations} when $a>1$, and to \emph{non-summable correlations} when $a\leq 1$.

We stress here that $\rho_k=(1+k)^{-a}$ for all $k\geq 0$ is a valid choice for a correlation function, since it is convex, cf. \cite{Polya}.
\end{hyp}

Note that most of our results are actually valid under more general assumptions, but we focus on this Assumption, which is very natural and make our statements clearer.

%\begin{hyp}[Summable correlations]
%\label{hyp:sumcorrel}
%Correlations are said to be \textsl{Summable} if $\sum_{k\geq0} |\rho_k|<+\infty$, which corresponds to a power-law decay $ a >1$ of the correlations.
%This means that $\gU$ is a bounded operator, and we make the additional assumption that $\gU^{-1}$ is also a bounded operator, so that the spectrum of $\gU$ is contained in an interval $[\mathsf{a},\mathsf{A}]$, with $0<\mathsf{a}<\mathsf{A}<\infty$.
%\end{hyp}

%\begin{hyp}[Non-Summable correlations]
%\label{hyp:correl1}
%Correlations are said to be \textsl{Non-Summa\-ble} if $\sum_{k\geq 0} |\rho_k|=+\infty$.
%We make the additional assumption that  $\rho_{k}\geq0$ for all $k\geq0$, and that there exist some $ a \in(0,1)$ and a constant $c_0>0$ such that
%\begin{equation}
%\rho_{k}\stackrel{k\to\infty}{\sim} c_0  k^{- a }.
%\end{equation}
%\end{hyp}

\subsection{Comparison with the hierarchical framework}
\label{sec:comparhier}

In \cite{BThier}, the authors focus
on the hierarchical version of the pinning model, and we believe that
all the results they obtain should have an analogue
in the non-hierarchical framework.
In \cite{BThier}, the correlations respect the hierarchical structure:
${\rm Cov}(\go_i,\go_j)= \gk^{d(i,j)}$, where $d(i,j)$ is the
hierarchical distance between $i$ and $j$.
It corresponds to a power law decay $|i-j|^{- a }$ in the non-hierarchical model,
with $a  := \log( 1/\gk)/\log 2$ (we keep this notation for this section).
We therefore compare our model with the hierarchical one,
and give more 
predictions on the behavior of the system, and on the
influence of correlations on the disorder relevance picture: see Figure \ref{fig:zones},
in comparison with \cite[Fig. 1]{BThier}.

In the hierarchical framework,
%the Weinrib-Halperin prediction for disorder relevance/ir\-re\-le\-vance
%has been verified only
%in the case of summable correlations ($a>1$):
%there is no modification of the Harris criterion.
different behaviors have been identified:

\textbullet\ {If $a>1$, $a \nu^{\rm pur} > 2$.} Then one controls the annealed model
close to the annealed critical point (see \cite[Prop 3.2]{BThier}): in particular the annealed critical behavior is the same as the homogeneous one, $\nu^{\a}=\nu^{\rm pur}$.
In this region, the Harris criterion is not modified:
\begin{itemize}
 \item[-] If $\nu^{\rm pur}>2$, then disorder is \textsl{irrelevant}: there exists some $\gb_0>0$ such
that $h_c(\gb)=h_c^{\a}(\gb)$ for any $0<\gb\leq \gb_0$.
Moreover, for every $\eta>0$ and choosing $u>0$ sufficiently small,
$\tf(\beta,\hca(\beta)+u)\ge (1-\eta)\tf^{\rm a}(\beta,\hca(\beta)+u)$, so that $\nu^{\q}=\nu^{\a}=\nu^{\rm pur}$.

 \item[-] If $\nu^{\rm pur}\leq2$, then disorder is \textsl{relevant}:
the quenched and annealed critical points differ for every $\gb>0$.
%\begin{equation}
%   h_c(\gb)- h_c^{\a}(\gb) \geq 
%   \begin{cases}
%       c \gb^{2\vee \frac{2\ga}{2\ga-1}} & \text{ if } \nu^{\rm pur}<2 \\
%       e^{-c_{\gep} \gb^{-(2+\gep)}} &   \text{ if } \nu^{\rm pur}=2 .
%   \end{cases}
%\end{equation}
Moreover, the disordered phase transition is at least of order $2$,
so that disorder is relevant in terms of critical exponents if $\nu^{\rm pur}<2$.
\end{itemize}

\textbullet\ {If $a>1$, $a \nu^{\rm pur} < 2$.} Then it is shown that the annealed critical properties are
different than that of the homogeneous model (see \cite[Theorem 3.6]{BThier}).
However, the disordered phase transition is still of order at least $2$,
showing disorder relevance (since $\nu^{\rm pur}<2/a<2$).
\smallskip

\textbullet\ {If $a<1$.} The phase transition does not survive:
the free energy is positive for all values of $h\in\bbR$ as soon $\gb>0$, so that $h_c(\gb)=-\infty$.
It is therefore more problematic to deal with the question of the influence of disorder on the critical properties of the system.

\smallskip

Let us remark that hierarchical model have always been a fruitful tool in the study of disordered systems. In particular, Dyson \cite{Dyson}, from his study of the \emph{hierarchical} ferromagnetic Ising model, combined with the Griffith correlation inequalities, deduced a criteria for the existence of a phase transition for the (non-hierachical) one dimensional ferromagnetic Ising model with couplings $J_{i-j}\sim|i-j|^{-a}$.
We stress that there are no such correlation inequalities for the pinning model, and results cannot be derived directly from the hierarchical model, even though we expect the behavior of the two models to be similar. Therefore, to prove results in the non-hierachical case, we need to adapt the techniques of \cite{BThier}. Many difficulties arise in this process, in particular because the hierarchical correlation structure is much simpler to study than in the non-hierarchical case.

\medskip

Let us highlight how the remaining of the paper is organized.
In Section \ref{sec:respincorrel} we present our main results
on the model and comment them, as well for the annealed system (Theorem \ref{thm:expo_ann})
as for the disordered one (Theorems \ref{thm:smooth}-\ref{thm:xi<1}).
In Section \ref{sec:anncorrel} we collect some crucial observations on the annealed model
in the correlated case, and prove Theorem \ref{thm:expo_ann}.
In Section \ref{sec:proofcorrel} we prove the results on the disordered system.
Gaussian estimates are given in Appendix.

\section{Main results}
\label{sec:respincorrel}

\subsection{The annealed model}

We first focus on the study of the {\sl annealed} model,
which is often the first step towards the understanding of the disordered model.
The annealed partition function is given, thanks to a Gaussian computation, by
\begin{equation}
 \begin{split}
  Z_{N,h}^{\a,\gU}&:= \bbE[ Z_{N,h}^{\go,\gb}] =  \E\left[ e^{H_{N,h}^{\a,\gU}}\, \gd_N\right],\\ 
 \text{with }\quad H_{N,h}^{\a,\gU}&:=(\gb^2/2+h)\sum_{n=1}^N \gd_n +
            \gb^2 \sum_{n=1}^N  \gd_n \sum_{k=1}^{N-n}\rho_k  \gd_{n+k}  .
 \end{split}
\label{defZann}
\end{equation}
We keep the superscript $\gU$ in $Z_{N,h}^{\a,\gU}$,
to recall the correlation structure,
%It avoids many confusions,
%since in the sequel, we compare two annealed systems with different correlation
%structures.
but we drop it if there is no ambiguity.

One remarks that \eqref{defZann} is far from being the partition function
of the standard homogeneous pinning model.
It explains the difficulty of studying the pinning model in correlated random environment:
even annealing techniques,
that give simple and non-trivial bounds in the case of an IID  environment
(where the annealed model is the standard homogeneous one),
are not easy to apply.

The annealed model is actually interesting in itself, since it gives an example of a non-disordered pinning
model in which the rewards correlate according to the position of the renewal points.
One can also consider the annealed model
as a ``standard'' homogeneous pinning model (in the sense that 
a reward $h$ is given to each contact point), but with an underlying correlated
renewal process, that is with non-IID  inter-arrivals.
This model, and in particular its phase transition, is in particular the focus of \cite{Poisat12}.

\begin{proposition}
\label{prop:anncorrel}
If $a>1$, then the limit
\begin{equation}
\label{deffann}
 \tf^{a,\gU}(\gb,h):= \lim_{N\to\infty} \frac{1}{N}\log Z_{N,h}^{\a,\gU}
\end{equation}
 exists, is non-negative and finite. There exists a critical point
$h_c^{\a,\gU}(\gb)\in\bbR$,
such that $\tf^{\a,\gU}(\gb,h)>0$ if and only if $h>h_c^{\a,\gU}(\gb)$. 
\end{proposition}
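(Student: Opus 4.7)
The plan is to establish existence of the limit via Fekete's lemma, by proving super-additivity and a finite upper bound for $\frac{1}{N}\log Z_N^{\a,\gU}$, and then deduce the structure of the critical point from monotonicity and convexity together with easy one-sided bounds.

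For super-additivity, I would restrict the expectation defining $Z_{N+M,h}^{\a,\gU}$ to configurations with $\gd_N=1$, and decompose the double sum inside $H_{N+M,h}^{\a,\gU}$ according to whether both indices lie in $\{1,\dots,N\}$, both lie in $\{N+1,\dots,N+M\}$, or straddle the level $N$. This gives $H_{N+M}^{\a,\gU}(\tau)=H_N^{\a,\gU}(\tau^{(1)})+H_M^{\a,\gU}(\tau^{(2)})+\mathrm{cross}(\tau^{(1)},\tau^{(2)})$, where $\tau^{(1)}$ and $\tau^{(2)}$ are the restrictions of $\tau$ to $\{0,\dots,N\}$ and the shifted $\{0,\dots,M\}$ respectively, and $\mathrm{cross}\geq 0$ because $\rho_k\geq 0$ (Assumption \ref{hyphyp} covers the convex case $\rho_k=(1+k)^{-a}$). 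Dropping the cross term and using the renewal splitting at $N$ yields
\[
Z_{N+M,h}^{\a,\gU}\;\geq\; Z_{N,h}^{\a,\gU}\cdot Z_{M,h}^{\a,\gU}.
\]

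For finiteness, this is where $a>1$ enters: the bound $\sum_{k=1}^{N-n}\rho_k\gd_{n+k}\leq\|\rho\|_1<\infty$ gives
\[
H_{N,h}^{\a,\gU}\;\leq\;\bigl(h+\gb^2/2+\gb^2\|\rho\|_1\bigr)\sum_{n=1}^N\gd_n,
\]
so $Z_{N,h}^{\a,\gU}$ is bounded by a homogeneous pinning partition function at effective parameter $h+\gb^2/2+\gb^2\|\rho\|_1$, which grows at most exponentially. Combined with super-additivity, Fekete's lemma yields the existence of $\tf^{\a,\gU}(\gb,h)=\sup_N\frac{1}{N}\log Z_{N,h}^{\a,\gU}\in\bbR$. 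Non-negativity follows from $Z_{N,h}^{\a,\gU}\geq\K(N)e^{h+\gb^2/2}$ (restricting to $\tau=\{0,N\}$) and $\frac{1}{N}\log\K(N)\to 0$.

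For the critical point, $h\mapsto\log Z_{N,h}^{\a,\gU}$ is convex and non-decreasing, so $\tf^{\a,\gU}(\gb,\cdot)$ is convex, non-decreasing, and continuous. The upper bound above shows $\tf^{\a,\gU}(\gb,h)=0$ as soon as $h+\gb^2/2+\gb^2\|\rho\|_1<0=h_c$ (the pure critical point), while restricting the renewal to $\tau=\{0,1,\dots,N\}$ and using $\sum_{k=1}^{N-1}(N-k)\rho_k\sim\|\rho\|_1 N$ gives $\tf^{\a,\gU}(\gb,h)\geq h+\gb^2/2+\gb^2\|\rho\|_1+\log\K(1)$, which is positive for $h$ large enough. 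The critical point is then defined as $h_c^{\a,\gU}(\gb):=\inf\{h:\tf^{\a,\gU}(\gb,h)>0\}\in\bbR$.

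The main obstacle is the super-additivity step, which crucially uses that $\rho_k\geq 0$ so that the cross term can be discarded without penalty; dropping this assumption would require inserting a macroscopic gap between the two blocks and controlling the leftover boundary interactions via the summability of $\rho$, which is a considerably more delicate argument. A secondary subtle point is that the shifted Hamiltonian $H_M^{\a,\gU}(\tau^{(2)})$ equals $H_M^{\a,\gU}$ in law only because the Gaussian covariance is translation-invariant, which is where stationarity of $\go$ enters.
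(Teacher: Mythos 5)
The paper does not prove this proposition: it explicitly defers to Hammersley's generalized super-additive theorem and to the reference \cite{Poisat12}, and emphasizes that only \emph{absolute summability} $\sum_{n}|\rho_n|<\infty$ is needed, without any sign restriction on $\rho$. Your Fekete-based argument therefore proves a weaker statement than Proposition~\ref{prop:anncorrel}. The issue is concentrated exactly where you flag it, but your assessment of how serious it is and what the fix looks like is off.

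Assumption~\ref{hyphyp} fixes only the asymptotics $\rho_k\sim c_0k^{-a}$; it does not force $\rho_k\geq 0$ for all $k$, and the parenthetical ``covers the convex case $\rho_k=(1+k)^{-a}$'' is an example, not a justification for the general case. Without $\rho_k\geq 0$ the cross term can be negative and cannot be dropped, so $Z_{N+M,h}^{\a,\gU}\geq Z_{N,h}^{\a,\gU}Z_{M,h}^{\a,\gU}$ fails and Fekete's lemma is not available. More importantly, the workaround you sketch (insert a macroscopic gap and discard the residual boundary interaction) does not restore genuine super-additivity when $1<a\leq 2$: for blocks of size $N$, the interaction term is bounded only by $\sum_{k\geq 1}\min(k,N)\,|\rho_k|$, which is of order $N^{2-a}$ and diverges with $N$. (The paper's bound \eqref{eq:boundcor} giving a uniform constant requires $\sum k|\rho_k|<\infty$, i.e.\ $a>2$ — this is precisely why the quasi-renewal machinery of Section~\ref{sec:prelimann} is restricted to $a>2$.) What one actually obtains for all $a>1$ is the \emph{approximate} super-additivity $\log Z_{N+M,h}^{\a,\gU}\geq \log Z_{N,h}^{\a,\gU}+\log Z_{M,h}^{\a,\gU}-g(\min(N,M))$ with $g(n)=O(n^{2-a})$; Hammersley's generalized super-additive theorem (or de~Bruijn--Erd\H{o}s) then yields the limit provided $\sum_n g(n)/n^2<\infty$, i.e.\ $\sum_n n^{-a}<\infty$, which holds exactly for $a>1$. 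This is the mechanism the paper points to, and it is genuinely different from Fekete plus a gap. The rest of your proposal — the upper bound via $|H_{N,h}^{\a,\gU}|\leq(|h|+\gb^2/2+\gb^2\sum|\rho_k|)\sum\gd_n$, non-negativity via $\tau=\{0,N\}$, convexity/monotonicity giving the critical point — is sound and matches what one would do.
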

This result relies on Hammersley's generalized super-additive Theorem \cite[Theorem 2]{Hammer}, and appears in \cite{Poisat12}. We do not prove it here.
One actually only needs the absolute summability of correlations ($\sum_{n\in\bbN} |\rho_n|<+\infty$) to get this proposition. We are unable to tell if this condition is necessary, or if conditionally summable correlations (that is with $\sum_{n\in\bbN} \rho_n<+\infty$ but $\sum_{n\in\bbN} |\rho_n|=+\infty$)
would be sufficient to provide the existence of the annealed free energy.

%\begin{proof}
%In view of the expression of $H_{N,h}^{\a,\gU}$ in \eqref{defZann}, one has the following
%inequality, for all integers $N,M$
%\begin{equation}
%\label{superaddH}
% H_{N+M,h}^{\a,\gU} \geq H_{N,h}^{\a,\gU} +H_{M,h}^{\a,\gU} +\gb^{2} \sum_{i=1}^N %\sum_{j=N+1}^M  0\wedge \rho_{|i-j|}
%   \geq H_{N,h}^{\a,\gU} +H_{M,h}^{\a,\gU} - \gb^{2} \sum_{k=1}^{N+M} k|\rho_k|,
%\end{equation}
%where we only used that there are at most
%$k$ integers $(i,j)\in [0,N]\times[N+1,N+M]$ that are at distance $k$.
%We define $\mathsf{R}_p:=\sum_{k=1}^{p} k|\rho_k|$, so that 
%one has $\log Z_{N+M,h}^{\a} \geq \log Z_{N,h}^{\a} +\log Z_{M,h}^{\a} -\gb^2 \mathsf{R}_{N+M}$.
%Then, if we show that $\sum_{p\in\N} \frac{1}{p(p+1)} \mathsf{R}_p <+\infty$, we have
%from Hammerley's generalized super-additive Theorem, see \cite[Theorem 2]{Hammer}, that the limit
%in \eqref{deffann} exists.
%Since we have
%\begin{equation}
% \frac{1}{p(p+1)} \mathsf{R}_p = \frac{1}{p} \mathsf{R}_p -\frac{1}{p+1} \mathsf{R}_{p+1} +|%\rho_{p+1}|,
%\end{equation}  
%and that $\mathsf{R}_p/p \leq \sum_{k=1}^{p}|\rho_k|$, we are done
%thanks to the summability of the correlations.
%\end{proof}

As far as the annealed critical point is concerned,
an analytic expression is given for $h_c^{\a,\gU}(\gb)$ in \cite{Poisat12}:
it is the maximal eigenvalue of a Ruelle-Perron-Frobenius operator related
to the model
(see \cite[Cor. 4.1]{Poisat12}). However, it is in general not possible to compute its value.
One however gets large-temperature
asymptotic ($\gb\searrow 0$), \cite[Theorem 2.3]{Poisat12}
\begin{equation}
\label{asymppointann}
h_c^{\a,\gU}(\gb) \stackrel{\gb\searrow 0}{\sim } -\frac{\gb^2}{2} 
\left( 1+ 2 \sum_{n\geq 1} \rho_n \P(n\in\tau)\right).
\end{equation}

The following theorem states that if
%the correlations decay sufficiently fast, more precisely if $\sum k|\rho_k|:=\sum_{k\in\N} k|\rho_k|<\infty$, (that corresponds to a power-law decay $ a >2$ of the correlations),
$a>2$, then the annealed free energy has the same critical exponent
as the pure free energy. This is analogous to \cite[Theorem 3.1]{BThier} in the hierarchical framework.
\begin{theorem}
\label{thm:expo_ann}
Under Assumption \ref{hyphyp}, we suppose that
$a>2$. Then there exist some $\gb_0>0$
and a constant $c_1>0$, such that for any fixed $\gb\leq \gb_0$ one has
\begin{equation}
 \tf\big( c_1^{-1}  u \big)\leq \tf^{\a,\gU}(\gb,h_c^{\a,\gU}(\gb)+u) \leq \tf\big(c_1 u\big),
\end{equation}
for all $u\leq c_1^{-1}$.
\end{theorem}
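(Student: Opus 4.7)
The goal is to sandwich $\tf^{\a,\gU}\bigl(\gb, h_c^{\a,\gU}(\gb)+u\bigr)$ between $\tf(c_1^{-1}u)$ and $\tf(c_1 u)$, which amounts to comparing the annealed model to the pure one at an effectively shifted parameter. The hypothesis $a>2$ enters through the integrability $\sum_k k\rho_k<\infty$, which is what is needed to control the long-range quadratic form $A_N := \sum_{1\leq n<m\leq N}\rho_{m-n}\gd_n\gd_m$ appearing in the annealed Hamiltonian \eqref{defZann}.

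For the lower bound, Jensen's inequality applied to $e^{\gb^2 A_N}$ under the pure pinning measure at parameter $h+\gb^2/2$ yields
$$\log Z_{N,h}^{\a,\gU} \;\geq\; \log Z_{N,h+\gb^2/2} + \gb^2\,\bE^{h+\gb^2/2}_N[A_N].$$
Standard renewal-theoretic estimates show that $\bE^{h'}_N[A_N]/N$ converges as $N\to\infty$ to a quantity proportional to the contact density $\theta(h')=\tf'(h')$, whose limiting behaviour as $h'\searrow 0$ is governed by the same coefficient $\tf rac{1}{2}+\sum_{n\geq 1}\rho_n\bP(n\in\tau)$ that appears in the expansion \eqref{asymppointann} of $h_c^{\a,\gU}(\gb)$. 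This absorbs the leading part of the shift and leaves a clean lower bound in terms of $\tf$ evaluated at a constant multiple of $u$.

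For the upper bound, the naive pointwise inequality $A_N\leq\|\rho\|_1 S_N$ gives only $\tf^{\a,\gU}(\gb,h)\leq\tf(h+\gb^2/2+\gb^2\|\rho\|_1)$, whose effective shift strictly exceeds $-h_c^{\a,\gU}(\gb)$ (by a positive amount involving $\sum_k\rho_k(1-\bP(k\in\tau))$) and is therefore too loose to produce a matching bound at $u=0$. I would instead adapt the truncation scheme of \cite{BThier}: fix a cutoff $\ell=\ell(u)\to\infty$ as $u\searrow 0$, split $A_N$ into a short-range part (pairs with $m-n\leq\ell$) and a long-range tail, estimate the latter via $\sum_{k>\ell}\rho_k=O(\ell^{1-a})$, and treat the short-range part by decomposing $[1,N]$ into blocks of length much larger than $\ell$ within which the annealed partition function is compared directly to its pure analogue using the explicit form \eqref{defZann}. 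Since $a>2$, the tail $\ell^{1-a}$ is itself summable in $\ell$, so that $\ell(u)$ can be chosen small enough to convert the tail into a negligible correction while large enough to make the short-range contribution correspond to the correct critical shift.

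The principal difficulty is the upper bound: unlike in the hierarchical model of \cite{BThier}, inter-block correlations cannot be eliminated pointwise and must instead be controlled on average, using the integrability of $k\rho_k$ (which is exactly what $a>2$ supplies); simultaneously, the resulting effective shift must be reconciled with the Perron--Frobenius value of $h_c^{\a,\gU}(\gb)$ from \cite{Poisat12} up to higher-order terms in $u$, so that the two expansions cancel at $u=0$. An additional subtlety is that uniformity in $\gb\leq\gb_0$ constrains how the block length $\ell(u)$ and the Jensen tilts may depend on the parameters, and the Gaussian estimates relegated to the Appendix presumably provide the second-moment control of $A_N$ required for this matching.
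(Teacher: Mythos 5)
Your proposal follows a genuinely different route from the paper's, but the route has a fatal gap in the lower bound, and the upper bound is only a plan.

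The lower bound you propose reads, after the Jensen step,
\begin{equation*}
\tf^{\a,\gU}(\gb,h_c^{\a,\gU}(\gb)+u)\;\geq\;\tf\bigl(h_c^{\a,\gU}(\gb)+u+\tfrac{\gb^2}{2}\bigr)
\;+\;\gb^2\lim_{N\to\infty}\tfrac1N\bE^{h'}_N[A_N],\qquad h':=h_c^{\a,\gU}(\gb)+u+\tfrac{\gb^2}{2}.
\end{equation*}
By the annealed-critical-point asymptotics \eqref{asymppointann}, $h_c^{\a,\gU}(\gb)+\gb^2/2\sim-\gb^2\sum_{n\geq 1}\rho_n\bP(n\in\tau)<0$. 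Thus for any fixed $\gb>0$ and $u$ small compared with $\gb^2$ (a regime the theorem must cover, since it asserts the bound for all $u\leq c_1^{-1}$ with $\gb$ fixed), the effective parameter $h'$ is strictly negative. The pure model at $h'<0$ is delocalized, so $\tf(h')=0$ and, since $\frac1N\bE^{h'}_N[\sum_n\gd_n]\to 0$, also $\frac1N\bE^{h'}_N[A_N]\to 0$ (each pair in $A_N$ requires at least one contact). The Jensen bound therefore collapses to $\tf^{\a,\gU}(\gb,h_c^{\a,\gU}(\gb)+u)\geq 0$, which is trivial and does not produce the claimed $\tf(c_1^{-1}u)$. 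The ``cancellation with the leading part of the shift'' you invoke never materializes: the correction term $\gb^2\bE^{h'}[A_N]/N$ is proportional to a contact density that vanishes precisely in the regime you need. Your upper bound is a sketch, and you correctly flag the matching-at-$u=0$ problem as unresolved; but notice the lower bound suffers from the mirror image of that same problem, which you did not flag.

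The paper avoids this entirely by never perturbing around the pure model at a shifted parameter. Instead it establishes the \emph{quasi-renewal property} \eqref{quasirenew2}--\eqref{quasirenew} (this is where $a>2$, i.e.\ $\sum k|\rho_k|<\infty$, enters, via \eqref{eq:boundcor}): the annealed partition function pinned at a finite set of sites factorizes over those sites up to multiplicative constants $C_1^{\pm m}$. Combining this with the binomial expansion $e^{u\sum\gd_n}\gd_N=\frac{e^u}{e^u-1}\sum_m(e^u-1)^m\sum_{0<i_1<\dots<i_m=N}\gd_{i_1}\cdots\gd_{i_m}$ sandwiches $Z^{\a}_{N,h_c^{\a}+u}$ between two genuine renewal partition functions $\bar Z_{N,h}^{\a}$ and $\tilde Z_{N,h}^{\a}$ whose free energies $\bar b,\tilde b$ solve explicit equations in terms of $\hat Z_{h_c^{\a}}(\cdot)$. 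Proposition~\ref{prop:laplace}, the technical heart, then shows $\hat Z_{h_c^{\a}}(\gl)\asymp\hP(\gl)$, i.e.\ the annealed model at its own critical point looks, at the level of Laplace transforms, like the underlying recurrent renewal. Feeding this back through Proposition~\ref{prop:defFbis} gives $\tf(c_1^{-1}u)\leq\tf^{\a,\gU}(\gb,h_c^{\a}+u)\leq\tf(c_1 u)$. The crucial structural point your proposal misses is that one must work at the \emph{exact} annealed critical point and compare the whole annealed object there to the pure renewal, rather than try to re-derive the location of $h_c^{\a,\gU}(\gb)$ and cancel it against a Jensen correction.
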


A analogous result has also independently been proved in \cite{Poisat12} (see Theorem 2.1), using a Ruelle-Perron-Frobenius operator approach to the study of the annealed partition function.
Our proof, however, is (almost completely) self-contained,
and uses basic arguments.

The assumption $a>2$ (that could be weakened to only having $\sum k|\rho_k|<+\infty$) enables us to get some quasi-renewal property for the partition function, see \eqref{quasirenew2}-\eqref{quasirenew}.
We prove Theorem \ref{thm:expo_ann} in Section \ref{sec:behavann}, using this quasi-renewal property.
% in particular thanks to Propositions \ref{prop:laplace}-\ref{prop:Pncbis}  that control the annealed partition function at the annealed critical point, even though no expression of the annealed critical point is used.
It is therefore difficult to go beyond the condition $a>2$, since without it, the correlations spread easily from one block to another (see\eqref{quasirenew2}-\eqref{quasirenew} in Section \ref{sec:prelimann},
that do not necessarily hold if $a\leq 2$).

\subsection{Influence of disorder in the case of summable correlations,
smoothing of the phase transition}

\label{sec:pinsomm}

%We now turn to the analysis of the influence of disorder
%on the phase transition of the disordered system.
%According to the Weinrib-Halperin prediction, one should find that
%with summable correlations, \textit{i.e.}\  if $ a >1$, the
%criterion for disorder relevance/irrelevance
%should not be modified with respect to the IID  case.
%We now give some results confirming this criterion, and we prove them in Section \ref{sec:proofcorrel}.
%We stress that, if $ a <1$, our system exhibits a degenerate behavior cf. Theorem \ref{thm:xi<1},
%which contrasts with the Weinrib-Halperin criterion, as discussed below.

We now assume that $a>1$, so that correlations are (absolutely) summable. We also assume that $\gU$ is invertible, condition that we comment later, in Remark \ref{rem:gUinfty}.
We show that in presence of disorder, the 
phase transition is always at least of order $2$,
as in the IID  case (see \cite[Th.5.6]{GBbook}), and in the correlated hierarchical model (see \cite[Proposition 3.5]{BThier})

\begin{theorem}
Under Assumption \ref{hyphyp} with $a>1$, and assuming that $\gU$ is invertible, one has that
for every $\alpha\geq0$, for all $\gb>0$ and $h\in\R$
\begin{equation}
 \tf(\gb,h) \leq \frac{1+\ga}{2 \gU_{\infty}\gb^2} \, \left( h-h_c(\gb) \right)_+^2,
\end{equation}
where we defined $\gU_{\infty}:=\left( 1+2\sum_{k\in\bbN} \rho_k \right)\in(0,+\infty)$.
\label{thm:smooth}
\end{theorem}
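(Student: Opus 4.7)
The strategy is to adapt the Giacomin--Toninelli smoothing argument \cite[Section~5.4]{GBbook} for IID Gaussian pinning to the correlated Gaussian environment at hand. In the IID case, the proof is driven by the Cameron--Martin cost of biasing the environment by $-u/\gb$ on a block of length $\ell$, which is $u^2 \ell/(2\gb^2)$; for a correlated Gaussian with covariance $\gU$, the same shift costs $u^2 \ell/(2\gb^2 \gU_\infty)$ to leading order in $\ell$, and this is the origin of the factor $\gU_\infty$ in the denominator of the claim.

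I would set $u := h - h_c(\gb)$ (assume $u > 0$, else the bound is trivial, since $\tf(\gb,h)=0$) and, for a large block $B := \{1,\dots,\ell\}$, consider the tilted law $\bbP^{\eta}$ obtained by shifting $\go|_B$ by the constant vector with entries $\eta = -u/\gb$. Since $\gU$ is invertible, the relative entropy satisfies
\[
H\bigl(\bbP^{\eta}\,\big\|\,\bbP\bigr) \;=\; \tfrac{u^2}{2\gb^2}\;\mathbf{1}_B^T\,(\gU|_B)^{-1}\,\mathbf{1}_B.
\]
The summability $\sum_k |\rho_k|<\infty$ (ensured by $a>1$) places the Fourier symbol $\hat{\gU}(e^{i\theta}) = 1+2\sum_{k\geq 1}\rho_k \cos(k\theta)$ in the Wiener algebra, and invertibility of $\gU$ forces this symbol to stay bounded away from $0$ on the unit circle. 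A standard Szeg\H{o}-type inversion estimate then yields $\mathbf{1}_B^T (\gU|_B)^{-1}\mathbf{1}_B = \ell/\gU_\infty + o(\ell)$ as $\ell\to\infty$, and hence $H(\bbP^{\eta}\|\bbP) = u^2 \ell/(2 \gU_\infty \gb^2)(1+o(1))$.

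With the correct entropy cost in hand, I would transplant the rest of the IID argument essentially verbatim. Under $\bbP^{\eta}$, the effective pinning parameter inside $B$ is shifted from $h$ down to $h-u = h_c(\gb)$, where the quenched free energy vanishes. Concatenating $k = N/\ell$ such tilted blocks, using Gaussian concentration of $\log Z_N^{\go,\gb}$ (Borell--TIS, with operator norm of $\gU$ bounded by $\gU_\infty$ since all $\rho_k\ge 0$ gives $\|\gU\|=\hat{\gU}(1)$) and the renewal decay $-\log \K(n)\sim (1+\ga)\log n$ at block boundaries, one obtains a bound of the form
\[
\ell\,\tf(\gb,h) + (1+\ga)\log\ell \;\leq\; \tfrac{u^2\ell}{2\gU_\infty \gb^2}(1+o(1)) + O(1).
\]
Optimising $\ell$ as a function of $u$ (the correct scaling is $\ell \sim u^{-2}$) and letting $u\to 0$ yields the claimed inequality $\tf(\gb,h)\le \tfrac{1+\ga}{2\gU_\infty\gb^2}\,(h-h_c(\gb))_+^2$.

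The main obstacle is the sharp Szeg\H{o}-type identity $\mathbf{1}_B^T(\gU|_B)^{-1}\mathbf{1}_B = \ell/\gU_\infty + o(\ell)$, which is what produces the precise constant $\gU_\infty^{-1}$ in the final bound. This is where invertibility of $\gU$ is essential: without it, $\hat{\gU}$ could vanish on $\{|z|=1\}$ and the row sums of $(\gU|_B)^{-1}$ would grow faster than linearly in $\ell$, degrading the constant. The hypothesis $a>1$, combined with invertibility, ensures that $\hat{\gU}^{-1}$ also lies in the Wiener algebra and that the Toeplitz boundary corrections in $(\gU|_B)^{-1}$ contribute only $o(\ell)$, so that the bulk value $1/\gU_\infty$ dominates.
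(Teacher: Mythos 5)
Your high-level plan matches the paper's: work at the critical point, run a rare-stretch localization strategy, and pay a Gaussian-entropy price for tilting $\go$ on a block, quantified by the Szeg\H{o}-type asymptotics $\langle \gU_l^{-1}\ind_l,\ind_l\rangle = (1+o(1))\,l/\gU_\infty$. You have correctly isolated that last ingredient and the role of invertibility; it is precisely Lemma~\ref{lem:entropy2} in the appendix, and it is indeed where the constant $\gU_\infty$ enters.

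There is, however, a genuine gap in how the prefactor $(1+\ga)$ arises, and your schematic inequality is structurally wrong. Dividing
$\ell\,\tf(\gb,h) + (1+\ga)\log\ell \leq u^2\ell/(2\gU_\infty\gb^2)(1+o(1)) + O(1)$
by $\ell$ and sending $\ell\to\infty$ (or taking $\ell\sim u^{-2}$) yields $\tf(\gb,h)\leq u^2/(2\gU_\infty\gb^2)$ with no factor $(1+\ga)$ at all, a strictly stronger statement than the theorem; in fact, with $\ell\sim u^{-2}$ the $-(1+\ga)\log\ell$ correction would even force $\tf<0$ for small $u$, which is absurd. The point you are missing is that the renewal-gap entropy in the rare-stretch decomposition is $(1+\ga)\log(p_l^{-1})$ and not $(1+\ga)\log\ell$, where $p_l=\bbP(\text{good block})$. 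Standing at $h_c$ (so that the quenched free energy on the left vanishes) and aiming only at good blocks, one gets, after the $n\to\infty$ limit and Birkhoff's ergodic theorem,
\begin{equation*}
0 \;=\; \tf(\gb,h_c) \;\geq\; p_l(1-\gep)\,\tf(\gb,h_c+u) \;-\; (1+\gep)(1+\ga)\,\frac{p_l}{l}\,\log\bigl(p_l^{-1}\bigr),
\end{equation*}
and only \emph{then} does one insert the entropy estimate $-\tfrac{1}{l}\log p_l \leq (1+o(1))\,u^2/(2\gU_\infty\gb^2)$. Because $\log(p_l^{-1})$ is itself of order $l\,u^2$ and not $\log\ell$, the factor $(1+\ga)$ multiplies the full Gaussian cost, and this is exactly what produces the constant $\tfrac{1+\ga}{2\gU_\infty\gb^2}$. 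Two smaller points: (i) you invoke $\rho_k\ge 0$ and Borell--TIS to control $\|\gU\|$, but the theorem assumes neither nonnegativity nor any concentration inequality; the proof only needs $\tfrac1l\log Z_l^\go\to\tf$ in $\bar\bbP$-probability together with the ergodic theorem. (ii) The paper tilts $\go$ \emph{up} by $u/\gb$ while sitting at $h_c$ (so a good block mimics $h_c+u$), rather than tilting down from $h>h_c$; both directions are workable, but your write-up blurs which free energy is being bounded by which.
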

This stresses the relevance of disorder in the case $\alpha>1/2$,
where the pure model exhibits a phase transition of order $\nu^{\rm pur}:=1\vee 1/\alpha<2$.
Therefore, with summable correlations, we already have identified
a region of the $(\ga, a )$-plane where disorder is relevant:
it corresponds to the relevant disorder region in the IID  case,
as predicted by the Weinrib-Halperin criterion.

\begin{rem}\rm
\label{rem:gUinfty}
The condition that $\gU$ is invertible is a bit delicate, and enables us to get uniform bounds on the eigenvalues of $\gU_l^{-1}$, where $\gU_l$ denotes the restriction of $\gU$ to the first $l$ rows and columns. Indeed,  when $a>1$, $\gU$ is a bounded and invertible operator on the Banach space of sequences of real numbers $(a_n)_{n\in \bbN}$, with finite $\ell_1$-norm $||(a_n)_{n\in \bbN} ||_{\ell_1} := \sum_{n\in\bbN} |a_n| <+\infty$, so that $\gU^{-1}$ is a bounded operator. It tells us that the lowest eigenvalue of $\gU$ is bounded away from $0$, and that the eigenvalues of $\gU_l$ are uniformly bounded away from $0$.

In particular, one has
\begin{equation}
\label{defgUinfty}
\gU_{\infty}:=\lim_{l\to\infty} \frac{\langle \gU_l\ind_l ,\ind_l\rangle}{\langle \ind_l ,\ind_l\rangle}
=1+2\sum_{k\in\bbN} \rho_k>0,
\end{equation}
where $\langle \cdot,\cdot \rangle$ denotes the usual Euclidean scalar product, and $\mathbf{1}_l$ is the vector
constituted of $l$ $1$s and then of $0$s.
Note that $\gU_{\infty}=1$ in the IID case. Also, $\gU_{\infty}$ is an increasing function of the correlations, and it becomes infinite when correlations are no longer summable.
Interestingly, $\gU_{\infty}$ is also related to the relative entropy of two translated Gaussian vectors: Lemma \ref{lem:entropy1} gives that $\langle \gU_l^{-1}\ind_l ,\ind_l\rangle \stackrel{l\to\infty }{\sim} (\gU_{\infty})^{-1}l$. The assumption that $\gU^{-1}$ is a bounded operator plays an important role in the proof of that Lemma.

A simple case when $\gU$ is invertible is when $1=\rho_0 > 2 \sum_{k\in\bbN} |\rho_k|$: it is then diagonally dominant. More generally, one has to consider the Laurent series of the Toeplitz matrix $\gU$,
$f(\lambda) = 1+2\sum_{k\in\bbN} \rho_n \cos (\lambda n)$ (we used that $\rho_0=1$).
Then, the fundamental eigenvalue distribution theorem of Szeg\"o \cite[Ch. 5]{Szego} tells that the Toeplitz operator $\gU$ is invertible if and only if $\min_{\lambda\in[0,2\pi]} f(\lambda)>0$. Note that one recovers the diagonally dominant condition as a consequence of Szeg\"o's theorem. 
\end{rem}

\subsection{The effect of non-summable correlations}
\label{sec:pinnonsomm}

If $a\leq 1$, then correlations are not summable, $\sum_{k\in\bbN} \rho_k=+\infty$, and the annealed model is actually ill-defined.
Indeed, imposing renewal points at every site in $\{1, \ldots,N\}$
in the annealed partition function,
one ends up with the bound
$Z_{N,h}^{\a,\gU} \geq K(1)^N \exp\big( N(h+\gb^2/2+ \gb^2\sum_{k=1}^N \rho_k) \big),$
so that
$\frac{1}{N} \log Z_{N,h}^{\a,\gU} \geq \log \K(1) +\gb^2/2+h+ \gb^2\sum_{k=1}^N \rho_k$.
Letting $N$ go to infinity, we see that the annealed free energy is infinite.

%Under Assumption \ref{hyp:correl1} (non-summable, non-negative, power-law decaying correlations), the annealed model is therefore not well-defined.
But, when $a<1$, not only the annealed free energy is ill-defined:
we also prove that the quenched free energy is strictly positive for
every value of $h\in \R$: the disordered system does not have a
localization/delocalization phase transition and is always localized, as found in the hierarchical case \cite[Theorem 3.7]{BThier}.
\begin{theorem}
\label{thm:xi<1}
Under Assumption \ref{hyphyp} with $a>1$, if in addition the correlations are non-negative ($\rho_k\geq 0$ for all $k\in\bbN$), one has that $\tf(\gb,h)>0$ for every $\beta>0,h\in\mathbb R$, so
that $h_c^{\q}(\gb)=-\infty$. There exists some constant $c_2>0$ such that
for all $h\leq -1$ and $\gb>0$
\begin{equation}
\label{eq:xi<1}
 \tf(\gb,h)\geq  \exp\left(- c_2|h|\left(|h|/\gb^{2}\right)^{1/(1- a )}\right).
\end{equation}
\end{theorem}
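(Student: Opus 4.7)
The proof proceeds by a Griffiths-type rare-region argument exploiting the anomalously large variance of partial sums when correlations are not summable. Fix a block length $L\in\bbN$ and a threshold $M>0$, to be optimized at the end. Partition $\{1,\ldots,N\}$ into blocks $I_j:=\{(j-1)L+1,\ldots,jL\}$, and call $I_j$ \emph{good} if $S_j:=\sum_{n\in I_j}\go_n \geq ML$. Using $\rho_k\ge 0$ with $\rho_k\sim c_0 k^{-a}$ and $a<1$, the key variance estimate is
\[
\Var(S_j) \;=\; L + 2\sum_{k=1}^{L-1}(L-k)\rho_k \;\sim\; \frac{2c_0}{(1-a)(2-a)}\,L^{2-a},
\]
precisely because correlations are non-summable. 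Since $S_1$ is a centered Gaussian, a standard lower bound on the Gaussian tail (of the kind proved in the Appendix) gives
\[
p\;:=\;\bbP(S_1\ge ML)\;\geq\;\exp\bigl(-C_1 M^2 L^{a}\bigr)
\]
for some $C_1=C_1(a,c_0)>0$, provided $ML^{a/2}$ is large. Since $\rho_n\to 0$, the Gaussian sequence $(\go_n)$ is mixing and thus ergodic under every integer shift, so the proportion of good blocks among $I_1,\ldots,I_{\lfloor N/L\rfloor}$ tends $\bbP$-almost surely to $p$.

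The lower bound on $\tf(\gb,h)$ is obtained by restricting $Z_{N,h}^{\go,\gb}$ to trajectories $\tau$ that (i) contain every integer inside every good block, and (ii) make a single renewal jump across each gap between consecutive good blocks (and finally to the endpoint $N$). Each good block then contributes at least $\exp(L(h+\gb M))\cdot\K(1)^{L-1}$, and a jump of length $D$ between blocks contributes $\K(D)$. Writing $J\sim Np/L$ for the number of good blocks and bounding the total jump length by $N$, Jensen's inequality applied to the (eventually) concave map $x\mapsto\log\K(x)=-(1+\ga)\log x + \log\gp(x)$ yields
\[
\tf(\gb,h) \;\geq\; p\bigl(h+\gb M + \log\K(1)\bigr) \;-\; \frac{(1+\ga+o(1))\,p}{L}\,\log(L/p).
\]
Since $\log(L/p)\leq C_1 M^2 L^a+\log L$, the correction term is of order $M^2 L^{a-1}\cdot p$, which vanishes as $L\to\infty$ because $a<1$.

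For $h\leq-1$, take $M=C_3|h|/\gb$ with $C_3$ large enough that $h+\gb M+\log\K(1)\ge\tfrac12\gb M$, and take $L=C_4(|h|/\gb^2)^{1/(1-a)}$ with $C_4$ large enough that the correction in the display above is smaller than half of the main bracket. These choices make the main bracket of order $|h|$ and yield
\[
M^2 L^a \;\asymp\; \frac{|h|^{(2-a)/(1-a)}}{\gb^{2/(1-a)}} \;=\; |h|\cdot\left(|h|/\gb^2\right)^{1/(1-a)},
\]
so that $\tf(\gb,h)\ge \tfrac12 p\,|h|\ge\exp\bigl(-c_2|h|(|h|/\gb^2)^{1/(1-a)}\bigr)$, after absorbing the polynomial factor $|h|/2$ into the exponent, valid for $|h|\ge 1$.

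The main technical hurdle is the Gaussian tail lower bound $\bbP(S_1\ge ML)\ge\exp(-C_1 M^2 L^a)$ with a constant $C_1$ reflecting the correct asymptotic variance $\sim L^{2-a}$; this should follow from the Gaussian estimates of the Appendix combined with the variance asymptotic displayed above. Secondary issues are (a) controlling the slowly varying factor in $\K$ uniformly across the jump lengths so as to justify the Jensen step and (b) applying the ergodic theorem along the $L$-shift, which is legitimate by mixing of $(\go_n)$. Once the sharp Gaussian estimate is in hand, the optimization in the last paragraph is purely algebraic, and the constants $C_1,C_3,C_4$ only affect the value of $c_2$.
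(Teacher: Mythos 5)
Your proof is correct and takes a genuinely different route from the paper's. The paper defines a ``good block'' by the much stronger componentwise condition $\gb\go_i + h \ge |h|$ for all $i$ in the block, dominates the block partition function by the homogeneous one $Z_{l,|h|}^{\rm pur}$, and then must estimate the probability of this componentwise event via a change of measure that shifts the Gaussian vector by the Perron--Frobenius eigenvector of $\gU_l$; this is where Lemma~\ref{lem:entropy1} and the full machinery of the Appendix (including the non-negativity of the $\rho_k$) are needed, in order to show $\langle \gU_l^{-1}U,U\rangle \asymp l^a$. You instead define a good block by the sum condition $S_j\ge ML$, use a pin-at-every-integer strategy whose Boltzmann weight depends only on $Lh+\gb S_j$, and invoke the elementary variance asymptotic $\Var(S_j)\sim \tfrac{2c_0}{(1-a)(2-a)}L^{2-a}$, from which the lower bound $\bbP(S_1\ge ML)\ge \exp(-C_1M^2L^a)$ follows by the standard one-dimensional Gaussian tail inequality --- no relative-entropy argument, no Perron--Frobenius eigenvector, and no Appendix lemma is required. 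The optimization $M\asymp |h|/\gb$, $L\asymp(|h|/\gb^2)^{1/(1-a)}$ is the same as in the paper and yields the same exponent $|h|(|h|/\gb^2)^{1/(1-a)}$.

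What your route buys: the hard technical estimate becomes trivial, because the variance of $\langle\go,\ind_l\rangle=\langle\gU_l\ind_l,\ind_l\rangle$ is directly computable from the decay of $\rho$, whereas the paper's componentwise event forces it to work with $\langle\gU_l^{-1}U,U\rangle$. Notably, your variance asymptotic only uses $\rho_k\sim c_0k^{-a}$ (so that the tail eventually dominates), and the pin-everything strategy never needs a pointwise lower bound on the $\go_i$; your argument therefore appears to dispense with the non-negativity hypothesis $\rho_k\ge 0$, which the paper itself flags as merely technical. The two secondary points you mention are real but minor: the Jensen step over the gaps needs the slowly varying factor in $\K$ to be absorbed into a $(1+\gep)$ prefactor for $l$ large, exactly as in the paper's display \eqref{eq:lowZhc}; and the law of large numbers for the fraction of good blocks uses ergodicity of the $L$-shift, which follows from the mixing of $\go$ (guaranteed by $\rho_n\to 0$), not just ergodicity of the unit shift. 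Neither of these is a gap. Finally, a small degenerate case worth one line: when $ML^{a/2}\lesssim 1$ the Gaussian tail asymptotic is not sharp, but then $p$ is bounded below by a universal constant and the claimed bound holds trivially after adjusting $C_1$.
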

The non-negativity condition for the correlations is only technical (and appears in the proof of Lemma \ref{lem:entropy2}), and we believe that the same result should be true with a more general correlation structure.

This shows that the phase transition disappears when correlations
are too strong. It provides an example where strongly correlated disorder always modifies (in an extreme way) the behavior of the system, for every value of the renewal parameter $\alpha$. However the fact that $h_c^{\q}(\gb)=-\infty$ does not allow us to study sharply how the phase transition is modified by the presence of disorder, and therefore we cannot verify nor contradict the Weinrib-Halperin prediction.
This phenomenon comes from the appearance of large, frequent, and arbitrarily favorable regions in the environment. This is the mark of the appearance of {\sl infinite disorder}, and is studied in depth in \cite{Bstrong}.

\medskip
%In general,
%with very strong correlations (in a sense that has to be precised, see Chapter \ref{chap:signgauss}),
%one should actually have that
%$h_c(\gb)=- \gb\, {\rm ess\, sup} (\go_1)$
%(with the definition ${\rm ess\, sup} (\go_1) = \inf \{a\in\bbR ,\bbP(\go_1>a)=0\}$),
%since one should be able to find
%long stretches where $\go$ is very close to the maximum of its support.
%and one should be able to get upper bounds on the free energy, in the way it is done in
%Chapter \ref{chap:signgauss}.

We now have a clearer picture of the behavior of the disordered system,
and of its dependence on the strength of the correlations,
that we collect in Figure \ref{fig:zones}, to be compared with \cite[Fig. 1]{BThier} for the hierarchical pinning model.

\begin{figure}[htbp]
\centerline{
\psfrag{0}{$0$}
\psfrag{z}{$ a $}
\psfrag{z1}{$ a =1$}
\psfrag{z2}{$ a =2$}
\psfrag{ga}{$\ga$}
\psfrag{ga12}{$\ga=1/2$}
\psfrag{ga1}{$\ga=1$}
\psfrag{nonsommable}{{\bf No phase transition}}
\psfrag{WHcrit}{\small Weinrib-Halperin criterion}
\psfrag{WHline}{$ a  =2(1\wedge\ga)$}
\psfrag{?}{\small$\nu^{\rm pur}\stackrel{??}{<}\nu^{\a}$}
\psfrag{Rel}{\small Relevant Disorder}
\psfrag{Irr}{\small ?\!\!  Irrelevant Disorder\!\! ?}
\psfrag{nuannrel}{\small $\nu^{\a}=\nu^{\rm pur}$}
\psfrag{nurel}{\small $\nu^{\rm pur}<2\leq \nu^{\rm que}$}
\psfrag{nuannirr}{\small $\nu^{\a}=\nu^{\rm pur}$}
\psfrag{nuanndur}{\small$\nu^{\a}\stackrel{??}{>}\nu^{\rm pur}$}
\psfrag{nuann}{\small$\nu^{\a}\stackrel{??}{=}\nu^{\rm pur}$}
\psfig{file=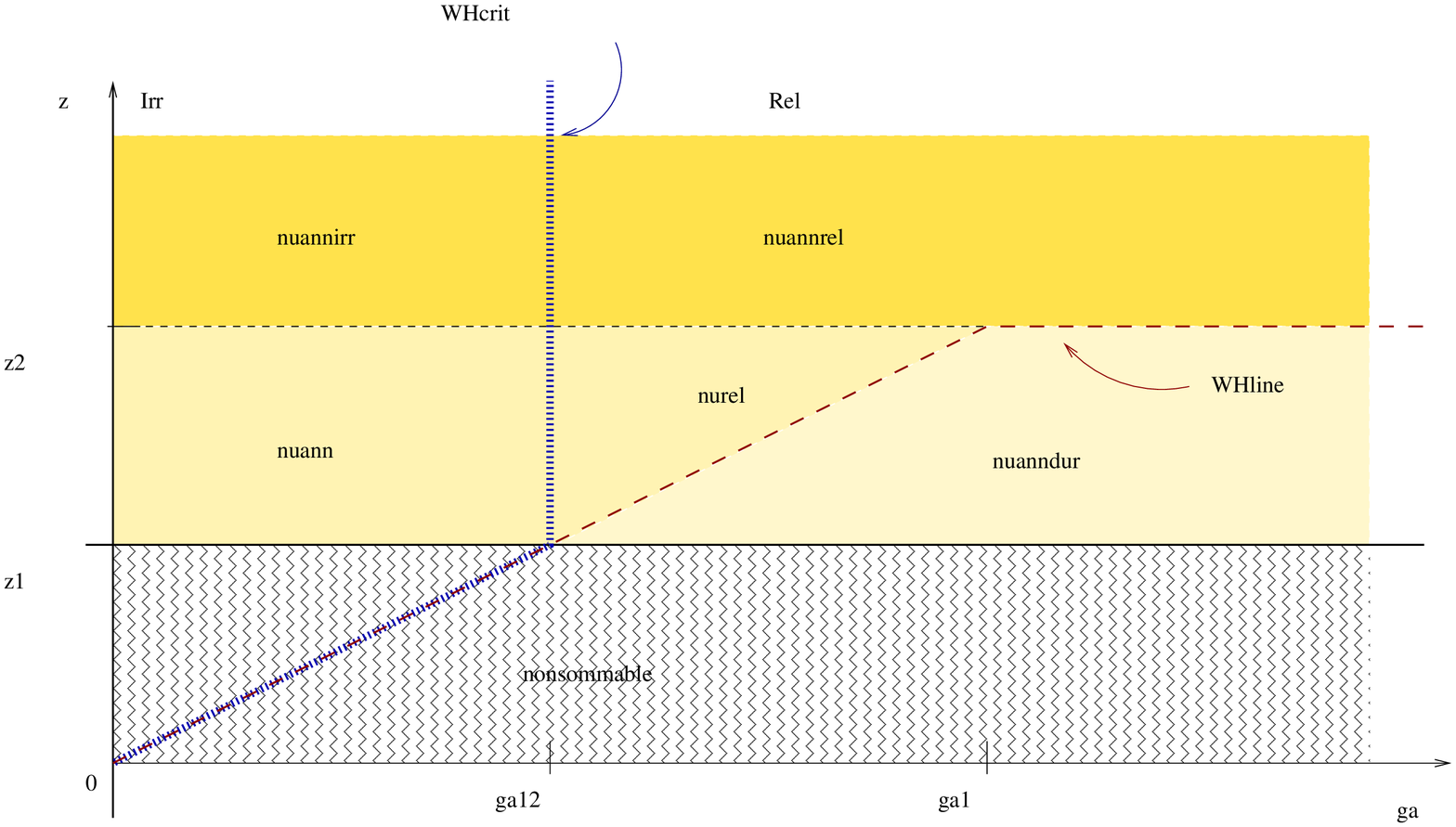,width=4.8in}}
  \begin{center}
    \caption{\label{fig:zones}
      Overview of the annealed behavior and of disorder relevance/ir\-re\-le\-vance
      in the $(\ga, a )$-plane, in analogy with \cite[Fig. 1]{BThier}.
      In the region $ a <1$ (non-summable correlations),
      the annealed model is not well-defined, and
      there is no phase transition for the disordered system (Theorem \ref{thm:xi<1}).
      %It is therefore not possible to apply the Weinrib-Halperin criterion in this region.
      In the region $ a >1$, the annealed model is well-defined,
      and Theorem \ref{thm:expo_ann} shows that the annealed critical behavior
      is the same that the pure one if $ a >2$. Theorem \ref{thm:smooth}
      shows that disorder is relevant for $\ga>1/2$,
      % in agreement with the Weinrib-Halperin prediction, but
      but we still have no proof of disorder irrelevance for $\ga<1/2$,
      that we believe to hold according to the physicists' predictions.
      For $ a \in(1,2)$, the results in the hierarchical case
      indicate that the annealed critical exponent $\nu^{\a}$
      should be equal to the pure one $\nu^{\rm pur}$
      if $ a \nu^{\rm pur}>2$ (\textit{i.e.}\  $ a >2 (1\wedge\ga)$),
      and that it should be strictly larger
      if $ a \nu^{\rm pur}<2$. }
  \end{center}
\end{figure}

\section{The annealed model}
\label{sec:anncorrel}

\subsection{Preliminary observations on the annealed partition function}
\label{sec:prelimann}

We now give the reason why the condition $a>2$
simplifies the analysis of the annealed system.
%, in particular in Theorem \ref{thm:expo_ann}.
Given two arbitrary disjoint blocks $B_1$ and $B_2$, the contribution to the Hamiltonian \eqref{defZann} of these two blocks can be divided into:
\begin{itemize}
 \item two {\sl internal} contributions
$(\gb^{2}/2+h) \sum_{i\in B_s}\gd_i + \gb^2\sum_{i,j\in B_s,i<j} \gd_i\gd_j \rho_{|i-j|}$
for $s=1,2$,

\item an {\sl interaction} contribution $\gb^2\sum_{i\in B_1,j\in B_2} \gd_i\gd_j \rho_{|i-j|}$.
\end{itemize}
We also refer to the latter term as the correlation term.
Then we can use
uniform bounds to control the interactions between $B_1$ and $B_2$,
since there are at most $k$ points at distance $k$ between $B_1$ and $B_2$:
\begin{equation}
\label{eq:boundcor}
-\sum_{k=1}^{\infty} k |\rho_k|\leq
 \sum_{k=1}^{\infty} \rho_k \sumtwo{i\in B_1,j\in B_2}{|i-j|=k} \gd_i \gd_j \leq \sum_{k=1}^{\infty} k |\rho_k| .
\end{equation}

Thanks to this remark, if $a>2$, then $\sum k|\rho_k|<\infty$, and
we have a "quasi super-multiplicativity'' property (super-multiplicativity
would hold if all of the $\rho_k$ were non-negative):
for any $N\geq1$ and $0\leq k\leq N$, one has
\begin{equation}
 \Zna \geq e^{-\gb^2 \sum k|\rho_k|} Z_{k,h}^{\a} Z_{N-k,h}^{\a}.
\label{eq:surmult}
\end{equation}
We also get the two following bounds, which can be seen as substitutes
for the renewal property (property that we do not have in our annealed system because
of the two-body $\gd_i\gd_j$ term).
Decomposing according to the last renewal before 
some integer $M\in[0, N]$,
and the first after it, one gets
\begin{equation}
 \Zna \geq  
   \sum_{i=0}^{M} \sum_{j=M+1}^{N} e^{-\gb^2 \sum k|\rho_k|}Z_{i,h}^{\a}\,
 \K(j-i)e^{\gb^2/2+h-\gb^2\sum |\rho_k|}\, Z_{N-j,h}^{\a},
\end{equation}
and
\begin{equation}
 \Zna \leq 
   \sum_{i=0}^{M} \sum_{j=M+1}^{N} e^{\gb^2 \sum k|\rho_k|} Z_{i,h}^{\a}\,
 \K(j-i)e^{\gb^2/2+h+\gb^2\sum |\rho_k|}\, Z_{N-j,h}^{\a}.
\end{equation}
Note that the terms $e^{\gb^2/2+h-\gb^2\sum |\rho_k| }$ and $e^{\gb^2/2+h+\gb^2\sum |\rho_k|}$ come from bounding
uniformly the contribution of the point $j$ to the partition function (note that $\sum |\rho_k|<+\infty$ because $a>2$).
If we write $h=\hca+u$, and using that $\hca$
is of order $\gb^2$ (see \eqref{asymppointann}),
we get a constant $c>0$ such that
\begin{equation}
e^{-c \gb^2} e^u \sum_{i=0}^{M} \sum_{j=M+1}^{N} Z_{i,h}^{\a} \K(j-i)Z_{N-j,h}^{\a}
    \leq \Zna \leq e^{c \gb^2} e^u\sum_{i=0}^{M} \sum_{j=M+1}^{N} Z_{i,h}^{\a} \K(j-i) Z_{N-j,h}^{\a}.
\label{quasirenew2}
\end{equation}
Note that one has also uniform bounds for $u\in[-1,1]$ (we are interested in the critical behavior, \textit{i.e.}\ for $u $ close to $0$):
one replaces the constant $e^{c \gb^2 }e^u$ by $C_1:=e^{c \gb^2 +1}$,
and the constant $e^{-c \gb^2 }e^u$ by $C_1^{-1}$.

In a general way, for any indexes $0=i_0<i_1<i_2<\cdots<i_m=N$, we also get
\begin{equation}
\left(e^{-c \gb^2} e^u\right)^m \prod_{k=1}^m Z_{i_k-i_{k-1},h}^{\a}
    \leq \E\left[ \prod_{k=1}^m \gd_{i_k} e^{H_{N,h}^{\a}}\right]
       \leq \left(e^{c \gb^2} e^u\right)^m  \prod_{k=1}^m Z_{i_k-i_{k-1},h}^{\a} .
\label{quasirenew}
\end{equation} 
When $\gb$
is small, \eqref{quasirenew2}-\eqref{quasirenew} are close to the renewal equation verified by $Z_{N,h}^{\rm pur}$
which is the same as \eqref{quasirenew2}-\eqref{quasirenew} with $\gb=0$.
In the sequel, we refer to \eqref{quasirenew2}-\eqref{quasirenew} as the \emph{quasi-renewal property}.
We can actually show Theorem \ref{thm:expo_ann} provided that these inequalities
hold. Therefore if one is able to get \eqref{quasirenew2}-\eqref{quasirenew}
with a weaker condition than $\sum k|\rho_k|<\infty$ (which could be
$ a >2(\ga\wedge1)$, as the comparison with the hierarchical model suggests, see Section \ref{sec:comparhier}),
such a theorem would follow.

\subsection{The annealed critical behavior}
\label{sec:behavann}

\subsubsection{On the resolution of the homogeneous model}
Our proof of Theorem \ref{thm:expo_ann}
is inspired from the following proposition.
\begin{proposition}
\label{prop:defFbis}
The homogeneous free energy $\tf(h)$ is the only solution
of the equation (in $b$)
\begin{equation}
\label{defb_bis}
\hP(b):=\sum_{n\geq 0} e^{-bn} \P(n\in\tau)=\frac{1}{1-e^{-h}}
\end{equation}
if such a solution exists, and $\tf(h)=0$ otherwise.
Thanks to \eqref{defb_bis}, one is in particular able to recover
Proposition \ref{comphomo}.
\end{proposition}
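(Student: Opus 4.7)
\emph{Proof plan.} The idea is to exploit the exact renewal structure of the homogeneous partition function in order to compute its generating function $G(b,h):=\sum_{N\geq 0}e^{-bN}Z_{N,h}$ in closed form. First I would decompose according to the number of contact points, writing
\begin{equation*}
Z_{N,h} \;=\; \sum_{k\geq 1}e^{hk}\,\bP(\tau_k=N), \quad N\geq 1, \qquad Z_{0,h}=1,
\end{equation*}
which gives the geometric identity
\begin{equation*}
G(b,h) \;=\; \sum_{k\geq 0}\bigl(e^{h}\hat\K(b)\bigr)^{k} \;=\; \frac{1}{1-e^{h}\hat\K(b)},
\end{equation*}
valid whenever $e^{h}\hat\K(b)<1$, where $\hat\K(b):=\sum_{n\geq 1}e^{-bn}\K(n)=\bE[e^{-b\tau_1}]$. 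Because $\hP(b)=\sum_{k\geq 0}\hat\K(b)^{k}=(1-\hat\K(b))^{-1}$, equation \eqref{defb_bis} is exactly $e^{h}\hat\K(b)=1$, i.e.\ it picks out the pole of $G(\cdot,h)$.

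Next I would identify $\tf(h)$ with the abscissa of convergence of $G(\cdot,h)$. By super-multiplicativity $Z_{M+N,h}\geq Z_{M,h}Z_{N,h}$ (obtained by restricting to trajectories visiting $M$ and then using the renewal property of $\tau$), Fekete's lemma gives $\tf(h)=\lim_{N\to\infty}\frac{1}{N}\log Z_{N,h}$, which coincides with the abscissa of convergence of $G(\cdot,h)$, i.e.\ with $\inf\{b\geq 0:e^{h}\hat\K(b)\leq 1\}$. On $[0,\infty)$, $\hat\K$ is continuous and strictly decreasing (dominated convergence and strict positivity of $\K$), with $\hat\K(0)=1$ (recurrence) and $\hat\K(b)\downarrow 0$ as $b\to\infty$; hence $\hP$ is continuous and strictly decreasing from $+\infty$ (at $0^{+}$) to $1$ (at $+\infty$). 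For $h>0$ the value $1/(1-e^{-h})\in(1,+\infty)$ is attained at a unique $b=\tf(h)>0$; for $h\leq 0$ no positive solution exists and the infimum collapses to $0$, yielding $\tf(h)=0$. The strict monotonicity of $\hP$ immediately gives the uniqueness claim in Proposition~\ref{prop:defFbis}.

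Finally, to recover Proposition~\ref{comphomo} I would combine \eqref{defb_bis} with the behavior of $\hat\K$ near $b=0$ via Karamata's Tauberian theorem: when $\ga>1$ one has $1-\hat\K(b)\sim\mu\, b$ with $\mu=\bE[\tau_1]<\infty$, when $\ga\in(0,1)$ one has $1-\hat\K(b)\sim\Gamma(1-\ga)\,b^{\ga}\gp(1/b)$, and the boundary cases $\ga\in\{0,1\}$ produce additional logarithmic slowly-varying corrections. Writing $1-\hat\K(\tf(h))=1-e^{-h}\sim h$ as $h\searrow 0$ and inverting this asymptotic equivalence via de~Bruijn's inverse theorem for slowly varying functions yields $\tf(h)\sim\hat\gp(h)\,h^{1\vee 1/\ga}$ for an explicit slowly varying $\hat\gp$ built from $\gp$. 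The main technical point, in my view, is precisely this last inversion: treating every $\ga\geq 0$ uniformly requires a careful case-by-case analysis of the slowly varying corrections, but it is by now a standard computation in renewal theory.
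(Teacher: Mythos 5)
Your proposal is correct and takes essentially the same route as the paper: the paper simply cites \cite[Proposition 1.1]{GBbook} for the equivalent form $\hat\K(b)=e^{-h}$ and records the identity $\hP(b)=1+\hat\K(b)\hP(b)$, while you reconstruct that standard argument in full (generating function via the renewal decomposition, Fekete plus super-multiplicativity, monotonicity and continuity of $\hat\K$). Your Karamata/de~Bruijn sketch for recovering Proposition~\ref{comphomo} likewise matches the classical computation.
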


The proof of such a result is standard, and we refer to \cite[Proposition 1.1]{GBbook}, which gives a slightly different form: $\tf(h)$ is the only solution of the equation
$\hat \K(b):=\sum_{n\in\N} e^{-bn} \K(n)=e^{-h} $. One recovers Proposition \ref{prop:defFbis} observing that $\hP(b) = 1+ \hat\K(b) \hP(b)$.

%\begin{proof}
%Let $b$ the solution of \eqref{defb_bis}
%if such a solution exists (it is then unique since $\hP$ is decreasing in $b$), and $b=0$ otherwise.
%To show that $b=\tf(h)$,  we use the binomial binomial expansion of $(1+e^h-1)^{\sum \gd_i}$:
%\begin{equation}
% e^{h\sum_{n=1}^{N} \gd_n} \gd_N = (1+e^h-1)^{\sum_{n=1}^{N-1} \gd_n} e^h \gd_N 
%     = e^h\sum_{m=0}^{N-1} (e^h-1)^m \sum_{0<i_1<\ldots < i_m\leq N-1 } \gd_{i_1} \ldots \gd_{i_m} \gd_N.
%\end{equation}
%Taking the expectation, and using the renewal property, we get
%\begin{multline}
% Z_{N,h}   =  \frac{e^h}{e^h-1}\sum_{m=1}^N (e^h-1)^m \sum_{0=i_0<i_1<\ldots < i_m = N }
%    \prod_{k=1}^m \P(i_k-i_{k-1}\in\tau) \\
%    =  \frac{e^h}{e^h-1} e^{b N} \tilde \P^h (n\in\tau), %\label{eq:expand_Zpur}
%\end{multline}
%where we defined $\tilde\K^h (n\in\tau) := (e^h-1)e^{-b n} \P(n\in\tau)$, which is the inter-arrival law of a (new) renewal process with law $\tilde \P^h$ (which is positive recurrent if $b>0$).
%In a classical way, one deduces that
%$\tf(h)=b$, thanks to the renewal theorem \cite[Ch. I, Thm 2.2]{Asmuss}.
%\end{proof}

\subsubsection{Proof of Theorem \ref{thm:expo_ann}}

We now drop the superscript $\gU$ in $Z_{N,h}^{\a,\gU}$,
and write $\hca$ instead of $h_c^{\a,\gU}(\gb)$,
to keep notations simple.

The essential tool is to use the quasi-renewal property \eqref{quasirenew2}-\eqref{quasirenew}  to prove that the Laplace transform of $Z_{n,\hca}^{\a}$ is of the same order as $\hP(\gl)$, the Laplace transform of $Z_{N,h=0}^{\rm pur} =\bP(n\in\tau)$. Then, one would to be able to apply the same idea as in Proposition \ref{prop:defFbis}.
The following proposition indeed proves that statement.
\begin{proposition}
\label{prop:laplace}
We assume that the quasi-renewal property \eqref{quasirenew2}-\eqref{quasirenew} holds.
Define for all $\lambda>0$ $\hat Z_{\hca}(\lambda):=\sum_{n=0}^{\infty} e^{-\lambda n} Z_{n,\hca}^{\a}$.
Then there exists a constant $c_3>0$, such that for every $0<\gl\leq 1$ one has
\begin{equation}
 c_3^{-1} \hP (\gl ) \leq \hat Z_{\hca}(\lambda) \leq   c_3 \hP(\gl).
\end{equation}
\end{proposition}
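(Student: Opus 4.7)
My plan is to derive Proposition~\ref{prop:laplace} from the quasi-renewal property \eqref{quasirenew2} applied at $h=\hca$ (so $u=0$), in the spirit of Proposition~\ref{prop:defFbis}: whereas $\hP$ satisfies the exact renewal identity $\hP(\lambda)(1-\hat K(\lambda))=1$, I aim to show that $\hat Z_\hca$ satisfies the same identity up to bounded multiplicative factors.

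Choosing $M=0$ in \eqref{quasirenew2} and using $Z_{0,\hca}^\a=1$, the double sum collapses to a single convolution and I obtain the ``pseudo-renewal'' inequality
\[
A^{-1}\sum_{j=1}^N K(j)\,Z_{N-j,\hca}^\a\;\leq\;Z_{N,\hca}^\a\;\leq\;A\sum_{j=1}^N K(j)\,Z_{N-j,\hca}^\a,\qquad N\geq 1,
\]
with $A:=e^{c\gb^2}$. Multiplying by $e^{-\lambda N}$, summing over $N\geq 1$ and applying Fubini to the convolution transforms this into
\[
A^{-1}\hat K(\lambda)\hat Z_\hca(\lambda)\;\leq\;\hat Z_\hca(\lambda)-1\;\leq\;A\hat K(\lambda)\hat Z_\hca(\lambda),
\]
which, wherever the denominators stay positive, rearranges to $(1-A^{-1}\hat K(\lambda))^{-1}\leq\hat Z_\hca(\lambda)\leq(1-A\hat K(\lambda))^{-1}$. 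On the subset of $\lambda\in(0,1]$ where $1-\hat K(\lambda)$ is bounded away from $A-1=O(\gb^2)$, both sides are directly comparable to $\hP(\lambda)=1/(1-\hat K(\lambda))$ with constants depending only on $\gb$, which already yields the claim on that subset.

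The main obstacle remains the complementary small-$\lambda$ regime $1-\hat K(\lambda)\lesssim\gb^2$: the $M=0$ bound is too crude there, and the formal upper bound fails as soon as $A\hat K(\lambda)\geq 1$. To handle it I would invoke the full quasi-renewal \eqref{quasirenew} over $m$-point configurations $0<i_1<\cdots<i_m=N$, combined with the criticality condition $\tf^\a(\gb,\hca)=0$. Summing the lower and upper bounds of \eqref{quasirenew} over all such configurations and over $m\geq 1$ represents $Z_{N,\hca}^\a$, up to the multiplicative factor $A^{\pm m}$, as a renewal-type sum $\sum_m(\tilde K^{*m})(N)$ with an effective step kernel $\tilde K$ whose total mass is pinned to~$1$ precisely by the condition $\tf^\a(\gb,\hca)=0$. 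Taking Laplace transforms of that representation and using the subexponentiality of $Z_{N,\hca}^\a$ at $\hca$ to select the finite branch then delivers $\hat Z_\hca(\lambda)\asymp 1/(1-\tilde K(\lambda))\asymp\hP(\lambda)$ on the full range $(0,1]$. The hardest part of this programme is controlling the residual correlation factor $e^{\gb^2\sum_{k<l}\rho_{i_l-i_k}}$ uniformly in the configuration, for which the summability $\sum_k k|\rho_k|<\infty$ provided by $a>2$ is essential.
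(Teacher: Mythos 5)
Your $M=0$ reduction of \eqref{quasirenew2} only yields the bounds $\bigl(1-A^{-1}\hat K(\lambda)\bigr)^{-1}\leq \hat Z_{h_c^{\a}}(\lambda)\leq \bigl(1-A\hat K(\lambda)\bigr)^{-1}$ with $A=e^{c\gb^2}>1$, and these are not comparable to $\hP(\lambda)=1/(1-\hat K(\lambda))$ precisely in the regime that matters. You flag that the upper bound breaks once $A\hat K(\lambda)\geq 1$, but the lower bound degenerates as well: as $\lambda\downarrow 0$ one has $\hat K(\lambda)\to 1$, so $\bigl(1-A^{-1}\hat K(\lambda)\bigr)^{-1}\to A/(A-1)$ stays bounded while $\hP(\lambda)\to\infty$. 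In other words, away from the trivially bounded region $1-\hat K(\lambda)\gtrsim\gb^2$, neither side of your pseudo-renewal inequality sees the power-law divergence of $\hP$, and the Proposition has no content elsewhere. The proposed repair via the full $m$-point quasi-renewal \eqref{quasirenew} does not close this gap: the factors $A^{\pm m}$ accumulate in exactly the same way, and the criticality condition $\tf^{\a}(\gb,h_c^{\a})=0$ does not, by itself, let you fold those factors into a step distribution $\tilde K$ with total mass pinned to $1$ --- the inequalities \eqref{quasirenew} are two-sided bounds, not identities, so no single effective renewal kernel is singled out.

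What the paper actually does is quite different and genuinely needed. It works with partial sums $\sum_{n=0}^N Z_{n,h_c^{\a}}^{\a}$ and translates back to the Laplace transform via Karamata's Tauberian theorem. The upper bound (Lemma \ref{lem:sumZnc}) is a bootstrapping contradiction: if the partial sum at scale $n_0$ were too large, \eqref{quasirenew2} with $M=\lfloor p/2\rfloor$ forces it to grow doubly exponentially along scales $4^k n_0$, contradicting the \emph{a priori} bound $Z_{N,h_c^{\a}}^{\a}\leq C_1$ from Claim \ref{claim:Zhcpetit}. The lower bound relies on the finite-size delocalization criterion of Lemma \ref{lem:finite_deloc} (adapted from \cite{GLT08}): at $h=h_c^{\a}$ that criterion must fail at every scale $n$, which forces one of the two alternatives \eqref{eq:condupper1}--\eqref{eq:condupper2} to hold for a positive density of $n\leq k_{\lambda}$ and yields the desired lower bound on $\hat Z_{h_c^{\a}}(\lambda)$. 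Neither ingredient (the bootstrap contradiction nor the finite-size criterion) appears in your sketch, and without a concrete substitute the argument does not go through on the small-$\lambda$ range $(0,1]$.
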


We remark that in \cite{BThier}, the key for the study of the disordered system via annealed techniques is a sharp control of the annealed polymer measure at its critical point (even though the exact value of this critical point is not known). In the present case, since there is no iterative structure for the partition function, there are many technicalities that are harder to deal with. We however have results in this direction, such as Propositions \ref{prop:laplace}-\ref{prop:Pncbis}, that are the first step towards proving that the Harris criterion holds if $\sum k|\rho_k|<\infty$, in terms of critical point shifts. We do not develop the analysis in this direction, which is still open and would require a stronger knowledge of the annealed system.

\begin{proof}[Proof of Theorem \ref{thm:expo_ann} given Proposition \ref{prop:laplace}]
Recall that we define $u:=h-\hca$, so that
we only work with $u>0$, $u\in[0,1]$, as we already know that for $u\leq 0$, $\tf^{\a}(\gb,u)=0=\tf(u)$.
We use the following binomial expansion
\begin{equation}
 e^{u\sum_{n=1}^{N} \gd_n} \gd_N = (1+e^u-1)^{\sum_{n=1}^{N-1} \gd_n} e^u \gd_N    = e^u\sum_{m=0}^{N-1} (e^u-1)^m \sum_{0<i_1<\ldots < i_m\leq N-1 } \gd_{i_1} \ldots \gd_{i_m} \gd_N,
\end{equation}
to get that
\begin{equation}
\label{eq:expand_Zna}
   \Zna = \bE\left[ e^{u\sum_{n=1}^N \gd_n}  e^{H_{n,\hca}^a} \gd_N\right]
        = \frac{e^u}{e^u-1}\sum_{m=1}^N \sum_{0<i_1<\ldots < i_m = N } (e^u-1)^m
        \bE\left[ \gd_{i_1}\ldots \gd_{i_m}  e^{H_{N,\hca}^{\a}}\right].
\end{equation}
Note that as there is no renewal structure for $\E\left[ \cdot\  e^{H_{N,\hca}^{\a}}\right]$,
one cannot factorize the quantity $\E\left[ \gd_{i_1}\ldots \gd_{i_m} e^{H_{N,\hca}^{\a}} \right]$ easily.
However, since we have the quasi-renewal property \eqref{quasirenew},
we get the two following bounds, valid for any $m\in\bbN$ and subsequence
$0<i_1<\ldots < i_m = N$,  uniformly for $u\in[0,1]$:
\begin{equation}
(C_1^{-1})^{m} \prod_{k=1}^m Z_{i_k-i_{k-1},h_c^{\a}}^{\a}  
   \leq \bE\left[ \gd_{i_1}\ldots \gd_{i_m} \gd_N e^{H_{n,\hca}^a}\right]
     \leq  (C_1)^m \prod_{k=1}^m Z_{i_k-i_{k-1},h_c^{\a}}^{\a} ,
\label{eq:boundpnc}
\end{equation}
where $C_1:=e^{c\gb+1}$ is defined in Section \ref{sec:prelimann}.
Now, we define
\begin{equation}
 \begin{split}
 \bar Z_{N,h}^{\a} &:= \frac{e^u}{e^u-1} \sum_{m=1}^N \left(C_1^{-1}(e^u-1)\right)^m\sum_{0<i_1<\ldots < i_m = N }
            \prod_{k=1}^m Z_{i_k-i_{k-1},h_c^{\a}}^{\a}   \\
\text{and }\quad \tilde Z_{N,h}^{\a} &:= \frac{e^u}{e^u-1} 
\sum_{m=1}^N \left( C_1(e^u-1) \right)^m\sum_{0<i_1<\ldots < i_m = N }
             \prod_{k=1}^m Z_{i_k-i_{k-1},h_c^{\a}}^{\a} ,  
 \end{split}
\end{equation} 
so that $\bar Z_{N,h}^{\a} \leq \Zna \leq \tilde Z_{N,h}^{\a}$.
For $u>0$, we can define $\bar b >0$ and $\tilde b>0$ such that
\begin{equation}
   \hat Z_{h_c^{\a}} (\, \bar b \, ) = C_1(1-e^{-u})^{-1}, \indent
   \text{and } \indent \hat Z_{h_c^{\a}} (\, \tilde b \, ) = C_1^{-1}(1-e^{-u})^{-1},
\label{def_b1b2}
\end{equation}
if the equations have a solution and otherwise set $\bar b=0$, or $\tilde b=0$. Then, if one defines $\K^{(\bar b)}(n):= C_1^{-1} (1-e^{-u}) e^{-\bar b n}\Zna$ for $n\geq 1$, one verifies that $K^{(\bar b)}(\cdot)$ is the inter-arrival distribution of a positive recurrent renewal $\tau^{(\bar b)}$. Moreover, with this definition, $\bar Z_{N,h}^{\a} = \frac{e^u}{e^u-1} e^{\bar b N} \bP^{(\bar b)} (n\in\tau^{(\bar b)})$, and one gets that $\lim \frac{1}{N}\log \bar Z_{N,h}^{\a} = \bar b$. Similarly, one has that $\lim \frac{1}{N}\log \tilde Z_{N,h}^{\a} = \tilde b$.

Then, one gets that $ \bar b \leq \tf^{\a}(\gb,\hca+u) \leq \tilde b$,
from the fact that $\bar Z_{N,h}^{\a} \leq \Zna \leq \tilde Z_{N,h}^{\a}$.
Using that $\hP(\cdot)$ is decreasing, one therefore has that $\hP (\tilde b) \leq \hP(\tf^{\a}(\gb,\hca+u)) \leq \hP(\bar b)$.
The definitions \eqref{def_b1b2}, combined with Proposition \ref{prop:laplace}, gives that for every $u>0$ such that $\bar b\leq 1$ one has
\begin{equation}
( c_3 C_1)^{-1}  (1-e^{-u})^{-1} \leq \hP(\tf^{\a}(\gb,\hca+u)) \leq  c_3C_1 (1-e^{-u})^{-1}.
\end{equation}
We finally have that for $u\geq 0$ small enough, there are two constants $c$ and $c'$ such that
\begin{equation}
  (1-e^{-cu})^{-1}\leq \hP(\tf^{\a}(\gb,\hca+u)) \leq (1-e^{-c'u})^{-1}.
\end{equation}
Applying the inverse of $\hP$ (which is also decreasing), one gets the result from
the fact that $\tf(u) =\hP((1-e^{-u})^{-1})$ (see \eqref{defb_bis}).
\end{proof}

\subsection{Proof of Proposition \ref{prop:laplace}}

Let us first note that, thanks to Kamarata's Tauberian Theorem  \cite[Theorem 1.7.1]{Bingham}, the asymptotic behavior of the Laplace transform of $\Zna$ is directly related to that of $\sum_{n=0}^{N} Z_{n,h_c^{\a}}^{\a}$.
We also stress that $\sum_{n=0}^{N} \bP(n\in\tau) \stackrel{N\to\infty}{\sim} \tilde\varphi(N) N^{\min(\alpha,1)}$, where $\bar\varphi(\cdot)$ is a well determined slowly varying function, see \cite[Theorems 8.7.3 and 8.7.5]{Bingham}: for example, $\tilde\varphi(n)= \frac{\sin(\pi \ga)}{\pi}\varphi(n)^{-1}$ if $\ga\in(0,1)$, and $\tilde\varphi(n)=\bE[\tau_1]^{-1}$ if $\ga>1$, the cases $\alpha=0$ and $\alpha=1$ requiring more care.

To avoid too many technicalities, we will focus only on the cases $\ga\in(0,1)$ and $\ga>1$, the cases $\ga=0$ and $\ga=1$ following the same proof.
We therefore only need to prove that there exists a constant $c_4$ so that, for all $N\in\bbN$, 
\begin{gather}
\label{sumstobeproved}
c_4^{-1}\varphi(N)^{-1} N^{\ga} \leq \sum_{n=0}^{N} Z_{n,h_c^{\a}}^{\a} \leq c_4 \varphi(N)^{-1}N^{\ga}   \quad \text{ if } \ga\in(0,1) ,\\
 \text{and }\quad \quad c_4^{-1} N\leq \sum_{n=0}^{N} Z_{n,h_c^{\a}}^{\a} \leq c_4 N\quad\quad \quad\quad \text{ if } \ga>1.
\end{gather}
In terms of Laplace transforms, one has to show that there exists a constant $c_5$ so that, for all  for all $\lambda\in(0,1)$
\begin{gather}
\label{laplacetobeproved}
c_5^{-1}\varphi(1/\gl)^{-1} \gl^{-\ga}  \leq\hat Z_{n,h_c^{\a}}^{\a} (\lambda)\leq c_5\, \varphi(1/\gl)^{-1} \gl^{-\ga}   \quad \text{ if } \ga\in(0,1) ,\\
 \text{and }\quad \quad c_5^{-1} \lambda^{-1}\leq \hat Z_{n,h_c^{\a}}^{\a} (\gl ) \leq c_5 \gl^{-1}\quad \quad \quad \text{ if } \ga>1.
\end{gather}
The behavior of the Laplace transform $\hP(\gl)$ can be found using \eqref{sumstobeproved}, together with \cite[Theorem 1.7.1]{Bingham}. Note that the lower bounds (resp. the upper bounds) in \eqref{sumstobeproved} correspond to the lower bounds (resp. the upper bounds) in \eqref{laplacetobeproved}.

\smallskip
Let us first prove a preliminary result that will be useful, both in the case $\ga\in(0,1)$,
and in the case $\ga>1$.
\begin{claim}
 \label{claim:Zhcpetit}
For every $\ga>0$, if the quasi-renewal property \eqref{quasirenew2}-\eqref{quasirenew} holds,
then for all $N\in\N$ one has $Z_{N,\hca}^{\a}\leq C_1$, where $C_1=e^{c\gb^2+1}$
is defined above. 
\end{claim}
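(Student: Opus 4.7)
The plan is to recognise \eqref{eq:surmult} as a quasi super-multiplicativity statement for $Z_{N,h}^{\a}$, promote it to genuine super-multiplicativity by a harmless multiplicative renormalisation, and then apply Fekete's lemma together with the fact that the annealed free energy vanishes at $\hca$ to extract a uniform-in-$N$ upper bound.

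Concretely, since $a>2$ one has $S:=\sum_{k\ge1}k|\rho_k|<\infty$, so $A:=e^{-\gb^2 S}\in(0,1]$ is a finite, positive constant. The inequality \eqref{eq:surmult} rearranges as
\[
(AZ_{N,\hca}^{\a})\;\ge\;(AZ_{k,\hca}^{\a})\,(AZ_{N-k,\hca}^{\a})\qquad\text{for every }0\le k\le N,
\]
so that the sequence $\tilde Z_N:=AZ_{N,\hca}^{\a}$ is genuinely super-multiplicative. The super-multiplicative form of Fekete's lemma, combined with the existence of the annealed free energy granted by Proposition \ref{prop:anncorrel}, then gives
\[
\sup_{N\ge1}\frac{1}{N}\log\tilde Z_N\;=\;\lim_{N\to\infty}\frac{1}{N}\log\tilde Z_N\;=\;\tf^{\a,\gU}(\gb,\hca)\;=\;0,
\]
the last equality being the defining property of the annealed critical point. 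Reading off $\sup_N \frac{1}{N}\log\tilde Z_N\le 0$ at once yields $\tilde Z_N\le 1$, and hence $Z_{N,\hca}^{\a}\le A^{-1}=e^{\gb^2 S}$ uniformly in $N$.

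It then only remains to match constants: the $c>0$ introduced in Section \ref{sec:prelimann} was chosen precisely to dominate $S$ and the other boundary correction terms appearing in \eqref{quasirenew2}--\eqref{quasirenew}, so $e^{\gb^2 S}\le e^{c\gb^2}\le e^{c\gb^2+1}=C_1$, completing the argument. There is really no substantive obstacle here: the entire content of the claim is the upgrade, via Fekete's lemma, from the asymptotic statement $\tf^{\a,\gU}(\gb,\hca)=0$ to a \emph{uniform}-in-$N$ bound on $Z_{N,\hca}^{\a}$, and this upgrade is unlocked precisely by the quasi super-multiplicativity \eqref{eq:surmult}, which is itself available thanks to $a>2$ ensuring $S<\infty$.
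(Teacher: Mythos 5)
Your argument is correct and is essentially the paper's own: both exploit super-multiplicativity of a renormalised partition function together with the fact that the annealed free energy vanishes at $\hca$, the paper phrasing the last step as a contrapositive (if $C_1^{-1}Z_{n_0,\hca}^{\a}>1$ for some $n_0$ then the partition function grows exponentially and $\tf^{\a}>0$, a contradiction) where you invoke Fekete's lemma explicitly. The only cosmetic difference is that you start from \eqref{eq:surmult} (available under $a>2$, the setting in which the quasi-renewal property is actually established) rather than reading super-multiplicativity directly off the l.h.s.\ of \eqref{quasirenew} as the paper does, which is why the claim is phrased in terms of the quasi-renewal hypothesis; both routes give a uniform bound of the form $e^{O(\gb^2)}\leq C_1$.
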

Indeed, the l.h.s. inequality in \eqref{quasirenew} yields that for all $u\in[-1,1]$,
one has
$$C_1^{-1} Z_{M+N,h}^{\a} \geq (C_1^{-1} Z_{N,h}^{\a}) (C_1^{-1} Z_{M,h}^{\a})$$
 for all $M,N\geq 0$.
Therefore one gets that if $C_1^{-1} Z_{n_0,h}^{\a}>1$ for some $n_0$, then the partition function
grows exponentially, and $\tf(\gb,h)>0$.
This gives directly that $C_1^{-1} Z_{N,\hca}^{\a}\leq 1$ for all $N\in\N$.
\qed

Let us focus first on the case $\ga\in(0,1)$, since Proposition \ref{prop:Pncbis} gives a better result in the case $\ga>1$, and prove \eqref{sumstobeproved}.
%We know that $\hP(\gl)\sim c\gl^{-\ga}$ when $\gl$ goes to $0$ (recall the assumption on $K(\cdot)$). Then we only have to show that $\hat Z_{h_c^{\a}}^{\a}(\gl)$ is of order $\gl^{-\ga}$ as $\gl\searrow 0$, or equivalently that $\sum_{n=1}^N Z_{n,h_c^{\a}}^{\a}$ is of order $N^{\ga}$ for large $N$, thanks to an Abelian Theorem \cite[Th.1.7.1]{Bingham}.

\smallskip
\textbf{Upper bound.}
We prove the following Lemma 
\begin{lemma}
For $\ga\in(0,1)$, there exists a constant $C_0>0$ such that for any $N\geq 1$
\begin{equation}
 \sum_{n=0}^{N} Z_{n,h_c^{\a}}^{\a} \leq C_0 \varphi(N)^{-1}N^{\ga}.
\end{equation}
\label{lem:sumZnc}
\end{lemma}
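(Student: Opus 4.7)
The upper bound we want, $\sum_{n=0}^N Z_{n,h_c^\a}^\a \le C_0\varphi(N)^{-1}N^\ga$ for $\ga\in(0,1)$, matches in scaling the pure renewal partial sum $U_N := \sum_{n=0}^N \bP(n\in\tau) \sim c\,\varphi(N)^{-1}N^\ga$, which follows from Doney's asymptotic $\bP(n\in\tau)\sim c\,n^{\ga-1}/\varphi(n)$ combined with Karamata's theorem. The plan is therefore to transfer the pure scaling to the annealed partition function, using that $Z_{n,h_c^\a}^\a$ satisfies a convolution equation of the same form as the pure renewal mass function up to a multiplicative constant.

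First I would set up the recursion by applying \eqref{quasirenew2} at $h=h_c^\a$ (so $u=0$) with $M=0$, which yields, for every $N\ge 1$,
\[
  Z_{N,h_c^\a}^\a \;\le\; C \sum_{j=1}^N \K(j)\,Z_{N-j,h_c^\a}^\a,\qquad C:=e^{c\gb^2}.
\]
Writing $A_N := \sum_{n=0}^N Z_{n,h_c^\a}^\a$ and summing over $N=1,\dots,\bar N$, after exchanging the order of summation one gets
\[
  A_{\bar N} \;\le\; 1+ C \sum_{j=1}^{\bar N} \K(j)\, A_{\bar N-j},
\]
to be compared with the pure identity $U_N = 1 + \sum_{j=1}^N \K(j)\,U_{N-j}$. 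The hope would be to transfer the asymptotic $U_N \sim c\varphi(N)^{-1}N^\ga$ to $A_N$.

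The main obstacle is precisely the constant $C>1$. A direct induction $A_N\le c_4 U_N$ fails, since substituting produces $A_N\le 1+Cc_4(U_N-1)$, and $Cc_4>c_4$ makes the induction step break. Likewise, the Laplace-transform version $\hat Z(\gl)(1-C\hat\K(\gl))\le 1$ is useless as $\gl\to 0$ since $C\hat\K(\gl)\to C>1$. To circumvent this, my strategy is to combine the quasi-renewal inequality with the uniform pointwise bound $Z_{n,h_c^\a}^\a\le C_1$ of Claim \ref{claim:Zhcpetit} (itself a consequence of super-multiplicativity \eqref{eq:surmult} and the vanishing of the annealed free energy at $h_c^\a$). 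Concretely, one iterates \eqref{quasirenew} only over "super-blocks" of bounded length $\ell$, absorbing the Hamiltonian within each block into the factor $C_1^{N/\ell}$; the decay of $\K$ between blocks, through the precise tail $\K(n)\sim\varphi(n) n^{-1-\ga}$ and Karamata's theorem applied to the resulting sum, then delivers the polynomial bound of the correct order once the coarse-graining scale $\ell$ is optimized so that $C_1^{N/\ell}$ stays subpolynomial.

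In short, the proof should rest on the interplay between the quasi-renewal inequality from \eqref{quasirenew2} (which gives the correct convolution structure) and the pointwise boundedness from Claim \ref{claim:Zhcpetit} (which compensates for the multiplicative slack $C>1$). The technical heart is the block-decomposition step: without criticality one would lose an exponential factor in the iteration, and without the quasi-renewal equation one could not recover the correct polynomial scaling from the heavy tail of $\K$.
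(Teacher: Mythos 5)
You correctly spot the central obstacle: the quasi-renewal upper bound $Z_{N,h_c^\a}^\a \leq C\sum_j \K(j) Z_{N-j,h_c^\a}^\a$ has a multiplicative constant $C>1$, which destroys any direct convolution or Laplace-transform comparison with the pure renewal. That observation is right and is exactly what makes this lemma nontrivial. However, the coarse-graining fix you sketch does not close the gap. To make $C_1^{N/\ell}$ subpolynomial you need $\ell\gtrsim N/\log N$, leaving only $O(\log N)$ blocks; but the desired estimate $\sum_{n\le N} Z_{n,h_c^\a}^\a \leq C_0\,\varphi(N)^{-1}N^\ga$ is not a statement about $O(\log N)$ block boundaries, and I see no way to convert the resulting block inequality into the required polynomial bound with the correct exponent $\ga$. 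Moreover, since $C_1\geq e$, the stray factor $N^{\log C_1}\geq N$ is already too large to absorb. The key idea of the paper's proof is missing from your proposal: instead of trying to push the \emph{upper} bound of the quasi-renewal through the convolution, the paper argues by contradiction using the \emph{lower} bound on the left of \eqref{quasirenew2}. Assuming $\sum_{n=0}^{n_0}Z_{n,h_c^\a}^\a \geq A\varphi(n_0)^{-1}n_0^\ga$ for a large $A$, one decomposes $Z_{p,h_c^\a}^\a$ for $p\in[2n_0,4n_0]$ at $M=\lfloor p/2\rfloor$, restricts to $i\leq n_0$ and $p-j\leq n_0$, and obtains $Z_{p,h_c^\a}^\a\geq c'A^2\varphi(n_0)^{-1}n_0^{\ga-1}$. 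Summing over $p\in[2n_0,4n_0]$ reproduces the hypothesis at scale $4n_0$ with $A$ replaced by $\bar c A^2$, so the constant grows doubly exponentially under iteration; after $k$ steps some individual $Z_{n,h_c^\a}^\a$ must exceed $2C_1$, contradicting Claim \ref{claim:Zhcpetit}. This bootstrap bypasses the $C>1$ problem entirely, because the constant only enters once per scale-doubling rather than once per renewal step. You should look for a self-improving argument of this type rather than trying to tame the convolution inequality directly.
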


\begin{proof}
 If the Lemma were not true, then for any constant $A>0$ arbitrarily large,
 there would exist some $n_0\geq 1$ such that
\begin{equation}
\sum_{n=0}^{n_0} Z_{n,h_c^{\a}}^{\a} \geq A \varphi(n_0)^{-1} n_0^{\ga}.
\label{eq:sumZnc}
\end{equation}
But in this case, using the l.h.s. inequality of \eqref{quasirenew2}, we get
for any $ 2n_0\leq p\leq 4n_0$
\begin{multline}
Z_{p,h_c^{\a}}^{\a} 
  \geq  C_1^{-1}
\sum_{i=0}^{\floor{p/2} } \sum_{j=\floor{p/2}+1}^{p} Z_{i,h_c^{\a}}^{\a} \K(j-i)Z_{p-j,h_c^{\a}}^{\a} \\
   \geq  C_1^{-1} \left(\sum_{i=0}^{n_0} \sum_{j=p-n_0}^{p} Z_{i,h_c^{\a}}^{\a} Z_{p-j,h_c^{\a}}^{\a} \right)
                \min_{n\leq p} \K(n)
             \geq  C_1^{-1} A^2 \varphi(n_0)^{-2} n_0^{2\ga}  \min_{n\leq 4n_0} \K(n),
\end{multline}
where we restricted the sum to $i$ and $p-j$ smaller than $n_0$,
to be able to use the inequality \eqref{eq:sumZnc}.
On the other hand, with the assumption that $\K(n)\sim \varphi(n) n^{-(1+\ga)}$,
there exists a constant $c>0$ (not depending on $n_0$)
such that one has that $\min_{n\leq 4n_0} \K(n) \geq c\varphi(n_0) n_0^{-(1+\ga)}$.
And thus for any $2 n_0 \leq p \leq 4 n_0$ one has that
$$ Z_{p,h_c^{\a}}^{\a} \geq c' A^2 \varphi(n_0)^{-1} n_0^{\ga-1}.$$
 Then, summing over $p$, we get an inequality similar to \eqref{eq:sumZnc}:
\begin{equation}
\sum_{p=0}^{4n_0} Z_{p,h_c^{\a}}^{\a} \geq \sum_{p=2n_0}^{4n_0} Z_{p,h_c^{\a}}^{\a} \geq c'' A^2 \varphi(n_0)^{-1} n_0^{\ga} =: \bar{c}A^2 \varphi(4n_0)^{-1} (4n_0)^{\ga}.
\end{equation}
Now, we are able to repeat this argument with $n_0$ replaced with $4n_0$ and
$A$ with $\bar{c}A^2$.
By induction, we finally have for any $k\geq 0$
\begin{equation}
 \sum_{n=0}^{4^k n_0} Z_{n,h_c^{\a}}^{\a} \geq  (\bar{c})^{2^k-1}A^{2^k} \varphi(4^k n_0)^{-1} (4^k n_0)^{\ga}.
\label{eq:sumZnc2}
\end{equation}
To find a contradiction, we choose $A>(\bar{c})^{-1}$, so that $(\bar{c})^{2^k-1}A^{2^k}\geq \gga^{2^k}$
with $\gga>1$. Now, we can choose $k\in\N$ such that $\gga^{2^k} \varphi(4^k n_0)^{-1} (4^k n_0)^{\ga-1}\geq 2C_1$
($C_1$ being the constant in Claim \ref{claim:Zhcpetit}).
Thanks to \eqref{eq:sumZnc2}, we get that at least one of the terms
$Z_{n,h_c^{\a}}^{\a}$ for $ n\leq 4^k n_0$ is bigger than $(4^k n_0)^{\ga-1} \gga^{2^k}\geq 2C_1$,
which contradicts the Claim \ref{claim:Zhcpetit}.
\end{proof}

\textbf{Lower Bound.} We use the following Lemma
\begin{lemma}
If $\alpha\in(0,1)$, there exists some $\eta>0$, such that
if for some $n_0\geq 1$ one has
\begin{equation}
\sum_{i=0}^{n_0} Z_{i,h}^{\a} \sum_{j=n_0}^{\infty}  \K(j-i) \leq \eta \indent 
  \text{ and } \indent \sum_{i=0}^{n_0} Z_{i,h}^{\a} \leq \eta \varphi(n_0)^{-1}  n_0^{\ga} ,
\end{equation} 
then $\tf^{\a}(\gb,h)=0$.
\label{lem:finite_deloc}
\end{lemma}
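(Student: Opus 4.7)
My plan is to apply the upper bound in the quasi-renewal property \eqref{quasirenew2} with the choice $M=n_0$; this will produce a recursive inequality showing that $Z_{N,h}^{\a}$ stays uniformly bounded in $N$, which immediately forces $\tf^{\a}(\gb,h)=0$.

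Writing $W_N := Z_{N,h}^{\a}$ and $f(N) := \max_{0 \leq n \leq N} W_n$, I would first note that for $N>n_0$ the bound \eqref{quasirenew2} with $M=n_0$ reads
\[
W_N \leq C_1 \sum_{i=0}^{n_0} W_i \sum_{j=n_0+1}^{N} \K(j-i)\, W_{N-j},
\]
with $C_1=e^{c\gb^2+1}$ as in Section \ref{sec:prelimann}. The key observation is that $i\leq n_0$ and $j\geq n_0+1$ force $N-j \leq N-n_0-1$, so after bounding $W_{N-j} \leq f(N-n_0-1)$ and relabeling $\ell=j-i$ one obtains
\[
W_N \leq C_1\, f(N-n_0-1)\, \sum_{i=0}^{n_0} W_i \sum_{\ell \geq n_0+1-i} \K(\ell) \leq C_1\, \eta\, f(N-n_0-1),
\]
the last inequality following from the first hypothesis of the lemma (using the trivial $\sum_{\ell \geq n_0+1-i}\K(\ell) \leq \sum_{\ell \geq n_0-i}\K(\ell)$).

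Next I would choose $\eta$ so small that $C_1\eta<1$. Since $f$ is nondecreasing, the previous display gives $W_N \leq C_1\eta\, f(N-1) < f(N-1)$ for every $N>n_0$, hence $f(N)=f(N-1)$. Thus $f$ is constant on $[n_0,\infty)$, i.e.\ $W_N \leq f(n_0)$ for all $N$, and the second hypothesis refines this to the quantitative bound $W_N \leq \eta\,\varphi(n_0)^{-1}n_0^{\ga}$. In particular $(W_N)_N$ is uniformly bounded, so $\tf^{\a}(\gb,h)=\lim \frac{1}{N}\log W_N \leq 0$; combined with the non-negativity of the annealed free energy from Proposition \ref{prop:anncorrel}, this gives $\tf^{\a}(\gb,h)=0$.

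The main point requiring care is the range of $h$ for which the uniform bound \eqref{quasirenew2} holds with the constant $C_1=e^{c\gb^2+1}$: this was stated only for $u=h-h_c^{\a}(\gb)\in[-1,1]$. Since the desired conclusion $\tf^{\a}=0$ is interesting only for $h\leq h_c^{\a}$, and the hypotheses of the lemma cannot hold for $h$ much larger than $h_c^{\a}$ (where $W_N$ already grows exponentially), the relevant regime lies within the scope of \eqref{quasirenew2}; if necessary one absorbs an extra factor $e^{|u|}$ into $C_1$ at the cost of further tightening the smallness requirement on $\eta$.
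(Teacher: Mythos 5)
Your proof is correct, and it follows a genuinely different and considerably cleaner route than the one the paper indicates. The paper does not write out a proof: it simply refers to \cite[Lemma 5.2]{GLT08} (which treats the quenched case at $\ga=1/2$), invokes a coarse-graining decomposition of $[0,N]$ into blocks of size $n_0$, and asserts that the quasi-renewal property makes the adaptation to the annealed model and to general $\ga\in(0,1)$ straightforward. You instead apply the upper bound of \eqref{quasirenew2} once, at $M=n_0$, obtain the contraction $W_N\leq C_1\eta\,f(N-n_0-1)$ for $N>n_0$ with $f(N)=\max_{0\leq n\leq N}W_n$, and then the elementary observation that $C_1\eta<1$ forces $f$ to be constant beyond $n_0$; this gives uniform boundedness of $W_N$ and hence $\tf^{\a}=0$ without any iteration over blocks. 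What the coarse-graining buys in GLT08 is robustness in settings without a renewal-type recursion (the quenched model); what your argument buys here is a one-step proof exploiting the quasi-renewal structure that the paper has already established. Your proof in fact shows the stronger statement that only the first hypothesis $\sum_{i=0}^{n_0}Z_{i,h}^{\a}\sum_{j\geq n_0}\K(j-i)\leq\eta$ is needed for $\tf^{\a}=0$ (the second hypothesis merely gives a quantitative bound on $f(n_0)$ and plays no role in the delocalization conclusion). Your caveat about the $h$-dependence of the constant in \eqref{quasirenew2} is the right thing to worry about, though the resolution is slightly cleaner than you state: only the \emph{upper} bound of \eqref{quasirenew2} is used, and its constant $e^{c\gb^2}e^{u}$ is bounded by $C_1=e^{c\gb^2+1}$ for all $u\leq 1$ (not merely $u\in[-1,1]$), so the argument applies for every $h\leq h_c^{\a}+1$, which covers the paper's use of the lemma at $h=h_c^{\a}+\gep$ with $\gep$ small.
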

This Lemma comes easily from \cite[Lemma 5.2]{GLT08} where the case $\ga=1/2$ was considered, and gives a finite-size criterion for delocalization. It comes from cutting the system into blocks of size $n_0$, and then using a coarse-graining argument in order to reduce the analysis to finite-size estimates (on segments of size $\leq n_0$).
It is therefore not difficult to extend it to every $\ga \in (0,1)$, in particular thanks to the quasi-renewal property \eqref{quasirenew2}-\eqref{quasirenew}, that allows us to proceed to the coarse-graining decomposition of the system.

From this Lemma, one deduces that at $h=h_c^{\a}$, for all $n\in\N$ one has
\begin{eqnarray}
   & \sum_{i=1}^{n} \sum_{j=n}^{\infty} Z_{i,h_c^{\a}}^{\a} \K(j-i) \geq \frac{\eta}{2} &\label{eq:condupper1}\\
 & \text{or }\sum_{i=1}^{n} Z_{i,h_c^{\a}}^{\a} \geq \frac{\eta}{2} \varphi(n)^{-1} n^{\ga}.&  \label{eq:condupper2}
\end{eqnarray} 
Indeed, otherwise, one could find some $n_0\geq 0$ such that both of these assumptions fail,
and then one picks some $\gep>0$ such that $Z_{n_0,h_c^{\a}+\gep}^{\a}$ verifies the conditions of
Lemma \ref{lem:finite_deloc}, so that $\tf^{\a}(\gb,\hca+\gep)=0$.
This contradicts the definition of $h_c^{\a}$.

\smallskip
We now try to deduce directly the behavior of $\hat Z_{h_c^{\a}}^{\a}(\lambda)$ from
\eqref{eq:condupper1}-\eqref{eq:condupper2} (it turns out to be easier). We define the sets
\begin{equation}
 \begin{split}
  E_1 &:= \left\{ n\geq 0,  \text{ such that } \eqref{eq:condupper1}\text{ holds} \right\}, \\
  E_2 &:= \left\{ n\geq 0,  \text{ such that } \eqref{eq:condupper2}\text{ holds} \right\}.
 \end{split}
\end{equation} 
Thanks to \eqref{eq:condupper1}-\eqref{eq:condupper2}, one knows that, for every $k\in\bbN$, either $|E_1\cap[0,k]|\geq k/2$ or $|E_2\cap[0,k]|\geq k/2$.
Let us fix $\lambda\in(0,1)$, and $k_{\lambda}:=\floor{1/\gl}$.

\smallskip
{(1) If $|E_1\cap[0,k_{\lambda}]|\geq k_{\lambda}/2$.}
For $\gl>0$, we define $f(\gl)=\sum_{n=0}^{\infty} e^{-\gl n} \varphi(n)(n+1)^{-\ga}$.
We know that $f(\gl)\stackrel{\gl\downarrow 0}{\sim} cst. \varphi(1/\gl) \gl^{\ga-1}$ thanks to \cite[Th.1.7.1]{Bingham}.
Then, using the assumption on $\K(\cdot)$ to find some constant $c>0$
such that for all $i\leq n$ one has $\sum_{j=n}^{\infty} \K(j-i) \leq c\varphi(n-i) (n+1-i)^{-\ga}$, one gets
\begin{multline}
 \hat Z_{h_c^{\a}}(\gl) f(\gl) = 
  \sum_{n=0}^{\infty} e^{-\gl n} \sum_{i=1}^n Z_{i,h_c^{\a}}^{\a} \varphi(n-i)(n+1-i)^{-\ga} 
   \geq \sum_{n=0}^{\infty} c^{-1}e^{-\gl n} \sum_{i=1}^n Z_{i,h_c^{\a}}^{\a} \sum_{j=n}^{\infty} \K(j-i)  \\
\geq c^{-1}\eta/2 \sum_{n\in\E_1} e^{-\gl n}  \geq  c^{-1} e^{-1}\eta/2\left| E_1 \cap [0,k_{\lambda}]  \right| \geq c^{-1} e^{-1} \frac{\eta}{4}k_{\lambda}  ,
\end{multline} 
where in the second inequality we used the definition of $E_1$, and then we cut the sum at $k_{\lambda}$.
Thus we get from our estimate on $f(\gl)$, that there exists a constant $c$, so that for any fixed $\gl\in(0, 1)$ (recall that $k_{\lambda}=\floor{1/\gl}$), if $|E_1\cap[0,k_{\lambda}]|\geq k_{\lambda}/2$, then
\begin{equation}
 \hat Z_{h_c^{\a}} \geq c\,  \varphi(1/\gl)^{-1} \gl^{-\ga}.
\label{eq:hZlow1}
\end{equation}
\smallskip

{(2) If $|E_2\cap[0,k_{\lambda}]|\geq k_{\lambda}/2$.}
Then, using the definition of $E_2$ and the notation $k_2:=\max (E_2\cap \{1,\ldots, \floor{1/\gl}\})$, one has
\begin{equation}
 \hat Z_{h_c^{\a}}(\gl) \geq
e^{-1}\sum_{i=0}^{\max(E_2\cap \{1,\ldots, \floor{1/\gl}\})} Z_{i,h_c^{\a}}^{\a} \geq e^{-1}\frac{\eta}{2} \varphi(k_2)^{-1} k_2^{\ga}.
\end{equation}
Note that $k_2\in [k_{\gl}/2,k_{\gl}]$ if $|E_2\cap[0,k_{\lambda}]|\geq k_{\lambda}/2$. Therefore, there exists a constant $c'$, so that for any fixed $\gl\in(0, 1)$ (recall that $k_{\lambda}=\floor{1/\gl}$), if $|E_2\cap[0,k_{\lambda}]|\geq k_{\lambda}/2$, then
\begin{equation}
 \hat Z_{h_c^{\a}} \geq c' \, \varphi(1/\gl)^{-1} \gl^{-\ga}.
\label{eq:hZlow2}
\end{equation}

Then, combining \eqref{eq:hZlow1} and \eqref{eq:hZlow2},
we get our  $\hat Z_{h_c^{\a}}^{\a}(\gl) \geq \min(c,c')\, \varphi(1/\gl)^{-1} \gl^{-\ga}$ for all $\gl\in(0, 1)$. 
\qed

\subsubsection{Improvement of Proposition \ref{prop:laplace} in the case $\ga>1$.}
In this case, we can estimate $Z_{N,h_c^{\a}}^{\a}$ more precisely, and estimate not only
the Laplace transform of $Z_{N,\hca}^{\a}$ (cf. Proposition \ref{prop:laplace}), but $Z_{N,\hca}^{\a}$ itself, similarly to \cite[Proposition 3.2]{BThier}.

\begin{proposition} Let $\ga>1$.
Assume that the quasi-renewal property \eqref{quasirenew2}-\eqref{quasirenew} holds.
Then there exists a constants $c_6$ such that, for
any $N\geq2$ and any sequence of indexes $1\leq i_1\leq i_2\leq \ldots \leq i_m = N$ with $m\geq 1$, we have
\begin{equation}
 (c_6^{-1})^{m} \bE(\gd_{i_1} \ldots \gd_{i_m}) \leq \bE\left[ \gd_{i_1}\ldots \gd_{i_m}e^{H_{N,\hca}^a}\right]
  \leq (c_6)^{m} \bE(\gd_{i_1} \ldots \gd_{i_m}).
\end{equation}
In particular, if $m=1$ one has that $c_6^{-1} \bP(N\in\tau) \leq Z_{N,h_c^{\a}}^{\a} \leq c_6 \bP(N\in\tau)$. 
\label{prop:Pncbis}
\end{proposition}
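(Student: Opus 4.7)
The plan is to decouple the statement into two issues: a reduction of the multi-index bound to the case $m=1$, and the matching two-sided estimate $Z_{N,h_c^{\a}}^{\a} \asymp \bP(N\in\tau)$, which carries the whole difficulty of the proposition.

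\textbf{Reduction to $m=1$.} Write $f_N := Z_{N,h_c^{\a}}^{\a}$. At $h=h_c^{\a}$ we have $u=0$, so the quasi-renewal estimate \eqref{quasirenew} reads, for every $0=i_0<i_1<\ldots<i_m=N$,
\begin{equation*}
C_1^{-m}\prod_{k=1}^m f_{i_k-i_{k-1}} \;\leq\; \bE\bigl[\gd_{i_1}\ldots\gd_{i_m}e^{H_{N,h_c^{\a}}^{\a}}\bigr] \;\leq\; C_1^m \prod_{k=1}^m f_{i_k-i_{k-1}}.
\end{equation*}
On the other hand the Markov renewal property of $\tau$ under $\bP$ gives $\bE[\gd_{i_1}\ldots\gd_{i_m}]=\prod_k \bP(i_k-i_{k-1}\in\tau)$. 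Hence, once the $m=1$ estimate $c^{-1}\bP(n\in\tau)\leq f_n\leq c\bP(n\in\tau)$ is established for every $n\geq 1$, multiplying block-by-block yields the proposition with $c_6:=cC_1$.

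\textbf{The case $m=1$.} The upper bound is short: Claim \ref{claim:Zhcpetit} gives $f_N\leq C_1$ for all $N$, and since $\ga>1$ the renewal theorem gives $\bP(N\in\tau)\to 1/\bE[\tau_1]>0$; together with $\bP(n\in\tau)\geq\K(n)>0$ for each finite $n$, this yields $c_-:=\inf_{n\geq 1}\bP(n\in\tau)>0$, whence $f_N\leq (C_1/c_-)\bP(N\in\tau)$. For the matching lower bound I would combine three ingredients: (i) the super-multiplicativity $f_{M+N}\geq C_1^{-1}f_M f_N$ coming from \eqref{eq:surmult}; (ii) the summed lower bound $\sum_{n=0}^N f_n \geq c_4^{-1}N$, which is the $\ga>1$ content of Proposition \ref{prop:laplace} and of \eqref{sumstobeproved}; (iii) the pointwise upper bound $f_n\leq C_1$. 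From (ii) and (iii) a Chebyshev-type argument shows that $A:=\{n\geq 1 : f_n\geq c_*\}$ has positive density in $\bbN$ for some $c_*>0$. Then (i) spreads positivity: for any $n\in A$ and any $M$, $f_{M+n}\geq (C_1^{-1}c_*)f_M$. A contradiction argument---supposing $f_{N_j}\to 0$ along a subsequence, super-multiplicativity applied with some $n\in A$ close to $N_j$ forces $f$ to be small on a macroscopic window on the left of $N_j$, violating (ii)---yields $\inf_{N\geq 1}f_N>0$, and hence $f_N\geq c'\bP(N\in\tau)$ since $\bP(N\in\tau)\leq 1$.

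\textbf{Where the difficulty sits.} The delicate point is the pointwise lower bound on $f_N$. Super-multiplicativity and Fekete's lemma give only the upper estimate $f_N\leq C_1$ (consistent with $\tf^{\a}(\gb,h_c^{\a})=0$), and the naive approach of bounding the correlation term by $\gb^2\sum_k|\rho_k|\cdot|\tau\cap[1,N]|$ and reducing to a homogeneous partition function fails: in view of the fine-tuned asymptotics \eqref{asymppointann} of $h_c^{\a}$, this crude bound would push the effective homogeneous field strictly below $0$ and produce an exponentially smaller lower bound. The only real device is therefore the interplay of super-multiplicativity with the summed estimate from Proposition \ref{prop:laplace}; making them cooperate to produce a uniform pointwise lower bound, without quantitative losses, is the main technical step.
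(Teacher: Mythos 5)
Your reduction to $m=1$ via \eqref{quasirenew} and the independence of the pure renewal is exactly the paper's, and your upper bound in the $m=1$ case (Claim \ref{claim:Zhcpetit} plus the renewal theorem for $\ga>1$) matches as well. The difficulty is indeed concentrated in the pointwise lower bound, and there your route diverges from the paper's and, as sketched, does not close.

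Two concrete issues. First, a circularity: you invoke the summed lower bound $\sum_{n=0}^N Z_{n,\hca}^{\a}\geq c_4^{-1}N$ as an available input from Proposition \ref{prop:laplace} / \eqref{sumstobeproved}. But the paper proves \eqref{sumstobeproved} and Lemma \ref{lem:finite_deloc} only for $\ga\in(0,1)$ and explicitly treats $\ga>1$ by proving Proposition \ref{prop:Pncbis} directly ("Proposition \ref{prop:Pncbis} gives a better result in the case $\ga>1$"); the $\ga>1$ Laplace estimate is then a \emph{consequence} of the proposition, not a tool for it. To use your strategy one would first have to extend Lemma \ref{lem:finite_deloc} to $\ga>1$ with the linear scaling $\eta n_0$, which is plausible but not a given. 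Second, the Chebyshev-plus-supermultiplicativity contradiction does not quantitatively close. Writing $A=\{n\leq N_j: f_n\geq c_*\}$, the bounds $\sum f_n\geq c^{-1}N_j$ and $f_n\leq C_1$ give only $|A|\geq \rho N_j$ with $\rho=\frac{c^{-1}-c_*}{C_1-c_*}$, which need not exceed $1/2$. Supermultiplicativity forces $f$ to be small on $N_j-A$ once $f_{N_j}$ is small, so $A$ and $N_j-A$ are disjoint; but two disjoint subsets of $[1,N_j]$ of density $\rho\leq 1/2$ coexist without contradiction, and the estimate $f_{M+n}\geq C_1^{-1}c_*\,f_M$ only caps the per-step decay, it does not produce $\inf f>0$.

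The paper's device is different and worth recording: Claim \ref{claim:Znc} asserts that for well-chosen $\gep,A$ there is always some $n_1\in[N-A,N]$ with $Z_{n_1,\hca}^{\a}\geq\gep$, and then \eqref{quasirenew2} with $M=N-1$ yields $Z_{N,\hca}^{\a}\geq C' \gep \min_{i\leq A}\K(i)$. The claim itself is proved by contradiction through a \emph{propagation} argument: if $f_k<\gep$ on a whole window $[n_0-A,n_0]$, one nudges $h$ slightly above $\hca$ keeping $Z_{n,h}^{\a}\leq 2C_1$ for $n\leq n_0$ and $Z_{k,h}^{\a}\leq 2\gep$ on the window, and then uses the quasi-renewal inequality \eqref{quasirenew2} twice (once with $M=n_0$, once with $M=\floor{p/2}$) together with the smallness of tails $\sum_{n\geq A}n\K(n)$ (here $\ga>1$ is used) to show these two bounds survive when the system size is doubled. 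This propagates to all scales, forcing $Z_{N,h}^{\a}\leq 2C_1$ for all $N$ and hence $\tf^{\a}(\gb,h)=0$ with $h>\hca$, a contradiction. This perturb-and-propagate scheme is the genuinely new ingredient missing from your sketch; an averaging argument alone does not suffice.
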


This Proposition tells that the annealed polymer measure at the critical point is ``close'' to the renewal measure $\bP$, so that the behavior of the annealed model is very close to the one of the homogeneous model. %In Proposition \ref{prop:laplace} we only had the behavior of the Laplace transform of the sequence $(Z_{n,h_c^{\a}}^{\a})_{n\in\N}$, which lead to control the sum $\sum_{0<i_1<\ldots<i_k=N} \bE\left[ \gd_{i_1}\ldots \gd_{i_m}  e^{H_{n,\hca}^a}\right]$. In the case $\ga>1$, we therefore control every term of this sum.

\smallskip
We have $\bE(\gd_{i_1} \ldots \gd_{i_m}) = \prod_{k=1}^m\bP(i_k-i_{k-1} \in\tau),$
so that recalling \eqref{eq:boundpnc}, we only have to compare $Z_{n,h_c^{\a}}^{\a}$ with $\bP(n\in\tau)$. It is therefore sufficient  to prove that there exists a constant $c$ such that ${c^{-1} {\bP(N\in\tau)} \leq Z_{N,h_c^{\a}}^{\a} }\allowbreak \leq  c \bP(N\in\tau)$.
But for $\ga>1$, we have $\lim_{N\to\infty} \bP(N\in\tau)= \bE[\tau_1]^{-1}$,  so that we only have to show that $Z_{N,h_c^{\a}}^{\a}$ is bounded
away from $0$ and $+\infty$,
which is provided by the following lemma.
\begin{lemma}
If \eqref{quasirenew2}-\eqref{quasirenew} hold,
and if $\ga>1$, there exists a constant $C_2>0$, such that for all $N\geq 0$
\begin{equation}
  C_2^{-1} \leq Z_{N,h_c^{\a}}^{\a} \leq C_2
\end{equation}
\label{lem:Znc}
\end{lemma}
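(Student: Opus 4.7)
The upper bound $Z_{N, h_c^{\a}}^{\a} \leq C_2$ is immediate from Claim \ref{claim:Zhcpetit}, which gives $Z_{N, h_c^{\a}}^{\a} \leq C_1$ uniformly in $N$: it suffices to take $C_2 \geq C_1$. The remainder of the argument treats the pointwise lower bound $Z_{N, h_c^{\a}}^{\a} \geq C_2^{-1}$ in two stages: first an averaged linear bound on partial sums, then an upgrade to a uniform pointwise bound.

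\textbf{Stage 1 (linear partial sum bound).} I claim that there exists $c>0$ with $\sum_{n=0}^N Z_{n,h_c^{\a}}^{\a} \geq cN$ for all $N\in\bbN$. This is the $\alpha>1$ analogue of Lemma \ref{lem:sumZnc} and is obtained by mimicking the proof of the lower bound in Proposition \ref{prop:laplace}. The starting point is an $\alpha>1$ version of the finite-size delocalization criterion of Lemma \ref{lem:finite_deloc}: if $\sum_{i=0}^{n_0} Z_{i,h}^{\a} \sum_{j\geq n_0} \K(j-i) \leq \eta$ and $\sum_{i=0}^{n_0} Z_{i,h}^{\a} \leq \eta\, n_0$, then $\tf^{\a}(\gb,h)=0$. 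Its proof is the same coarse-graining in blocks of size $n_0$ as in \cite[Lemma 5.2]{GLT08}, with the quasi-renewal inequalities \eqref{quasirenew2}-\eqref{quasirenew} replacing the exact renewal decomposition that is available in the IID case. At $h=h_c^{\a}$ this criterion must fail at every scale, so for every $n$ at least one of the conditions $\sum_{i=0}^n Z_i^{\a} \sum_{j\geq n}\K(j-i)\geq \eta/2$ or $\sum_{i=0}^n Z_i^{\a}\geq \eta n/2$ holds. A pigeonhole argument over $[0,k]$, combined with the decisive fact that $\sum_{k\geq 0} \sum_{j\geq k}\K(j) = \bE[\tau_1] <+\infty$ when $\alpha>1$, then produces the linear lower bound by exactly mirroring Cases (1) and (2) of the proof of Proposition \ref{prop:laplace}.

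\textbf{Stage 2 (from partial sums to a pointwise bound).} Coupling $\sum_{n\leq N} Z_n^{\a} \geq cN$ with $Z_n^{\a} \leq C_1$ shows that the ``good'' set $G:=\{n\geq 0 : Z_n^{\a} \geq c/(2C_1)\}$ has positive lower density, and moreover $0\in G$ since $Z_0 = 1$. The lower bound in \eqref{quasirenew2}, applied with $i=n$ and $j=n+1$, yields $Z_{n+1}^{\a} \geq C_1^{-1}\K(1)\, Z_n^{\a}$, so the enlarged set $G' := G\cup (G+1)$ satisfies $G' \subseteq \{m: Z_m^{\a} \geq c'\}$ with $c':=C_1^{-1}\K(1)\cdot c/(2C_1)$. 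Since $G'$ contains both $0$ and $1$, it meets every initial interval $\{1,\ldots,N\}$, and it retains the positive lower density of $G$; consequently its Shnirelmann density $\sigma(G')$ is strictly positive. By the Schnirelmann-Mann inequality, for $k_0 := \lceil 1/\sigma(G') \rceil$ every positive integer $N$ admits a decomposition $N = n_1+\cdots+n_{k_0}$ with $n_i \in G'$. Iterating super-multiplicativity \eqref{eq:surmult} then gives
\begin{equation*}
Z_{N,h_c^{\a}}^{\a} \;\geq\; C_1^{-(k_0-1)} \prod_{i=1}^{k_0} Z_{n_i,h_c^{\a}}^{\a} \;\geq\; C_1^{-(k_0-1)} (c')^{k_0} \;=:\; C_2^{-1} \;>\; 0,
\end{equation*}
uniformly in $N\geq 1$, with the value $Z_0=1$ covering the case $N=0$.

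\textbf{Main obstacle.} The core difficulty is Stage 2: upgrading the averaged bound $\sum Z_n \geq cN$ into a uniform pointwise bound, since a set of positive lower density may have arbitrarily long gaps and a naive ``nearest good point'' argument fails. The resolution leverages two structural inputs: the one-step renewal bound $Z_{n+1}^{\a}\geq C_1^{-1}\K(1)Z_n^{\a}$, which forces $G'$ to contain consecutive integers (and hence coprime elements), and iterated super-multiplicativity \eqref{eq:surmult}, which turns an additive-combinatorial decomposition of $N$ with a \emph{bounded} number of summands into a positive lower bound for $Z_N^{\a}$; the bound on the number of summands is delivered by Schnirelmann-Mann. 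A secondary technical issue is Stage 1's extension of the finite-size delocalization criterion of \cite{GLT08} from the IID case to the correlated annealed setting, which must be carried out using the quasi-renewal inequalities \eqref{quasirenew2}-\eqref{quasirenew} in place of an exact renewal decomposition.
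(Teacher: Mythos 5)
Your upper bound is the paper's own (Claim~\ref{claim:Zhcpetit}). For the lower bound you take a genuinely different route from the paper. The paper proves Claim~\ref{claim:Znc}: for fixed small $\gep$ and large $A$, every window $[N-A,N]$ contains a site $n_1$ with $Z_{n_1,\hca}^{\a}\geq\gep$; the proof is by contradiction, perturbing $h$ slightly above $\hca$ and propagating the bounds $Z_n^{\a}\leq 2C_1$ and $Z_k^{\a}\leq 2\gep$ (on the top window) from scale $n_0$ to $2n_0$ via \eqref{quasirenew2}, which forces $\tf^{\a}(\gb,h)=0$ for some $h>\hca$, a contradiction. A single further application of \eqref{quasirenew2} with $M=N-1$ then turns this ``quasi-pointwise'' statement into the pointwise lower bound. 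Your Stage~2 replaces the window-propagation by an additive-combinatorial argument: the one-step bound $Z_{n+1}^{\a}\geq C_1^{-1}\K(1)Z_n^{\a}$ puts $\{0,1\}$ in the good set $G'$; together with the positive lower density of $G$ (from Stage~1 and $Z_n^{\a}\leq C_1$) this yields positive Schnirelmann density, and the Schnirelmann--Mann theorem plus iterated quasi super-multiplicativity \eqref{eq:surmult} close the argument. I have checked the density computation, the use of $Z_0=1$, and the Mann step; that part is correct and is an appealing alternative to the paper's scheme.

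The weak point is Stage~1. The linear lower bound $\sum_{n\leq N}Z_{n,\hca}^{\a}\geq cN$ is not established in the paper for $\ga>1$: \eqref{sumstobeproved} is proved there only for $\ga\in(0,1)$, and the paper explicitly declines to run the Laplace-transform route at $\ga>1$, opting instead for the direct Claim~\ref{claim:Znc} (see the sentence just before the statement of Lemma~\ref{lem:sumZnc}). Your asserted $\ga>1$ analogue of the finite-size delocalization criterion Lemma~\ref{lem:finite_deloc} is plausible, and the pigeonhole mechanism exploiting $\sum_{m\geq 0}\bP(\tau_1\geq m)=1+\bE[\tau_1]<\infty$ does mirror Cases~(1) and~(2) of the proof of Proposition~\ref{prop:laplace}; but the criterion itself needs a coarse-graining argument in the positive-recurrent regime $\ga>1$, which is a nontrivial departure from the null-recurrent case $\ga=1/2$ treated in \cite{GLT08}, and is not ``the same'' proof. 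So your argument is sound in structure and Stage~2 is a genuine improvement in elegance over the paper's propagation, but the averaged input it is built on is left without proof --- exactly the gap the paper's self-contained Claim~\ref{claim:Znc} was designed to avoid.
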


\begin{proof}
The upper bound is already given by Claim \ref{claim:Zhcpetit}, thanks to quasi super-multiplicativity. For the other bound, we show the following claim.

\begin{claim}
If \eqref{quasirenew2}-\eqref{quasirenew} hold, and if $\ga>1$,
let $\gep>0$ (small) and $A>0$ (large) be fixed according to the conditions
\eqref{eq:condAeps1}-\eqref{eq:condAeps2} below.
Then for every $N\geq 0$, there exists some $n_1\in [N-A,N]$ such that
$Z_{n_1,h_c^{\a}}^{\a}\geq \gep$.
\label{claim:Znc}
\end{claim}
From this Claim and inequality \eqref{quasirenew2} with the choice $M=N-1$, we have
\begin{equation}
\Znc \geq C_1^{-1} \sum_{n=0}^{N-1} Z_{n,h_c^{\a}}^{\a}  \K(N-n) e^{\gb^2/2+h_c^{\a}}
  \geq C' Z_{n_1,h_c^{\a}}^{\a} \K(N-n_1),
\end{equation}
where we only kept the term $n=n_1$ in the sum,
$n_1$ being given by the Claim \ref{claim:Znc}.
We get that for every $N\geq 0$,
\begin{equation}
  \Znc \geq \gep  C'\, \big( \min_{i\leq A} \K(i) \big)e^{\gb^2/2+h_c^{\a}} =: C_2^{-1},
\end{equation}
which ends the proof of Lemma \ref{lem:Znc}.
\end{proof}

Now, we prove the Claim \ref{claim:Znc} by contradiction. The idea
is to prove that if the claim were not true, we can increase a bit the parameter $h$ and still
be in the delocalized phase.

\begin{proof}[Proof of Claim \ref{claim:Znc}]
Let us suppose that the claim is not true.
Then
we can find some $n_0$, such that for any $k\in[n_0-A,n_0]$ one has $Z_{k,h_c^{\a}}^{\a}\leq \gep$.
The integer $n_0$ being fixed,
we choose some $h>h_c^{\a}$ close enough to $ \hca$ such that for this $n_0$,
we have (recall $Z_{n,h_c^{\a}}^{\a}\leq C_1$)
\begin{eqnarray}
     & Z_{n,h}^{\a}\leq 2C_1 & \text{ for all }\ n\leq n_0 , \label{cond1:propag} \\
 \text{and} & Z_{k,h}^{\a}\leq 2\gep & \text{ for all }\ k\in[n_0-A,n_0]  \label{cond2:propag}.
\end{eqnarray}
We will now see
that the properties \eqref{cond1:propag}-\eqref{cond2:propag} are kept
when we consider bigger systems: we show that we have
$Z_{n,h}^{\a}\leq 2 C_1$ for all $n\leq 2n_0$ , and
$Z_{k,h}^{\a}\leq 2\gep$ for all $k\in[2n_0-A,2n_0]$.
By induction one therefore gets that $Z_{N,h}^{\a}\leq 2  C_1$ for all $N$, such that
$\tf^{\a}(\gb,h)=0$, which gives a contradiction with the definition of $\hca$. 

\medskip
\textbullet\ We first start to show that for any $p\in[n_0+1, 2n_0]$, one has $Z_{p,h}^{\a}\leq 2 C_1$ .
We use the r.h.s. inequality of \eqref{quasirenew2} with $M=n_0$, and we divide the sum into two parts:
\begin{multline}
 Z_{p,h}^{\a}   \leq  C_1 \sum_{i=n_0-A}^{n_0} \sum_{j=n_0+1}^{p} Z_{i,h}^{\a} \K(j-i) Z_{p-j,h}^{\a} +
    C_1 \sum_{i=0}^{n_0-A-1} \sum_{j=n_0+1}^{p} Z_{i,h}^{\a} \K(j-i) Z_{p-j,h}^{\a} \\
   \leq   4\gep C_1^2 \sum_{n\geq1}n\K(n) + 4C_1^{3} \sum_{n\geq A} n\K(n) ,
\end{multline}
where we used the properties \eqref{cond1:propag}-\eqref{cond2:propag},
and the fact that $\K(j-i)$ appears at most $j-i$ times.
Thus we have $Z_{p,h}\leq 2C_1$ for $p\in[n_0+1, 2n_0]$ provided that
\begin{equation}
    \gep\leq (4C_1\bE[\tau_1])^{-1} \indent \text{ and } \indent \sum_{n\geq A} n\K(n) \leq (4C_1^{2})^{-1} ,
\label{eq:condAeps1}
\end{equation}
and we have the property \eqref{cond1:propag} with $n_0$ replaced by $2n_0$. 

\medskip

\textbullet\ We now show that $Z_{p,h}^{\a}\leq 2\gep$ for all $p\in[2n_0-A, 2n_0]$.
Again, we use the r.h.s. inequality of \eqref{quasirenew2} with $M=\lfloor p/2 \rfloor$, and
the properties \eqref{cond1:propag}-\eqref{cond2:propag} to get
\begin{multline}
 Z_{p,h}^{\a}   \leq  C_1 \sum_{i=\floor{p/2}-A/2 }^{\floor{p/2}} 
              \sum_{j=\floor{p/2}+1}^{\floor{p/2}+A/2} Z_{i,h}^{\a} \K(j-i) Z_{p-j,h}^{\a} +
                   C_1 \sumtwo{i<\floor{p/2}-A/2}{or \ j>\floor{p/2}+A/2} Z_{i,h}^{\a} \K(j-i) Z_{p-j,h}^{\a} \\
   \leq  4\gep^2 C_1 \sum_{n\geq1}n\K(n)  +  4 C_1^{2}\sum_{n\geq A/2} n\K(n),
\end{multline}
where we also used that we have $i,p-j \in [n_0-A,n_0]$ in the first sum
(since $p\in{[2n_0-A,} 2n_0]$), and $j-i\geq A/2$
in the second sum.
Thus we have $Z_{p,h}\leq 2\gep$ for $p\in[2n_0-A, 2n_0]$ provided that
\begin{equation}
 \gep\leq (4C_1\bE[\tau_1])^{-1}  \indent \text{ and } \indent \sum_{n\geq A/2} n\K(n) \leq (4C_1^2)^{-1} \gep ,
\label{eq:condAeps2}
\end{equation}
and we have the property \eqref{cond2:propag} with $n_0$ replaced by $2n_0$. 
\end{proof}

Claim \ref{claim:Znc}
controls directly the partition function,
instead of its Laplace transform as in Proposition \ref{prop:laplace}.
We emphasize that this improvement can be very useful, because it 
allows us to compare
$\Znc\Enc[\gd_i]$ with $\bP(i\in\tau)$,
analogously with \cite[Proposition 3.2]{BThier}.
For example an easy computation (expanding the exponential) gives that
\begin{equation}
\bE\left[ e^{ c_2 u \sum_{n=1}^{N} \gd_n } \ind_{\{N\in\tau\}} \right] \leq
 \Zna  =  \bE\left[ \exp\left( u \sum_{n=1}^{N} \gd_n \right) e^{H_{N,\hca}^{\a}}\right]
\leq \bE\left[ e^{ c_2 u \sum_{n=1}^{N} \gd_n } \ind_{\{N\in\tau\}} \right],
\end{equation} 
which gives more directly Theorem \ref{thm:expo_ann}.
%Proposition \ref{prop:Pncbis} could be the key to a future study of the disordered system via annealed bounds,
%as it was the case in Chapter \ref{chap:hier}, with Proposition~\ref{prop:Pnc}.

\section{Proof of the results on the disordered system}
\label{sec:proofcorrel}

\subsection{The case of summable correlations, proof of Theorem \ref{thm:smooth}}
\label{sec:smoothcorrel}

As we saw in Section \ref{sec:anncorrel}, the annealed model is well-defined under the Assumption \ref{hyphyp}, only with $a>1$, when correlations are absolutely summable.

The proof of Theorem \ref{thm:smooth}, is very similar to what is done in \cite{GT05} for the case of independent variable. The main idea is to stand at $h_c(\gb)$ ($h_c(\gb)\geq \hca(\gb)>-\infty$ since the correlations are summable), and to get a lower bound for $\tf(\gb,h_c(\gb))$ involving $\tf(\gb,h)$ by choosing a suitable localization strategy for the polymer to adopt, and  computing the contribution to the free energy of this strategy.  This is inspired by what is done in \cite[Ch. 6]{GBbook} to bound the critical point of the random copolymer model.  More precisely one gives a definition of a "good block'', supposed to be favorable to localization in that the $\omega_i$ are sufficiently positive, and analyses the contribution of the strategy of aiming only at the good blocks. The main difficulty is here to get good estimates on the probability of having a "good" block

Let us fix some $l\in\N$ (to be optimized later), take $n\in\bbN$ and let 
$\cI\subset \{1,\dots,n\}$, which is supposed to denote
the set of indexes corresponding to "good blocks'' of size $l$, and we order its elements: $\cI=\{i_p\}_{p\in\bbN}$ with $i_1<i_2<\cdots$.
We then divide a system of size $nl$ into $n$ blocks of size $l$,
and denote $Z_{l,h}^{\go,(k)}$ the (pinned) partition function on the
$k^{\rm th}$ block of size $l$, that is $Z_{l,h}^{\go,(k)}= Z_{l,h}^{\theta^{(k-1)l}\go,\gb}$
($\theta$ being the shift operator, \textit{\textit{i.e.}\ } $\theta^{p}\go:= (\go_{n+p})_{n\geq 0}$).

%the set of indexes of the blocks to be aimed in a system of size $2^n$.
For any fixed $\go$ and $n\in \bbN$, we denote $\cI_n=\cI\cap[0,n]$, so that targeting only the blocks in $\cI_n$ gives 
\begin{equation}
 Z_{nl,h}^{\go,\gb} \geq  \K((n-i_{|\cI_n|})l) \prod_{k=1}^{|\cI_n|} \K( (i_k-i_{k-1}-1) l)
   \prod_{k\in\cI_n} Z_{l,h}^{\go,\gb,(k)},
\end{equation}
with the convention that $\K(0):=1$. Then if $\gep>0$ is fixed (meant to be small), taking $l$ large enough so that $\log \K(kl) \geq -(1+\gep)(1+\ga)\log(kl)$ for all $k\geq 0$, one has
\begin{multline}
  \frac{1}{nl} \log Z_{nl,h}^{\go,\gb} \\
\geq  \frac{1}{nl} \sum_{k\in\cI_n} \log Z_{l,h}^{\go,(k)}
       - (1+\gep)(1+\ga)  \frac{1}{nl} \left(\log ((n-i_{|\cI_n|})l) + \sum_{k=1}^{\cI_n} \log ( (i_k-i_{k-1}-1) l) \right)\\
  \geq \frac{1}{n} \sum_{k\in\cI_n} \frac{1}{l}\log Z_{l,h}^{\go,(k)}
        - (1+\gep)(1+\ga) \frac{1}{l} \frac{|\cI_n|+1}{n}\log \left( \frac{n}{|\cI_n|+1} -1\right),
\label{eq:lowZhc}
\end{multline}
where we used Jensen inequality in the last inequality (which only means that the entropic cost of targeting the blocks of $\cI_n$ is maximal when all its elements are equally distant). Note that \eqref{eq:lowZhc} is very general, and it is useful to derive some results on the free energy, choosing the appropriate definition for an environment to be favorable
(and thus the blocks to be aimed), and the appropriate size of the blocks
(see Section \ref{sec:nonsummable} for another example of application).

We fix $\gb>0$, and set $u:=h-h_c(\gb)$. Then, fix $\gep>0$, and define the events
\begin{equation}
 \cA_l^{(k)} = \left\{ Z_{l,h_c(\gb)}^{\go,(k)} \geq \exp\left( (1-\gep)l\, \tf(\gb, h_c(\gb)+u) \right)\right\},
\end{equation}  
and
define $\cI_n$ the set of favorable blocks
\begin{equation}
\cI(\go):=\{  k\in\bbN\ :\ \cA_l^{(k)} \text{ is verified}\}.
\end{equation}

Then taking $l$ large enough so that \eqref{eq:lowZhc} is valid for the $\gep$ chosen above,
one has
\begin{equation}
\frac{1}{nl} \log Z_{nl,h}^{\go,\gb} \geq   \frac{|\cI_n|}{n} (1-\gep) \tf(\gb, h_c(\gb)+u)  -
    (1+\gep)(1+\ga) \frac{1}{l} \frac{|\cI_n|+1}{n}\log \left( \frac{n}{|\cI_n|+1} -1\right).
\end{equation}
We also note $p_l:=\bbP(\cA_l^{(1)})=\bbP(1\in\cI_n)$, so that one has that $\bbP$-a.s.
$\lim_{n\to\infty} \frac{1}{n}|\cI_n|=p_l$, thanks to
 Birkhoff's Ergodic Theorem
(cf.  \cite[Chap. 2]{Nadk}).
Then, letting $n$ go to infinity, one has
\begin{multline}
  0=\tf(\gb,h_c(\gb))\geq p_l (1-\gep)\tf(\gb, h_c(\gb)+u) -  (1+\gep)(1+\ga)p_l \frac{1}{l}\log ( p_l^{-1}-1)\\
   \geq p_l\left( (1-\gep)\tf(\gb, h_c(\gb)+u) +  (1+2\gep)(1+\ga)\frac{1}{l}\log ( p_l) \right),
\label{eq:up_smooth}
\end{multline} 
the second inequality coming from the fact that $p_l^{-1}$ is large for large $l$.
\smallskip

We now give a bound on $p_l$, with the same change of measure
technique used in the proof of Lemma \ref{lem:shiftgauss}.
We consider the measure $\bar\bbP$ on $\{\go_1,\ldots,\go_{l}\}$ which is absolutely continuous
with respect to $\bbP$, and consists in translating the $\go_i$'s of $u/\gb$, without changing the
correlation matrix $\gU$.
Then, using that
$l^{-1} \log Z_{l,h_c(\gb)}^{\go,\gb}$ converges to $\tf(\gb, h_c (\gb) + u)$ in $\bar\bbP$-probability as $l$ goes to infinity,
we have that $\bar \bbP (\cA_l^{(1)}) \geq 1-\gep$, for $l$ sufficiently large.
We recall the classic entropy inequality
\begin{equation}
\bbP(\cA)\geq \bar \bbP(\cA) \exp\left( -\frac{1}{\bar\bbP(\cA)} ({\rm H}(\bar\bbP|\bbP)+e^{-1}) \right),
\label{eqentropy}
\end{equation}
with $\rm{H}(\bar\bbP|\bbP)$ the relative entropy of $\bar \bbP$ w.r.t. $\bbP$. After some straightforward computation, one gets
${\rm H}(\bar\bbP|\bbP)= \frac{u^2}{2\gb^2}  \langle \gU^{-1}
\mathbf{1}_l, \mathbf{1}_l\rangle$, where $\mathbf{1}_l$ is the vector whose $l$ elements are all equal to $1$.

From Lemma \ref{lem:entropy2} (which needs $\gU$ to be invertible), one directly has that
${\rm H}(\bar\bbP|\bbP) 
   =  (1+o(1))\frac{u^2}{2\gU_{\infty}\gb^2}\, l $,
so that for $l$ large one gets that
\begin{equation}
\label{lowboundpl}
\frac{1}{l}\log p_l \geq -(1+\gep)\frac{1}{l} (1-\gep)^{-1}{\rm H}(\bar\bbP|\bbP)
 \geq -\frac{1+2\gep}{1-\gep}\frac{u^2}{2\gU_{\infty}\gb^2}.
\end{equation} 
This inequality, combined with \eqref{eq:up_smooth}, gives
\begin{equation}
 \tf(\gb, h_c(\gb)+u) \leq  -\frac{1+2\gep}{1-\gep}(1+\ga)\frac{1}{l}\log  p_l \leq
\left(\frac{1+2\gep}{1-\gep}\right)^2 \frac{1+\ga}{2\gU_{\infty}\gb^2} u^2,
\end{equation} 
which, thanks to the arbitrariness of $\gep$, concludes the proof.
\qed

\subsection{The case of non-summable correlations, proof of Theorem \ref{thm:xi<1}}
\label{sec:nonsummable}

This theorem is the non-hierarchical analogue of \cite[Theorem 3.8]{BThier}. But because there are some technical differences, we include the proof here for the sake of completeness.

\begin{proof}
  The idea is to lower bound the partition function by exhibiting a
  suitable localization strategy for the polymer, that consists in aiming
  at "good'' blocks, \textit{i.e.}\  blocks where $\go_i$ is very large.
  We then compute the contribution to the free energy of this strategy,
  in the spirit of \eqref{eq:lowZhc}.
  For $ a <1$ (non-summable correlations), it is a lot
  easier to find such large block (see Lemma \ref{lem:goodblock} to be
  compared with the independent case).  In this sense, the behavior of
  the system is qualitatively different from the $ a > 1$ case.

Clearly, it is sufficient to prove the claim for $h$ negative and large enough in absolute value. Let us fix $h$ negative with $|h|$ large and take $l=l(h)\in\bbN$, to be chosen later. Recall \eqref{eq:lowZhc}, and define
\begin{equation}
 \cA_l^{(k)}:=\left\{ \text{for all }  i\in [(k-1)l,kl]\cap\bbN, \text{ one has } \gb\go_i+h \geq |h|  \right\},
\end{equation}
and as in Section \ref{sec:smoothcorrel} the set of favorable blocks
$\cI_n$, and $p_l:=\bbP(\cA_l^{(1)})=\bbP(1\in\cI_n)$.

One notices that $ Z_{l,h}^{\go,(k)} \geq  Z_{l,|h|}^{\rm pur}$
for all $k\in\cI_n$, so that provided that $l$ is large enough, one has
$l^{-1}\log Z_{l,|h|}^{\p,\rm pure}\geq \frac12 \tf(|h|)$. Therefore, from \eqref{eq:lowZhc}, if $l$ is large enough so that the above inequality is valid, and letting $n$ goes to infinity, we get
$\bbP$-a.s.
\begin{equation}
 \tf(\gb,h)\geq \frac{p_l}{2} \tf (|h|) -C p_l \frac{1}{l}\log( p_l^{-1}-1)
   \geq p_l\left( c|h|+c' \frac{1}{l}\log p_l \right),
\label{eq:firstlowboundF}
\end{equation}
where we used that $\bbP$-a.s. $\lim_{n\to\infty} \frac{1}{n}|\cI_n|=p_l$, because of Birkhoff's Ergodic Theorem (cf.  \cite[Chap. 2]{Nadk}). The second inequality comes from the fact that, for $|h|\geq 1$, one has $\tf(|h|)\geq cst.\, |h|$, and that $p_l^{-1}$ is large if $l$ is large.

\smallskip
It then remains to estimate the probability $p_l$.
\begin{lemma}
\label{lemmamax}
Under Assumption \ref{hyphyp} with $a<1$, if correlations are non-negative, there exist two constants $c,C>0$ such that for every $l\in\N$ and $A\geq C(\log l)^{1/2}$ one has
\begin{equation}
 \bbP\left( \forall i\in\{1,\ldots,l\},\ \go_i\geq A \right)\geq c^{-1}\, \exp\left( -c A^2 l^{ a } \right).
\end{equation}
\label{lem:goodblock}
\end{lemma}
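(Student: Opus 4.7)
The plan is an entropy/tilting argument, closely parallel to the shift change-of-measure used in the proof of Theorem~\ref{thm:smooth}, but tuned for the non-summable regime. Let $\bar\bbP$ denote the Gaussian law on $(\go_1,\dots,\go_l)$ with the same covariance matrix $\gU_l$ as $\bbP$ but with mean $m\,\mathbf{1}_l$, where $m := A + K\sqrt{\log l}$ for a constant $K$ to be chosen. Under $\bar\bbP$, the vector $(\go_i-m)_{i\le l}$ has the centered law of $\bbP$, so
\begin{equation*}
\bar\bbP(\cA_l^{(1)}) \;=\; \bar\bbP\big(\forall i\le l,\ \go_i\ge A\big) \;=\; \bbP\big(\min_{1\le i\le l}\go_i \;\ge\; -K\sqrt{\log l}\big).
\end{equation*}
A standard Gaussian maximal bound for the unit-variance stationary sequence $\go$ gives $\bbE[\max_i|\go_i|]\le c\sqrt{\log l}$ and, via Markov's inequality, $\bar\bbP(\cA_l^{(1)})\ge 1/2$ provided $K$ is a large enough absolute constant.

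The relative entropy is the classical Cameron--Martin quantity for a Gaussian mean-shift: ${\rm H}(\bar\bbP|\bbP) = \tfrac{m^2}{2}\langle \gU_l^{-1}\mathbf{1}_l,\mathbf{1}_l\rangle$. Granted the key estimate
\begin{equation}\label{eq:keyinvbnd}
\langle \gU_l^{-1}\mathbf{1}_l,\mathbf{1}_l\rangle \;\le\; c\, l^{a},
\end{equation}
and using $m\le 2A$ (thanks to the hypothesis $A\ge C\sqrt{\log l}$ with $C\ge K$), one obtains ${\rm H}(\bar\bbP|\bbP)\le c'\,A^2 l^a$. The entropy inequality \eqref{eqentropy} together with $\bar\bbP(\cA_l^{(1)})\ge 1/2$ then yields
\begin{equation*}
\bbP(\cA_l^{(1)}) \;\ge\; \tfrac{1}{2}\exp\!\big(-2\{{\rm H}(\bar\bbP|\bbP)+e^{-1}\}\big) \;\ge\; c^{-1}\exp\!\big(-c\,A^2 l^{a}\big),
\end{equation*}
which is exactly the announced bound.

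The heart of the matter, and what I expect to be the main obstacle, is the Toeplitz estimate \eqref{eq:keyinvbnd}. The heuristic is clear: under Assumption~\ref{hyphyp} with $a<1$ we have $\langle\gU_l\mathbf{1}_l,\mathbf{1}_l\rangle=\sum_{i,j\le l}\rho_{|i-j|}\sim c\,l^{2-a}$, and the variational characterization $\langle\gU_l^{-1}\mathbf{1}_l,\mathbf{1}_l\rangle=\sup_{v\ne 0}\langle v,\mathbf{1}_l\rangle^2/\langle \gU_l v,v\rangle$ applied with $v=\mathbf{1}_l$ gives \emph{immediately} the matching lower bound $\langle\gU_l^{-1}\mathbf{1}_l,\mathbf{1}_l\rangle\ge l^2/\langle\gU_l\mathbf{1}_l,\mathbf{1}_l\rangle\sim c^{-1}l^a$. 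Obtaining the \emph{upper} bound \eqref{eq:keyinvbnd} is more subtle, since $\gU_l$ has eigenvalues that decay as $l\to\infty$ (the Fourier symbol $f(\lambda)\sim c\,|\lambda|^{a-1}$ is singular at $0$ precisely because correlations are non-summable). My plan is to produce an explicit candidate $x\in\bbR^l$ solving $\gU_l x=\mathbf{1}_l$ approximately with $\langle x,\mathbf{1}_l\rangle=O(l^{a})$: the leading ansatz $x_i\equiv c^{-1}l^{a-1}$ gives $(\gU_l x)_i\approx 1$ in the bulk, and a boundary correction (larger $x_i$ near $i=1,l$) absorbs the edge deficit. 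The non-negativity hypothesis on $(\rho_k)$ enters exactly here, since coordinatewise $\gU_l\mathbf{1}_l\ge 0$ allows a monotone-comparison step that upgrades the approximate solution into a rigorous upper bound on $\langle\gU_l^{-1}\mathbf{1}_l,\mathbf{1}_l\rangle$. This Toeplitz/Gaussian computation is what I would defer to the appendix (it is the non-summable analogue of Lemma~\ref{lem:entropy1}), and once it is in hand the rest of the argument reduces to the short calculation outlined above.
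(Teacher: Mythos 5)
Your overall strategy is the same as the paper's: tilt the Gaussian law so that the event $\cA_l^{(1)}$ becomes likely, compute the relative entropy of the tilted measure, and apply the entropy inequality \eqref{eqentropy}. The step $\bar\bbP(\cA_l^{(1)})\ge 1/2$ via a Gaussian maximal bound is also correct and parallels the paper's use of Slepian's lemma. The difference, and where the gap lies, is in the choice of shift direction. You translate by $m\,\mathbf{1}_l$, which reduces the whole argument to the Toeplitz estimate $\langle\gU_l^{-1}\mathbf{1}_l,\mathbf{1}_l\rangle\le c\,l^a$; the paper instead translates by $B\cdot U$, where $U$ is the Perron--Frobenius eigenvector of $\gU_l$ normalized so that $\mathbf{1}_l\le U\le c\,\mathbf{1}_l$ (this is the content of Lemma~\ref{lem:entropy1}). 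Because $U$ is an eigenvector, $\langle\gU_l^{-1}U,U\rangle=\mu^{-1}\|U\|^2$ is immediate once the paper has shown $\mu\asymp l^{1-a}$ and $\|U\|^2\asymp l$; no Toeplitz inversion is required. Indeed, the paper explicitly warns after Lemma~\ref{lem:entropy1} that ``it is difficult to get directly an estimate on $\langle \gU_l^{-1}\ind_l,\ind_l\rangle$,'' which is precisely the quantity your proposal hinges on.

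Your sketch of the missing estimate does not, as written, close the gap. The claimed ``monotone-comparison step'' would require $\gU_l^{-1}$ to preserve entrywise inequalities, i.e.\ to have non-negative entries, but inverses of positive-definite Toeplitz matrices with slowly decaying positive symbol generically have entries of both signs, so from $\gU_l x\ge\mathbf{1}_l$ (componentwise) you cannot conclude anything about the sign of $x-\gU_l^{-1}\mathbf{1}_l$ componentwise, nor bound $\langle\gU_l^{-1}\mathbf{1}_l,\mathbf{1}_l\rangle$ by $\langle x,\mathbf{1}_l\rangle$. Decomposing $\mathbf{1}_l$ along the top eigenvector $U$ does not help either: the residual $w=\mathbf{1}_l-\frac{\langle\mathbf{1}_l,U\rangle}{\|U\|^2}U$ has $\|w\|^2\asymp l$ (since $U_i$ varies on the scale of constants by the Lipschitz estimate in the proof of Lemma~\ref{lem:entropy1}), and controlling $\langle\gU_l^{-1}w,w\rangle$ by $l^a$ then requires spectral information well beyond $\lambda_{\min}(\gU_l)=O(1)$. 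The estimate $\langle\gU_l^{-1}\mathbf{1}_l,\mathbf{1}_l\rangle\asymp l^a$ is in fact true --- it is equivalent to the asymptotic efficiency of the sample mean for long-range dependent stationary sequences (Adenstedt 1974, Samarov--Taqqu 1988) --- but its proof is genuinely nontrivial and is exactly what the paper's eigenvector trick is designed to avoid. To complete your argument you would either have to import that asymptotic-efficiency result, or, as the paper does, change the tilt direction to $U$ and invoke Lemma~\ref{lem:entropy1}, which additionally requires the bound $\bar\bbE\go_i=BU_i\ge B$ (so the event $\cA$ stays likely under $\bar\bbP$; this is why the normalization $\min_i U_i=1$ matters).
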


From this Lemma, that we prove in Appendix \ref{app} (Lemma \ref{lem:shiftgauss}), and choosing $l$ such that $\sqrt{\log l} \leq 2|h|/(C\gb)$, one gets that
\begin{equation}
  p_l= \bbP\big( \forall i\in\{1,\ldots,l\},\ \go_i\geq 2|h|/\gb \big)
    \geq c^{-1} \, \exp\left( -c l^{ a } h^2/\gb^2 \right).
\end{equation}
Then in view of \eqref{eq:firstlowboundF} one chooses $l= (\bar C |h|/\gb^2)^{1/(1- a )} $
(this is compatible with the condition $\sqrt{\log l}
\leq 2|h|/(C\gb)$ if $|h|$ is large enough) so that one gets
$c|h|+c'l^{-1}\log p_l \ge c|h|/2 \geq c/2$, provided that $\bar C$ is
large enough.  And \eqref{eq:firstlowboundF} finally gives with this
choice of $l$
\begin{equation}
 \tf(\gb,h)\geq cst.\, \exp\left( - c l^{ a } h^2/\gb^2  \right)
    \geq cst.\,  \exp\left( - c' |h| \left(|h|/\gb^2\right)^{1/(1- a )}  \right).
\end{equation} 
\end{proof}

{\bf Acknowledgement:} The author is very much indebted to Fabio Toninelli for his precious advice during the preparation of this paper, and would also like to thank the anonymous referee whose comments helped to improve the clarity and generality of the paper. This work was initiated during the author’s doctorate at the Physics Department of \'Ecole Normale  Sup\'erieure de Lyon, and its hospitality and support is gratefully acknowledged.

\begin{appendix}

\section{Estimates on correlated Gaussian sequences}
\label{app}

In this Appendix, we give some estimates on the probability for a long-range correlated Gaussian vector to
be componentwise larger
than some fixed value (see Lemma \ref{lem:shiftgauss}).
These estimates lies on the study of the relative
entropy of two translated correlated Gaussian vectors.
Let $\W=\{\W_n\}_{n\in\N}$ be a stationary Gaussian process, centered and with unitary variance, and with covariance matrix denoted by $\gU$. 
We write $(\rho_k)_{k\geq0}$ the correlation function, such that
$\gU_{ij}=\bbE[\W_i\W_j]=\rho_{|i-j|}$. Let $\gU_l$ denote the restricted correlation matrix, that is the correlation matrix of the Gaussian vector $\W^{(l)}:=(\W_1,\ldots,\W_l)$, which is symmetric positive definite.

We recall Assumption \ref{hyphyp}, that tells that the correlations are power-law decaying, {\sl i.e.}\ that $\rho_k\sim c_0 k^{- a }$ for some constants $c_0$ and $ a >0$.

%Note that we try to get the most general point of view possible, but we often assume that $\rho_k$ is power-law decaying, {\sl i.e.}\ that $\rho_k\sim ck^{- a }$ for some $ a >0$ (that clarify the statement of the results).Recall Assumptions \ref{hyp:sumcorrel} \ref{hyp:correl1} (summable/non summable correlations), that lead to two very different behaviors of the Gaussian sequence.

\subsection{Entropic cost of shifting a Gaussian vector.}

In Section \ref{sec:smoothcorrel}, and in Lemma \ref{lem:shiftgauss}, one has to estimate the entropic cost of shifting the Gaussian correlated vector $\W^{(l)}$ by some vector $V$, $V$ being chosen to be $\ind_l$, the vector of size $l$ constituted of only $1$, or $U$, the Perron-Frobenius eigenvector of $\gU$ (if the entries of $\gU$ are non-negative). It appears after a short computation that the relative entropy of the two translated Gaussian vector of is $\frac12\langle\gU^{-1} V ,V \rangle$. We therefore give the two following Lemmas that estimate this quantity, one regarding the case $a>1$ (summable correlations), the other one the case $a<1$ (non-summable correlations).

\begin{lemma}[Summable correlations]
 \label{lem:entropy2}
Under Assumption \ref{hyphyp} with $a>1$, and if $\gU$ is invertible, then one has
\begin{equation}
\langle \gU_l^{-1} \ind_{l} ,\ind_l \rangle \stackrel{l\to\infty}{=}
    (1+o(1)) (\gU_\infty)^{-1} l,
\end{equation} 
where $\gU_{\infty}:= 1+2 \sum_{k\in\bbN} \rho_k$ (and $\ind_l$ was defined above).
\end{lemma}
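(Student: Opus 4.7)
My plan is to show that the ``natural guess'' $y_l := \gU_\infty^{-1}\ind_l$ is asymptotically the correct solution of $\gU_l x = \ind_l$, up to an error which is $o(\sqrt{l})$ in $\ell^2$-norm, and then conclude by Cauchy--Schwarz. The intuition is that on a translation-invariant (infinite) model, a constant vector would solve the equation exactly, because each row sum of $\gU$ equals $\gU_\infty$; the only source of error is the boundary.

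More precisely, let $x_l := \gU_l^{-1}\ind_l$ (which is what we want to test against $\ind_l$) and let $y_l := \gU_\infty^{-1}\ind_l$. A direct computation using $\rho_0 = 1$ gives
\[
(\gU_l\ind_l)_i \;=\; \sum_{j=1}^l \rho_{|i-j|} \;=\; 1 + \sum_{k=1}^{i-1}\rho_k + \sum_{k=1}^{l-i}\rho_k \;=\; \gU_\infty - \Big(\sum_{k\ge i}\rho_k + \sum_{k\ge l-i+1}\rho_k\Big).
\]
Setting $e_l := \gU_l y_l - \ind_l = \gU_\infty^{-1}(\gU_l\ind_l - \gU_\infty\ind_l)$, we get
\[
(e_l)_i \;=\; -\gU_\infty^{-1}\Big(\sum_{k\ge i}\rho_k + \sum_{k\ge l-i+1}\rho_k\Big).
\]

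Under Assumption \ref{hyphyp} with $a>1$, the tail satisfies $\big|\sum_{k\ge i}\rho_k\big| \le C i^{-(a-1)}$, so
\[
\|e_l\|_2^2 \;\le\; C'\sum_{i=1}^l \Big(i^{-(a-1)} + (l-i+1)^{-(a-1)}\Big)^2 \;\le\; C''\sum_{i=1}^l i^{-2(a-1)}.
\]
This is $O(1)$ if $a>3/2$, $O(\log l)$ if $a=3/2$, and $O(l^{3-2a})$ if $1<a<3/2$; in every case $\|e_l\|_2 = o(\sqrt{l})$ as soon as $a>1$. Next, because $\gU$ is assumed invertible, the Szeg\H{o} symbol $f(\lambda)=1+2\sum_k\rho_k\cos(k\lambda)$ is bounded away from $0$, which ensures that $\|\gU_l^{-1}\|_{\ell^2\to\ell^2}$ is bounded uniformly in $l$ (this is precisely the role played by the invertibility hypothesis highlighted in Remark \ref{rem:gUinfty}). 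Therefore
\[
\|x_l - y_l\|_2 \;=\; \|\gU_l^{-1}(-e_l)\|_2 \;\le\; C\,\|e_l\|_2 \;=\; o(\sqrt{l}).
\]

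Finally, by Cauchy--Schwarz and since $\|\ind_l\|_2 = \sqrt{l}$,
\[
\big|\langle x_l,\ind_l\rangle - \langle y_l,\ind_l\rangle\big| \;\le\; \|x_l-y_l\|_2\,\sqrt{l} \;=\; o(l),
\]
while $\langle y_l,\ind_l\rangle = \gU_\infty^{-1}\,l$. This gives $\langle \gU_l^{-1}\ind_l,\ind_l\rangle = (1+o(1))\,\gU_\infty^{-1}\,l$, as claimed.

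The main obstacle is step three, the uniform bound on $\|\gU_l^{-1}\|_{\ell^2\to\ell^2}$: it is the sole place where the hypothesis that $\gU$ is invertible (and not merely bounded) is used. Without it one could imagine $\gU_l^{-1}$ blowing up near the boundary in a way that destroys the cancellation. The estimate $\|e_l\|_2 = o(\sqrt l)$ is robust (any $a>1$ suffices and there is even room to spare), so all of the real work has been offloaded into the Szeg\H{o}/Toeplitz theory already recalled in Remark \ref{rem:gUinfty}.
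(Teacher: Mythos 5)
Your proof is correct but takes a genuinely different route from the paper's. The paper's own argument constructs a banded \emph{circulant} approximant $\Lambda_l$ of $\gU_l$ (with band width $m=\lfloor\sqrt{l}\rfloor$), for which $\ind_l$ is an exact eigenvector with eigenvalue $\upsilon_l:=1+2\sum_{k=1}^m\rho_k\to\gU_\infty$; it then shows $\gU_l$ and $\Lambda_l$ are asymptotically equivalent (uniformly bounded operator norms, Hilbert--Schmidt norm of the difference tending to zero) and transfers the desired estimate from $\Lambda_l^{-1}$ to $\gU_l^{-1}$. Your argument dispenses with the circulant detour entirely: you directly exhibit the candidate solution $y_l=\gU_\infty^{-1}\ind_l$ of $\gU_l x=\ind_l$, compute the residual $e_l=\gU_l y_l-\ind_l$ explicitly from the row sums of $\gU_l$, and show $\|e_l\|_2=o(\sqrt{l})$ from the tail decay of the $\rho_k$. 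Both proofs rest on the same pivotal ingredient, the uniform $\ell^2\to\ell^2$ bound on $\gU_l^{-1}$ coming from the Szeg\H{o} symbol being bounded away from zero (which is where the invertibility hypothesis enters, exactly as flagged in Remark~\ref{rem:gUinfty}); your use of it is precisely to propagate the residual bound, $\|x_l-y_l\|_2\le\|\gU_l^{-1}\|_{\rm op}\|e_l\|_2$. Your route is a bit more elementary, avoids the machinery of asymptotic equivalence of matrix sequences, and as a bonus gives an explicit rate ($O(1)$, $O(\log l)$, or $O(l^{3-2a})$ depending on $a$), which the paper does not track; the paper's route is the more classical one in the Toeplitz literature and would extend more readily to other test vectors than $\ind_l$.
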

Note that this Lemma is actually valid under the weaker assumption that $\sum |\rho_k|<+\infty$, $\gU$ still having to be invertible.

In the case $a<1$, we actually need the extra assumption that correlations are non-negative. 
We then note $\mu$ the maximal (Perron-Frobenius) eigenvalue of $\gU_l$, so that thanks to the Perron-Frobenius theorem we can take $U$ an eigenvector associated to this eigenvalue with $U_i>0$ for all
$i\in\{1,\ldots,l\}$. Up to a multiplication, we can choose $U$ such that $\min_{i\in\{1,\ldots,l\}}U_i=1$.

\begin{lemma}[Non-summable correlations]
 \label{lem:entropy1}
Under Assumption \ref{hyphyp} with $a<1$, and if $\rho_k\geq 0$ for all $k\geq 0$, one has that $\ind_l\leq U \leq c\ind_l$, where the inequality is componentwise.
Moreover, there exists a constant $c_4>0$ such that for all $l\in \bbN$ one has $ c_4^{-1} l^{1- a }\leq \mu\leq c_4 l^{1- a }$,
and therefore
\begin{equation}
c_4^{-1} l^{ a }\leq \langle \gU_l^{-1} U ,U \rangle \leq c c_4^{-1} l^{ a }.
\end{equation} 
\end{lemma}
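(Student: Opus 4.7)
The plan is to leverage that $U$ is an eigenvector of $\gU_l$: from $\gU_l U = \mu U$ we get $\gU_l^{-1} U = \mu^{-1} U$, so $\langle \gU_l^{-1} U, U\rangle = \mu^{-1} \|U\|^2$. The proof therefore reduces to two sharp two-sided estimates: (i) the asserted eigenvalue bound $c_4^{-1} l^{1-a} \leq \mu \leq c_4 l^{1-a}$, and (ii) $l \leq \|U\|^2 \leq c^2 l$, which will follow from the componentwise bound $\ind_l \leq U \leq c\,\ind_l$ on the eigenvector.

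For (i), I would use the standard Perron--Frobenius bracketing by row sums, $\min_i R_i \leq \mu \leq \max_i R_i$ (valid since $\gU_l$ has non-negative entries by hypothesis), where
\[ R_i := \sum_{j=1}^l \rho_{|i-j|} = 1 + \sum_{k=1}^{i-1}\rho_k + \sum_{k=1}^{l-i}\rho_k. \]
Under Assumption \ref{hyphyp} with $a<1$ one has $\sum_{k=1}^n \rho_k \sim \frac{c_0}{1-a}\,n^{1-a}$, and since $\rho_k$ is decreasing the function $A(m):=\sum_{k=1}^m\rho_k$ is concave, so $A(i-1)+A(l-i)$ is minimised at $i\in\{1,l\}$ and maximised near $i=\lceil l/2\rceil$. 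Consequently $R_1 = R_l \sim \frac{c_0}{1-a}\,l^{1-a}$ (min) and $R_{\lceil l/2\rceil} \sim \frac{2^a c_0}{1-a}\,l^{1-a}$ (max). Both extremes are of order $l^{1-a}$ with ratio tending to $2^a\in(1,2)$, which immediately yields the two-sided bound on $\mu$.

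For (ii), the lower bound $\ind_l \leq U$ is immediate from the normalization $\min_i U_i = 1$, and already gives $\|U\|^2 \geq l$. The uniform upper bound $U \leq c\,\ind_l$ is the main obstacle. The persymmetry $\gU_{ij}=\gU_{l+1-i,l+1-j}$ forces $U_i = U_{l+1-i}$, so the maximum of $U$ lies near the centre. The elementary bounds $R_i \leq \mu U_i \leq U_{\max} R_i$ extracted from the eigenvalue equation give $U_i/U_{\max} \geq R_i/(\mu U_{\max})$ and $\mu\leq R_{\max}$, which control ratios through $R_{\min}/R_{\max}$ but do not directly bound $U_{\max}$. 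To obtain a constant independent of $l$ the plan is to pass to the scaling limit $i = \lceil lx\rceil$: after dividing the eigenvalue equation by $l^{1-a}$ and using $\rho_{\lceil lr\rceil} \sim c_0 l^{-a} r^{-a}$, the Riemann sum converges (uniformly on compacts of $(0,1)$) to the integral eigenvalue problem
\[ \nu\, g(x) = c_0 \int_0^1 |x-y|^{-a}\, g(y)\, dy, \qquad x\in[0,1], \]
whose kernel is integrable (since $a<1$); the associated compact positive operator admits, by a Krein--Rutman argument, a strictly positive continuous Perron eigenfunction $g$ on the closed interval $[0,1]$, which is therefore bounded away from $0$ and $\infty$. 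A uniform-in-$l$ equicontinuity argument then transfers the bound back to $U$.

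Combining (i) and (ii) gives $l \leq \|U\|^2 \leq c^2 l$, and therefore $c_4^{-1} l^a \leq \mu^{-1}\|U\|^2 \leq c^2 c_4\,l^a$, which after relabelling constants is the stated bound on $\langle \gU_l^{-1} U, U\rangle$. The principal difficulty is the uniform upper bound on $U$ in step (ii): the elementary row-sum inequalities only control the eigenvalue, not the eigenvector shape, and the Hilbert-metric (Birkhoff/Hopf) contraction is not uniform in $l$ because the projective diameter of $\gU_l$ grows with $l$ (since $\min_{ij}(\gU_l)_{ij} = \rho_{l-1}\to 0$). The continuum comparison above, or an ad hoc argument tailored to the long-range kernel, is what closes the argument.
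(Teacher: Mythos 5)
Your overall structure is the same as the paper's: pass through $\gU_l^{-1}U=\mu^{-1}U$, bracket $\mu$ by Perron--Frobenius row sums (which, using $\sum_{k\leq n}\rho_k\asymp n^{1-a}$, gives $\mu\asymp l^{1-a}$), get the lower bound $\ind_l\leq U$ for free from the normalization, and reduce everything to the componentwise upper bound $U\leq c\,\ind_l$. You correctly identify this last step as the crux, and correctly observe that Hilbert-metric contraction fails because $\rho_{l-1}\to 0$. But your proposed route to the bound --- pass to the integral operator $g\mapsto c_0\int_0^1|x-y|^{-a}g(y)\,dy$, invoke Krein--Rutman for a positive continuous Perron eigenfunction, and then ``transfer back'' by a uniform-in-$l$ equicontinuity argument --- has a genuine gap: the equicontinuity that would let you run Arzel\`a--Ascoli and identify the limit is exactly the nontrivial estimate you still owe, and nothing in your sketch produces it. Without a quantitative modulus of continuity for the rescaled $U$ that is uniform in $l$, the continuum comparison does not close the argument; you have, in effect, restated the difficulty rather than resolved it. (There are also secondary issues you would have to track: compactness of the kernel on $L^2[0,1]$ for $a\in[1/2,1)$ is not Hilbert--Schmidt and needs a separate argument; and one must make sure the limit eigenfunction is the right one, which requires simplicity.)

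The paper sidesteps the continuum detour entirely and proves the needed regularity directly at the discrete level. Writing the eigenvalue equation at two indices $i<j$ and subtracting, one gets
\begin{equation*}
\mu\,|U_i-U_j|\ \leq\ U_\infty\sum_{k}\bigl|\gU_{ik}-\gU_{jk}\bigr|\ \leq\ c\,U_\infty\,|i-j|^{1-a},
\end{equation*}
where the last step uses only the monotone power-law tail of $\rho$ to telescope the differences of correlations; combined with $\mu\geq c\,l^{1-a}$ this yields the H\"older-type bound
\begin{equation*}
|U_i-U_j|\ \leq\ c\,\Bigl(\tfrac{|i-j|}{l}\Bigr)^{1-a}U_\infty .
\end{equation*}
This is precisely the equicontinuity estimate your plan would need, but obtained by elementary means. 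The paper then finishes by a short localization argument: if $U_\infty$ were large, the H\"older bound forces $U_j\geq\tfrac12 U_\infty$ on a window of length $\delta l$ around the argmax, and plugging that window into the eigenvalue equation at a point where $U_{j_0}=1$ gives $\mu\geq \tfrac{\delta}{4}c\,U_\infty\,l^{1-a}$, contradicting $\mu\leq C\,l^{1-a}$ unless $U_\infty$ is bounded. I would encourage you to either supply such a discrete H\"older estimate on $U$ (which is what both your approach and the paper's ultimately hinge on) or replace the continuum sketch with this self-contained argument.
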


Note that here, it is difficult to get directly an estimate on $\langle \gU_l^{-1} \ind_l,\ind_l \rangle$. The case $a=1$ is left aside, but one would get the same type of result, with $l^{a}$ replaced by $l/\log l$.

\begin{proof}[Proof of Lemma \ref{lem:entropy2}]
 The proof is classical, since we deal with Toeplitz matrices, and we include it here briefly, for the sake of completeness. The idea is to approximate $\gU_l$ by the appropriate circulant matrix $\Lambda_l$
\begin{equation}
 \Lambda_l:= \left(
\begin{array}{ccccccccc}
 \rho_0 & \cdots & \rho_m &        &        &        & \rho_m & \cdots & \rho_1 \\
 \vdots &        &        & \ddots &        &        &        & \ddots & \vdots \\
 \rho_m &        &        &        &        &        &    0   &        & \rho_m \\
        & \ddots &        &        &        & \ddots &        &        &         \\
        &        & \rho_m & \cdots & \rho_0 & \cdots & \rho_m &        &        \\
        &        &        & \ddots &        &        &        & \ddots &        \\
 \rho_m &        &   0    &        &        &        &        &        & \rho_m \\
 \vdots & \ddots &        &        &        & \ddots &        &        & \vdots \\
 \rho_1 & \cdots & \rho_m &        &        &        & \rho_m & \cdots & \rho_0
\end{array}
\right) ,
 \quad \text{ with } m= \lfloor \sqrt{l} \rfloor.
\end{equation}
One has that $\gU_l$ and $\Lambda_l$ are asymptotically equivalent, in the sense that their respective operator norms are bounded, uniformly in $l$ (thanks to the summability of the correlations), and that the Hilbert-Schmidt norm $||\cdot||_{\rm HS}$ of the difference $\gU_l-\Lambda_l$ verifies
\begin{equation}
 ||\gU_l-\Lambda_l||_{\rm HS}^2 :=  \frac{1}{l} \sum_{i,j}^l (\gU_{ij}-\Lambda_{ij})^2
    \leq \frac{c}{l} \left( \sum_{i=1}^l \sum_{k\geq m} \rho_k^2 + \sum_{i=1}^m \sum_{k=1}^{m}\rho_k^2 \right)
\stackrel{l\to\infty}{\to}0.
\label{equivmatrix}
\end{equation} 
For the convergence, we used that $m\ll l$, and the summability of the correlations.
One notices that $\ind_l$ is an eigenvector of $\Lambda_l$, and that
$\Lambda_l \ind_l = \upsilon_l \ind_l$, where $\upsilon_l:= 1+2\sum_{k=1}^m \rho_k$, which converges
to $\gU_{\infty}$. Then we use the idea that, as the operator norms of $\gU_l^{-1}$ and of $\Lambda_l^{-1}$ are asymptotically bounded, $\gU_l^{-1}$ and $\Lambda_l^{-1}$ are also asymptotically equivalent. One has 
\begin{equation}
 |\langle (\gU_l^{-1}-\Lambda_l^{-1})\ind_l,\ind_l\rangle|
     = \upsilon_l^{-1} |\langle \gU_l^{-1}(\gU_l-\Lambda_l)\ind_l,\ind_l\rangle|
  \leq l\,\upsilon_l^{-1}\,  |||\gU_l^{-1}|||\, ||\gU_l-\Lambda_l||_{\rm HS}.
\end{equation}
Therefore $\langle \gU_l^{-1} \ind_{l} ,\ind_l \rangle
     = \langle \Lambda_l^{-1} \ind_{l} ,\ind_l \rangle +o(l)=(1+o(1))\upsilon_l^{-1} l$,
which concludes the proof since $b_l\stackrel{l\to\infty}{\to}\gU_{\infty}$.
\end{proof}

\begin{proof}[Proof of Lemma \ref{lem:entropy1}]

We remark that the idea of the proof of Lemma \ref{lem:entropy2} would also work if $ a >1/2$ (and without the assumption of non-negativity), because in that case $\sum \rho_k^{2}<\infty$, and \eqref{equivmatrix} would still be valid. It is however difficult to adapt this proof to the $ a \leq 1/2$ case, and that is why we develop the following technique, that gives estimates on the eigenvector associated to the largest eigenvalue of $\gU_l^{-1}$.

Let us consider the Perron-Frobenius eigenvector $U$
of $\gU_l$, with eigenvalue $\mu$, as defined above:
we have that $U_i>0$ for all $i\in\{1,\ldots,l\}$, and we choose $U$ such that $\min_{i\in\{1,\ldots,l\}}U_i=1$.
Let us stress that one has, in a classical way
\begin{equation}
\label{boundlambda}
 \begin{split}
\mu \geq \min_{i\in\{1,\ldots,l\}} \sum_{j=1}^{l} \gU_{ij} & \geq
    c l^{1- a }, \\
\mu \leq \max_{i\in\{1,\ldots,l\}} \sum_{j=1}^{l} \gU_{ij} & \leq
    C l^{1- a },
 \end{split}
\end{equation} 
where we used the assumption \eqref{hyphyp} on the form of the correlations, and that $ a <1$. Then one has $\langle \gU_l^{-1} U ,U \rangle = \mu^{-1} \langle  U ,U \rangle $, so that we are left to show that the Perron-Frobenius eigenvector $U$ is actually close to the vector $\ind_l$. One actually shows that $\ind_l \leq U\leq c \ind_l$ where the inequality is componentwise, so that $cl\leq\langle  U ,U \rangle\leq c'l$, and it concludes the proof thanks to \eqref{boundlambda}.

\smallskip

We now prove that $U_{\infty}:=\max_{i\in\{1,\dots,n\}}U_i \leq c$ (we already have ${\min_{i\in\{1,\ldots,l\}}U_i=1}$).
Let us show that for $i<j$
\begin{equation}
 \label{intermU}
|U_i-U_j|\leq c \frac{|j-i|^{1- a }}{n^{1- a }} U_{\infty}.
\end{equation} 
One writes the relation $(\gU_lU)_a=\mu U_a$ for $a=i,j$, and gets
\begin{multline}
\label{UiUj}
\mu |U_i-U_j| =\left| \sum_{k=1}^{l} (\gU_{ik}-\gU_{jk})U_k\right|\\
 \leq U_{\infty}   \sum_{k=1}^{l} (\gU_{ik}-\gU_{jk})\ind_{\{\gU_{ik}>\gU_{jk}\}}
     + U_{\infty}   \sum_{k=1}^{l} (\gU_{ik}-\gU_{jk})\ind_{\{\gU_{ik}>\gU_{jk}\}}.
\end{multline}
From Assumption \ref{hyphyp} on the form of the correlations, there is some constant $C>0$ such that,
if $|j-i|\geq C$, then
one has $\rho_p > \rho_{p+|i-j|}$ for all  $p\geq |j-i|$.
Then one can write, in the case $i-j\geq C$, that
\begin{multline}
\sum_{k=0}^{l} (\gU_{ik}-\gU_{jk})\ind_{\{\gU_{ik}>\gU_{jk}\}}
  \leq \sum_{p=j-i}^{i}(\rho_{p}-\rho_{p+j-i}) + 2\sum_{p=0}^{j-1} K_{p}
      +\sum_{p=j-i}^{2(j-i)} (\rho_{p}-\rho_{p-(j-i)}) \\
  \leq 2\sum_{p=0}^{2(j-i)} \rho_{p} \leq c |j-i|^{1- a }.
\end{multline}
The second term in \eqref{UiUj} is dealt with the same way by symmetry,
so that one finally has $\mu |U_i-U_j|\leq c U_{\infty} {|j-i|^{1- a }}$
for $|i-j|\geq C$. Inequality \eqref{intermU} follows for every $i,j\in\N$ by adjusting the constant.

\smallskip
Suppose that $U_{\infty}\geq 4$.  The relation \eqref{intermU} gives that the components of the vector $U$ cannot vary too much.
One chooses $i_0$ such that $U_{i_0}=U_{\infty}$, and from \eqref{intermU} one gets that for all $j\in \N$
\begin{equation}
 U_{\infty}-U_j\leq c \frac{|j-i_0|^{1- a }}{n^{1- a }} U_{\infty}.
\end{equation} 
There is therefore some $\gd>0$, such that having $|j-i_0|\leq \gd l$ implies that $U_j\geq \frac12 U_{\infty} (\geq 2)$. Then, take $j_0$ with $U_{j_0}=1$ so that from writing $(KU)_{j_0} = \mu U_{j_0}$ one gets
\begin{equation}
 \mu = \sum_{k=1}^{l} \gU_{j_0k}U_k
 \geq \sumtwo{k=1}{|k-k_0|\leq \gd l/2}^{l} \gU_{j_0k} \frac{U_{\infty}}{2}
  \geq \frac{U_{\infty}}{2} \frac{\gd}{2} c l^{1- a },
\end{equation} 
where we used in the last inequality that, from
Assumption \ref{hyphyp}, there exists a constant $c>0$ such that for all $k\in\{1,\ldots, l \}$ one has $\gU_{j_0k}\geq c l^{- a }$, since $|j_0-k|\leq l$. One then concludes that $U_{\infty}\leq cst.$ thanks to \eqref{boundlambda}.
\end{proof}

\subsection{Probability for a Gaussian vector to be componentwise large}

We prove the following Lemma
\begin{lemma}
Under Assumption \ref{hyphyp} with $a<1$, and if $\rho_k\geq 0$ for all $k\geq 0$, there exist two constants $c,C>0$ such that for every $l\in\N$, one has
\begin{equation}
 \bbP\left( \forall i\in\{1,\ldots,l\},\ \W_i\geq A \right)\geq
      c^{-1}\, \exp\left( -c (A \vee C\sqrt{\log l})^2  l^{ a } \right).
\end{equation}
\label{lem:shiftgauss}
\end{lemma}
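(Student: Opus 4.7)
The plan is to use a change-of-measure (entropy) argument, shifting the Gaussian vector $(\W_1,\ldots,\W_l)$ in the direction of the Perron-Frobenius eigenvector $U$ of $\gU_l$, whose entropic cost has already been controlled by Lemma \ref{lem:entropy1}. Since the event $\cA:=\{\forall i\le l,\ \W_i\ge A\}$ is decreasing in $A$, it suffices to prove the stated lower bound when $A\ge C\sqrt{\log l}$ for a large constant $C$ to be fixed below; the case of smaller $A$ then follows by monotonicity, applied with $A'=C\sqrt{\log l}$ (which makes the right-hand side of the Lemma only smaller).

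Assume therefore $A\ge C\sqrt{\log l}$. Let $U$ be the Perron-Frobenius eigenvector of $\gU_l$ with eigenvalue $\mu$, normalized by $\min_i U_i=1$, so that $\ind_l\le U\le c\,\ind_l$ componentwise by Lemma \ref{lem:entropy1}. Define $\bar\bbP$ to be the Gaussian measure on $(\W_1,\ldots,\W_l)$ with the same covariance $\gU_l$ as $\bbP$ but with mean $tU$, where $t:=2A$. Under $\bar\bbP$, each marginal $\W_i$ is $\mathcal{N}(tU_i,1)$ with $tU_i\ge 2A$, so a standard Gaussian tail bound together with the union bound give
\begin{equation*}
\bar\bbP(\exists i\le l,\ \W_i<A) \;\le\; \sum_{i=1}^{l}\bar\bbP(\W_i<A) \;\le\; l\,e^{-A^2/2} \;\le\; \tfrac12,
\end{equation*}
for $C$ chosen large enough (using $A^2\ge C^2\log l$). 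Hence $\bar\bbP(\cA)\ge 1/2$.

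Next I would apply the entropy inequality \eqref{eqentropy},
\begin{equation*}
\bbP(\cA) \;\ge\; \bar\bbP(\cA)\,\exp\!\left(-\frac{{\rm H}(\bar\bbP|\bbP)+e^{-1}}{\bar\bbP(\cA)}\right).
\end{equation*}
A direct Gaussian computation yields ${\rm H}(\bar\bbP|\bbP)=\tfrac12\langle \gU_l^{-1}(tU),tU\rangle=\tfrac{t^2}{2}\langle \gU_l^{-1}U,U\rangle$. Since $U$ is an eigenvector of $\gU_l$ with eigenvalue $\mu$, one has $\langle \gU_l^{-1}U,U\rangle=\mu^{-1}\|U\|^2$, and Lemma \ref{lem:entropy1} provides both $\mu\ge c_4^{-1}l^{1-a}$ and $\|U\|^2\le c^2\, l$ (thanks to $U\le c\,\ind_l$), so ${\rm H}(\bar\bbP|\bbP)\le c'\, A^2 l^{a}$. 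Plugging this into the entropy inequality with $\bar\bbP(\cA)\ge 1/2$ gives $\bbP(\cA)\ge c^{-1}\exp(-c\, A^2 l^{a})$, which is the desired bound.

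The essential point, already handled in Lemma \ref{lem:entropy1}, is the choice of tilt direction: a naive tilt by $t\,\ind_l$ would cost $\tfrac{t^2}{2}\langle \gU_l^{-1}\ind_l,\ind_l\rangle$, which for non-summable correlations is not of the right order; tilting instead by $tU$, where $U$ is comparable to $\ind_l$ but aligned with the top eigendirection of $\gU_l$, exploits the largeness of $\mu\sim l^{1-a}$ to bring the entropic cost down to order $A^2 l^{a}$. This is the only place where the structure of $\gU_l$ (in particular, non-negativity of $\rho_k$, needed for Perron-Frobenius) enters in an essential way, and it is precisely what produces the exponent $l^{a}$ in the statement.
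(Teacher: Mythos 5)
Your proof is correct and follows the same overall strategy as the paper: tilt the Gaussian vector in the direction of the Perron--Frobenius eigenvector $U$ of $\gU_l$, control the relative entropy via Lemma~\ref{lem:entropy1}, and apply the entropy inequality~\eqref{eqentropy}. The one genuine difference is in how you lower-bound $\bar\bbP(\cA)$: the paper first reduces to bounding $\bbP\bigl(\max_{i\leq l}\W_i\leq B-A\bigr)$, then invokes Slepian's lemma (comparing with the IID case) together with a Markov-type estimate, whereas you simply observe that under $\bar\bbP$ each $\W_i$ is $\mathcal{N}(tU_i,1)$ with mean at least $2A$ and apply a one-dimensional Gaussian tail bound plus a union bound. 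Your route is more elementary and avoids Slepian entirely; it does not even use the non-negativity of the correlations at this step (that hypothesis is still needed, of course, to have the Perron--Frobenius eigenvector and Lemma~\ref{lem:entropy1}). Your upfront reduction to the regime $A\geq C\sqrt{\log l}$ by monotonicity is also a clean organizational simplification that the paper handles implicitly through the definition $B=2(A\vee C\sqrt{\log l})$. One tiny point worth a remark: the union-bound estimate $l\,e^{-A^2/2}\leq\tfrac12$ degenerates for $l=1$ (where $\log l=0$), but there $\bar\bbP(\W_1\geq A)\geq\tfrac12$ holds directly since the shifted mean is $\geq 2A$, so nothing is lost.
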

This Lemma, taking $A\geq C \sqrt{\log l}$, gives directly Lemma \ref{lemmamax}. Setting $A=0$, one would also have an interesting statement, that is that, when $ a <1$, the probability that the Gaussian vector is componentwise non-negative does not decay exponentially fast
in the size of the vector, but stretched-exponentially.

\begin{proof}
First of all, note $\cA:=\{\forall i\in\{1,\ldots,l\},\ \W_i\geq A\}$.
Set $\bar{\bbP}$ the law $\bbP$ on $\{\W_1,\ldots,\W_l\}$,
where the $\W_i$'s have been translated by $B\times U$, where $B :=2 (A \vee C\sqrt{\log l})$ (the constant $C$ is chosen later), and $U$ is the Perron-Fr\"obenius vector of $\gU_l$, introduced in Lemma \ref{lem:entropy2}.
Under $\bar \bbP$, $\{\W_i\}_{i\in\{1,\ldots,l\}}$ is a Gaussian vector
of covariance matrix $\gU_l$, and such that
$\bar \bbE \W_i = B U_i \geq B$ for all $1\leq i\leq l$.
%, and $\bar \bbE \W_k = 0 $ otherwise.
Then one uses 
the classical entropic inequality
\begin{equation}
\bbP(\cA)\geq \bar \bbP(\cA) \exp\left( -\bar\bbP(\cA)^{-1} (\mathbf{H}(\bar\bbP|\bbP)+e^{-1}) \right),
\label{eq:entropy}
\end{equation}
where $\mathbf{H}(\bar\bbP|\bbP):=
\bbE\left[ \frac{\dd \bar\bbP}{\dd \bbP} \log \frac{\dd \bar \bbP}{\dd \bbP}\right] $ denotes the relative entropy of $\tilde \bbP$ with respect to $\bbP$.

\smallskip
Note that $\bar\bbP(\cA)\geq \bbP\left( \min\limits_{i=1,\ldots,l} \W_i \geq A-B\right)
     = \bbP\left(  \max\limits_{i=1,\ldots,l} \W_i \leq B-A \right)$,
and that $B-A\geq C\sqrt{\log l}$.
One uses Slepian's Lemma that tells that if $\{\hat{\W}_i\}_{i\in\{1,\ldots,l\}}$ is a vector of IID standard Gaussian variables (whose law is denoted $\hat{\bbP}$), then one has
\begin{equation}
 \bbE\left[ \max_{i=1,\ldots,l} \W_i \right] \leq \hat{\bbE}\left[ \max_{i=1,\ldots,l} \hat{\W}_i \right] \leq c\sqrt{\log l}, 
\end{equation}
where the second inequality is classical.  Thus one gets
\begin{equation}
 \bbP\left(  \max_{i=1,\ldots,l} \W_i \geq 2c\sqrt{\log l} \right) \leq
     \frac{1}{2c\sqrt{\log l}} \bbE\left[ \max_{i=1,\ldots,l} \W_i \right] \leq 1/2.
\end{equation} 
In the end, one chooses the constant $C$ such that
$ \bbP\left( \max_{i=1,\ldots,l} \W_i \leq 
  C\sqrt{\log l}  \right) \geq 1/2$ and one finally gets that
$\bar\bbP(\cA)\geq 1/2$.

\smallskip
One is then left with estimating the relative entropy
$\rm{H}(\bar\bbP|\bbP)$ in \eqref{eq:entropy}.  A straightforward
Gaussian computation gives
that $
\mathbf{H}(\bar\bbP|\bbP)= B^2 \langle \gU_l^{-1}
U, U\rangle.$
In the case $a<1$, Lemma \ref{lem:entropy2} gives  that
$\mathbf{H}(\bar\bbP|\bbP)\le  c B^2 l^{- a}$, which combined with \eqref{eq:entropy} gives the right bound.
\end{proof}

\end{appendix}

\bibliographystyle{plain}
\bibliography{bibliothese}

\end{document}